\newcommand{\mf}{\mathbf}
\newcommand{\be}{\begin{eqnarray*}}
\newcommand{\ee}{\end{eqnarray*}}
\newcommand{\ben}{\begin{eqnarray}}
\newcommand{\een}{\end{eqnarray}}
\newcommand{\Snst}{\mathbb{S}_{\text{nest}}}
\newcommand{\Sula}{\mathbb{S}_{\text{ula}}}
\newcommand{\blds}{\boldsymbol}
\theoremstyle{definition}
\newtheorem{definition}{Definition}[section]
\newcommand{\E}{{\mathbb E}}
 \newcommand{\Ryov}{\mf{R}_{\text{av}}}
\newcommand{\Ryhat}{\mf{\widehat{R}}_{\mf{y}}}
\newcommand{\Ry}{\mf{R}_{\mf{y}}}
\newcommand{\Ey}{\mf{E}_{\mf{y}}}
\newcommand{\EL}{\mf{E}_{L}}
\newcommand{\Lamth}{\mf{\Lambda}(\theta)}
\newtheorem{lem}{Lemma}
\newtheorem{thm}{Theorem}
\Crefname{thm}{Theorem}{Theorems}
\crefname{thm}{Theorem}{Theorems}
\newtheorem{cor}{Corollary}
\newtheorem{prop}{Proposition}
\newcommand*{\rom}[1]{\expandafter\@slowromancap\romannumeral #1@}
\title{Super-resolution with Sparse Arrays: A Non-\\ Asymptotic Analysis of Spatio-temporal Trade-offs}
\author{Pulak Sarangi, Mehmet Can H\"uc\"umeno\u{g}lu, Robin Rajam\"{a}ki, and Piya Pal}
\begin{document}

\maketitle
\begin{abstract}
Sparse arrays have emerged as a popular alternative to the conventional uniform linear array (ULA) due to the enhanced degrees of freedom (DOF) and superior resolution offered by them. In the passive setting, these advantages are realized by leveraging correlation between the received signals at different sensors. This has led to the belief that sparse arrays require a large number of temporal measurements to reliably estimate parameters of interest from these correlations, and therefore they may not be preferred in the sample-starved regime. In this paper, we debunk this myth by performing a rigorous non-asymptotic analysis of the Coarray ESPRIT algorithm. This seemingly counter-intuitive result is a consequence of the scaling of the singular value of the coarray manifold, which compensates for the potentially large covariance estimation error in the limited snapshot regime. Specifically, we show that for a nested array operating in the regime of fewer sources than sensors ($S=O(1)$), it is possible to bound the matching distance error between the estimated and true directions of arrival (DOAs) by an arbitrarily small quantity ($\epsilon$) with high probability, provided (i) the number of temporal snapshots ($L$) scales only logarithmically with the number of sensors ($P$), i.e. $L=\Omega(\ln(P)/\epsilon^2)$, and (ii) a suitable separation condition is satisfied. Our results also formally prove the well-known empirical resolution benefits of sparse arrays, by establishing that the minimum separation between sources can be $\Omega(1/P^2)$, as opposed to separation $\Omega(1/P)$ required by a ULA with the same number of sensors. In addition to the array geometry, our sample complexity expression reveals the dependence on other key model parameters such as Signal to Noise Ratio (SNR) and the dynamic range of the source powers. This enables us to establish the superior noise-resilience of nested arrays both theoretically and empirically. \footnote{This work was supported by Grants ONR N00014-19-1-2256, ONR N00014-19-1-2227, NSF 2124929, and NSF CAREER ECCS 1700506.}
\end{abstract}
\vspace{-0.2cm}
\begin{IEEEkeywords}
Sparse Arrays, Nested Sampling, Super-resolution, Toeplitz Covariance Matrix, Non-Asymptotic
Guarantees.
\end{IEEEkeywords}
\vspace{-0.5cm}
\section{Introduction}
The problem of source localization arises in different contexts ranging from target detection in sonar and radar, hybrid mmWave channel estimation, and DOA estimation in array signal processing \cite{pal2010nested,vaidyanathan2011sparsesamplers,haghighatshoar2018low}. Traditionally, these applications consider ULAs, which are known to resolve up to $S=O(P)$ sources with $P$ sensors. However, deterministic sparse array geometries, such as nested and coprime arrays \cite{pal2010nested,vaidyanathan2011sparsesamplers}, have recently gained significant attention primarily due to two attractive properties. Firstly, sparse arrays are able to identify up to $S=O(P^2)$ uncorrelated sources using only $P$ sensors. Secondly, sparse arrays enjoy a performance gain showcased by lower Cram$\acute{\text{e}}$r-Rao bound and higher angular resolution \cite{abramovich1998positive,wang2017,koochakzadeh2016cramerrao,liu2017cramerrao,shahsavari2021fundamental}. Both of these properties can be attributed to the enhanced spatial DOF enabled by the so-called difference coarray, which {can be as large as $\Theta(P^2)$.}

The enhanced DOF of the coarray are realized by computing temporal correlations between the spatial measurements and constructing an augmented covariance matrix called the ``coarray covariance matrix", whose size is determined by the size of the difference coarray. Following the construction of the coarray covariance matrix, it is possible to fully harness the power of the difference coarray and identify the unknown source directions using classical subspace techniques, such as MUSIC, ESPRIT or the matrix pencil method \cite{schmidt1986multiple,roy1989esprit,hua1990matrix}. Despite the success of coarray-based algorithms, a common belief is that they require a large number of temporal snapshots to fully utilize the number of DOFs provided by the coarray. The root of this belief mainly lies in the inadequacy of existing performance analyses, which are primarily based on characterizing the asymptotic Mean Squared Error (MSE) of the Coarray MUSIC \cite{wang2017} and Coarray ESPRIT algorithms \cite{steinwandt2017performance}. In particular, such asymptotic results primarily rely on the first-order perturbation analysis framework proposed in \cite{Vaccaro93}, which leaves two key questions unanswered regarding the performance of coarray algorithms. Firstly, the perturbation framework fails to theoretically explain the improvement in resolution offered by sparse arrays over the ULA---a phenomenon that has been extensively observed in numerical experiments {\cite{wang2017,sun20214d}}. Secondly, the analysis does not adequately reveal the dependence of temporal snapshots on key model parameters such as the array geometry, number of sensors, SNR and dynamic range of the source powers. 

The aforementioned shortcomings are partially addressed in \cite{NehoraiEldar}, which adapts recent advances in the theory of super-resolution \cite{candes2013super,candes2014towards} to the coarray setting. The analysis, which is based on Total-Variational norm minimization, is indeed non-asymptotic. However, it is possible to show that the snapshot requirement in this setting scales quadratically (rather than linearly) with the number of sensors $P$, which is undesirable. In a parallel line of work using a grid-based model, we recently showed that $\Omega(P^2)$ snapshots are sufficient for ensuring exact support recovery with high probability even for closely-spaced sources, where the smallest source separation scales as $\Omega(1/P^2)$ \cite{qiao2019guaranteed}. Although the analysis is applicable for scenarios where $S>P$ (more sources than sensors), the sample complexity $\Omega(P^2)$ is still conservative {when $S\leq P$}. 
{In \cite{zhou2018direction}, an atomic norm formulation is adopted to exploit the Toeplitz structure of the coarray covariance matrix. The analysis provides a characterization of the covariance matrix estimation error, but not of the sample complexity required to achieve a desired DOA estimation error, which is often the main quantity of interest.} 
Indeed, common folklore suggests that the benefits of sparse arrays necessarily come at the cost of a large number of snapshots, since the coarray covariance matrix, which typically needs to be estimated, is of size $\Theta(P^2)$. Hence, one might be tempted to \emph{falsely} conclude that sparse arrays are at a disadvantage compared to ULAs. In this paper, our goal is to dispel this belief by providing new non-asymptotic results on the performance of Coarray ESPRIT with a focus on nested arrays {in the regime $S\leq P$.} Our analysis is motivated by contemporary applications such as autonomous sensing and mmWave channel estimation \cite{sun20214d,haghighatshoar2018low}, where identifying more sources than sensors may not be necessary, and the number snapshots may be restricted either due to coherent multipaths or a rapidly varying environment.

While subspace-based algorithms have been around for several decades and actively used in practice, performance guarantees characterizing their precise resolution limit were obtained only recently \cite{WeilinEsprit,li2021stable,liao2016music,liao2015music,moitra2015super}. This analysis has also been extended to multi-snapshot setting in \cite{WeilinMulti}. The key factor enabling these guarantees is the characterization of the smallest singular value of Vandermonde matrices \cite{moitra2015super}. However, all the aforementioned results are only applicable to the ULA. Furthermore, \emph{no statistical assumptions} are made on the source signals, and hence, the coarray perspective is missing. The key difference between deterministic and random sources is that in the latter case, the perturbation to the subspace of interest is a consequence of both noise as well as finite-snapshot covariance estimation error. Therefore, extending the analysis in \cite{WeilinEsprit,WeilinMulti} to the stochastic case requires non-trivial modifications.

\textbf{Contributions:}
Our first main contribution is to probabilistically characterize the coarray covariance matrix estimation error due to finite snapshots. Our second main contribution is a non-asymptotic performance analysis for the Coarray ESPRIT algorithm in terms of the matching distance error metric. Specifically, we characterize the number of temporal snapshots (sample complexity) required to bound the matching distance error by a specified parameter. To the best of our knowledge, our sample complexity expression (in terms of snapshots) is the first to explicitly bring out the dependence on key model parameters such as the array geometry, SNR and dynamic range of the source powers. Furthermore, we establish that it is possible to bound the matching distance error with an arbitrarily small quantity for both the nested array and ULA, using the (order-wise) same number of snapshots $L = \Omega(\ln P)$. However, a nested array can achieve this in a much smaller separation regime $\Delta_{\min}=\Omega(1/P^2)$ compared to the ULA, for which $\Delta_{\min}=\Omega(1/P)$. Our analysis dispels the widely-held belief that sparse arrays require significantly more snapshots compared to ULAs when the number of sources is less than the number of sensors, and at the same time establishes the superior resolution capabilities of nested arrays. In addition to advancing the theoretical understanding, {this analysis could also serve as a guiding principle for practitioners to determine suitable operating conditions.}
\textit{Notations:} 
Symbol $\odot$ represents the Khatri-Rao (columnwise Kronecker) product, whereas $\Vert \cdot \Vert_2$ and $\Vert \cdot \Vert_F$ denote the spectral and Frobenius norm of a matrix. Moreover, $\sigma_i(\mathbf{A})$ is the $i$-th largest singular value of $\mathbf{A}$. For a set real numbers $\{p_1,p_2, \hdots,p_K \}$, $p_{\min}$ and $p_{\max}$ denote the minimum and maximum numbers in the set, respectively. The symbol $\mathbb{T}:= [0,1)$ denotes the torus. For a sub-Gaussian random variable $X$, $\Vert X \Vert_{\psi_2}$ denotes its sub-Gaussian norm defined as $\Vert X \Vert_{\psi_2}:=\inf \{ t>0 \ \vert \ \mathbb{E} [\exp{X^2/t^2}] \leq 2\}$.

\section{Background on Sparse Arrays}\label{sec:background}
Consider a sparse linear array (SLA) with $P$ sensors located at $\{d_p\lambda/2\}_{p=1}^{P}$, where $\lambda$ is the wavelength of the incoming far-field narrow-band source signals and $d_p$ belongs to an integer set $\mathbb{S}$ ($\vert \mathbb{S}\vert =P$). Suppose $S$ sources with distinct DOAs $\blds{\theta}=\{\theta_1,\theta_2,\cdots,\theta_{S}\}$  impinge on the array where $\theta_i \in (-\pi/2, \pi/2]$  for $i= 1, \hdots, S$. The signal received at the $P$ sensors at time instance $t$ is given by: 
\begin{equation}\label{measurementmodel}
\mathbf{y}(t) = \mathbf{A}_{\mathbb{S}}(\blds{\theta})\mathbf{x}(t) + \mathbf{n}(t),\quad t = 1,\hdots,L.    
\end{equation}
The matrix $\mathbf{A}_{\mathbb{S}}(\blds{\theta}) = [\mathbf{a}_{\mathbb{S}}(\theta_1), \mathbf{a}_{\mathbb{S}}(\theta_2),\hdots,\mathbf{a}_{\mathbb{S}}(\theta_S)] \in \mathbb{C}^{P \times S}$ is the array manifold matrix where: $\mathbf{a}_{\mathbb{S}}(\theta_i) = [e^{j \pi d_1 \sin(\theta_i)},e^{j \pi d_2 \sin(\theta_i)} \ \hdots \ e^{j \pi d_P \sin(\theta_i)}]^{\top},$ represents the steering vector corresponding to the direction $\theta_i$, $L$ denotes the total number of temporal snapshots, $\mathbf{x}(t) \in \mathbb{C}^{S}$ is the $t^{\text{th}}$ temporal snapshot of the source signal vector and $\mathbf{n}(t) \in \mathbb{C}^{P}$ is an additive noise term. We define the normalized spatial frequencies (which we refer to as normalized DOAs) as $\omega_i=\sin(\theta_i)/2$. Throughout this paper, we make the following statistical assumptions on the source signals and noise:
\begin{itemize}[leftmargin=0.5mm]
\item[] \textbf{[A1]} \textbf{Uncorrelated Gaussian Sources}: The source signals $\mf{x}(t)$ are assumed to be uncorrelated white circularly symmetric Gaussian $\mathcal{CN}(\mathbf{0},\mathbf{P})$ where $\mathbf{P} = \text{diag}(p_1,p_2,\hdots,p_S)$ represents a diagonal covariance matrix of {source powers}. 
\item[] \textbf{[A2]} \textbf{Gaussian Noise}: The noise $\mf{n}(t)$ follows a zero-mean circularly symmetric complex Gaussian distribution $\mf{n}(t)\sim\mathcal{CN}(\mathbf{0},\sigma^2 \mathbf{I})$, and is uncorrelated with $\mf{x}(t)$.
\end{itemize}

Under assumptions [\textbf{A1-A2}], the measurements follow $\mathbf{y}(t)\sim\mathcal{CN}(\mathbf{0},\mathbf{R}_{y})$, where $\mathbf{R}_{y}$ is given by:
\ben
\smash{\mathbf{R}_{y} = \mathbf{A}_{\mathbb{S}}(\blds{\theta})\mf{P}\mf{A}^H_{\mathbb{S}}(\blds{\theta}) + \sigma^2\mathbf{I}_P \in \mathbb{C}^{P \times P}.} \label{eqn:Ry_gt}
\een
By vectorizing $\mathbf{R}_y$, we obtain the ``virtual measurements": $\mathbf{r}_y= (\mathbf{A}_{\mathbb{S}}(\blds{\theta})^{*}\odot \mathbf{A}_{\mathbb{S}}(\blds{\theta}))\mathbf{p} + \sigma^2\mathbf{i},$
 where $ \mathbf{i}=\text{vec}(\mathbf{I}_P) $ and $\mathbf{p}=  [p_1,\ldots,p_S]^T$. The matrix  $\mathbf{A}_{\mathbb{S}}(\blds{\theta})^{*}\odot \mathbf{A}_{\mathbb{S}}(\blds{\theta})$ can be viewed as a ``virtual array" with sensor locations given by the difference set of the SLA.
\begin{definition}[Difference Set]
 Given a SLA $\mathbb{S}=\{d_1,d_2,\cdots,d_P\}$, its difference set  $\mathbb{D}_{\mathbb{S}}$ is defined as: $\smash{\mathbb{D}_{\mathbb{S}} =\{  d_m-d_n \vert d_m,d_n \in \mathbb{S}\}}$.
\end{definition}
The difference set $\mathbb{D}_{\mathbb{S}}$ of $\mathbb{S}$ is also called its virtual difference coarray. Let $M_{ca}>0$ be the largest integer such that the set $\mathbb{U}_{\mathbb{S}} := \{0,1, \hdots, M_{ca}\}$ satisfies $\mathbb{U}_{\mathbb{S}} \subseteq \mathbb{D}_{\mathbb{S}}$. This set $\mathbb{U}_{\mathbb{S}}$ denotes the largest contiguous non-negative segment of the difference set and is essentially a ULA with $M_{ca}+1$ sensors. By harnessing the structure of $\mathbb{U}_{\mathbb{S}}$, sparse arrays enjoy enhanced degrees of freedom over the physical SLA. An array is called hole-free if its difference set is a ULA, i.e., $\mathbb{D}_{\mathbb{S}}=\{-M_{ca},\cdots,M_{ca}\}$. We now introduce the notation for a {``generalized nested array"}, which is a special hole-free array.
\begin{definition}[Nested array]
 A generalized nested array $\mathbb{S}^{(N_1,N_2)}$ with $N_1\geq N_2>0$, is defined as: $\mathbb{S}^{(N_1,N_2)}=\{n\}_{n=1}^{N_1}\ \cup\ \{m(N_1+1)\}_{m=1}^{N_2}$.
\end{definition}
It can be shown that any nested array $ \mathbb{S}^{(N_1,N_2)}$ is hole-free, i.e.,  $\mathbb{U}_{\mathbb{S}}=\{0,1,\cdots,M_{\text{ca}}\}$ with $M_{\text{ca}} =N_2(N_1+1)-1$. Furthermore, $\mathbb{S}_{\text{ula}}=\mathbb{S}^{(P-1,1)}$, i.e., choosing $N_1=P-1$ and $N_2=1$, yields a ULA with $P$ sensors.
For a given $P$, if $N_1 = \lceil \frac{P}{2} \rceil$, $N_2 = \lfloor \frac{P}{2}\rfloor $, then $M_{\text{ca}}+1=\lfloor \frac{P}{2}\rfloor(\lceil \frac{P}{2} \rceil+1)$. It can be verified that for $P\geq 3$, we have: 
\ben
P^2/5\leq M_{\text{ca}}+1\leq P^2\label{eqn:M_ca_bound}.
\een
Therefore, $M_{ca} = \Theta(P^2)$ is indeed achievable. 
Next, we introduce an important quantity that is essential for describing correlation-based processing. \begin{definition}[Weight Function]\label{def:weight_func}
Consider a hole-free array $\mathbb{S}$. For every $i \in \mathbb{D}_{\mathbb{S}}$, its weight function is defined as  $\vert\Omega_i\vert$:
$\Omega_i = \{ (m,n) | d_m - d_n = i, 1 \le m,n \le P \}$
where the set $\Omega_i$ essentially captures all pairs $(d_m,d_n)$ of sensor locations that generate the difference of $i=d_m-d_n$.
\end{definition}
Due to symmetry, it can be verified that $\vert\Omega_i\vert = \vert\Omega_{-i}\vert$. Next, we review the widely-used ``redundancy averaging" technique used for correlation-domain processing.
Following \cite{pal2010nested,Chunlin,wang2017}, the virtual ULA measurements are given by:
\ben \label{eqn:v_measurement}
\smash{\mathbf{t}=\mf{F}_{\text{av}}\mf{r}_y,}
\een
where $\mf{t}=[t_{-M_{\text{ca}}},\cdots,t_{-1},t_0,t_{1},\cdots,t_{M_{\text{ca}}}]^{\top}$ and $\mathbf{F}_{\text{av}}$ is the redundancy averaging matrix given by:
\ben
[\mathbf{F}_{\text{av}}]_{i+M_{ca}+1,m + P(n-1)} = \begin{cases} \frac{1}{|\Omega_i|} & \text{If }  d_m - d_n  = i\\
0 & \text{Otherwise},
\end{cases}
\label{eqn:F_avg}
\een 
with $-M_{ca}\leq i \leq M_{ca}$ and $1\leq m,n \leq P$. The element $t_i$ is obtained by averaging all entries $[\mathbf{R}_\mf{y}]_{m,n}$ whose indices $(m,n)$ generate a difference of $i$, i.e., $d_m-d_n=i$. Define a Toeplitz operator $\mathcal{T}_{M_{ca}}:\mathbb{C}^{2M_{ca}+1}\rightarrow\mathbb{C}^{M_{ca}+1\times M_{ca}+1}$ as: $\smash{[\mathcal{T}_{M_{ca}}(\mf{z})]_{m,n}=z_{M_{ca}+1+m-n}, 1\leq m,n \leq M_{ca}+1}$. 
If the vector $\mathbf{z} \in \mathbb{C}^{2M_{ca}+1}$ is conjugate symmetric, i.e., $z_{M_{ca}+1+i}=z_{M_{ca}+1-i}^{*},i=0,1,\hdots,M_{ca}$, then $\mathcal{T}_{M_{ca}}(\mathbf{z})$ is a Hermitian matrix.
Using the virtual measurement $\mf{t}$, an augmented virtual co-array covariance matrix $\mf{T}_{ca}\in \mathbb{C}^{(M_{ca}+1) \times (M_{ca}+1)}$ is constructed as follows:
\begin{align}\label{eqn:ToeplitzTca}
 \mf{T}_{ca}:=\mathcal{T}_{M_{ca}}(\mathbf{t})= \mf{A}_{\mathbb{U}_{\mathbb{S}}}(\blds{\theta})\mf{P}\mf{A}_{\mathbb{U}_{\mathbb{S}}}(\blds{\theta})^H+\sigma^2 \mf{I}_{M_{ca}+1}.
\end{align}
Once this virtual coarray covariance matrix has been obtained, any subspace-based algorithm \cite{schmidt1986multiple,roy1989esprit} applied to $\mathbf{T}_{ca}$ can exactly recover the source DOAs provided $M_{ca}\geq S$. Hence, this also reveals that by efficiently designing sparse arrays, we can resolve up to $\Theta(P^2)$ sources with only $P$ sensors. In the next section, we describe how the correlation processing is modified in the finite snapshot setting. 
\vspace{-0.3cm}
\subsection{Finite-Snapshot Coarray Covariance Estimation}
Let $\mathbf{\widehat{R}}_{\mf{y}}$ be the sample covariance matrix given by:
\begin{equation}
\smash{\mathbf{\widehat{R}}_{\mf{y}}:=\frac{1}{L}\sum_{t=1}^{L} \mf{y}(t)\mf{y}(t)^H}. \label{eqn:samp_cov}
\end{equation}
With a finite $L$, all the operations on the true covariance matrix are replaced by operations on the sample covariance matrix. First, we apply the redundancy averaging on $\mathbf{\hat{r}_y}$:
\ben
\mathbf{\hat{t}}:= \mathbf{F}_{\text{av}} \mathbf{\hat{r}_y},\text{ where } \mathbf{\hat{r}_y}:=\text{vec}(\mf{\widehat{R}}_{\mf{y}}). \label{eqn:t_hat}
\een
Here $\mathbf{\hat{t}}=[\hat{t}_{-M_{\text{ca}}},\cdots,\hat{t}_{-1},\hat{t}_0,\hat{t}_{1},\cdots,\hat{t}_{M_{\text{ca}}}]^{\top}$ with
$\hat{t}_i=\frac{1}{\vert \Omega_i\vert}\sum_{d_m-d_n=i}[\widehat{\mf{R}}_\mf{y}]_{m,n}.$
Next, the estimated coarray covariance matrix is obtained by constructing a Toeplitz Hermitian matrix from $\mathbf{\hat{t}}$ as follows: 
\begin{equation}
\smash{\mf{\widehat{T}}_{ca}=\mathcal{T}_{M_{ca}}(\mathbf{\hat{t}})}. \label{eqn:T_hat_est}
\end{equation}
For a hole-free sparse array $\mathbb{S}$, from \eqref{eqn:v_measurement}, the elements of the matrix $\mf{R}_{\mf{y}}$ are given by:
\ben
[\mf{R}_{\mf{y}}]_{m,n}=t_{d_m-d_n}\ 1\leq m,n \leq P. \label{eqn:Ry_element}
\een
Similarly, using the estimated coarray covariance matrix $\mf{\widehat{T}}_{ca}$, we define matrix $\mf{R}_{\text{av}} \in \mathbb{C}^{P \times P}$ as
\ben
[\Ryov]_{m,n}:=\widehat{t}_{d_{m}-d_{n}}, \ 1\leq m,n \leq P.\label{eqn:Ryov_element}
\een
This essentially maps the entries $\hat{t}_i$ into a $P\times P$ matrix with the assignments specified by the difference set of the array $\mathbb{S}$. 
Since the sample covariance matrix $\mathbf{\widehat{R}}_{\mf{y}}$ is imperfect, the estimate $\mf{\widehat{T}}_{ca}$ also incurs an error due to a finite number of snapshots. We denote the covariance estimation error as:
\begin{equation}
    \smash{\mathbf{E}_{L} =  \mathbf{T}_{ca}-\mf{\widehat{T}}_{ca}.} \label{eqn:EL}
\end{equation}
The error in estimating the coarray covariance matrix naturally causes errors in DOA estimation as well. Since subspace based algorithms are typically applied to this estimated covariance matrix $\mf{\widehat{T}}_{ca}$, it becomes crucial to probabilistically characterize the estimation error $\mathbf{E}_{L}$ and how it affects the DOA estimation error. This paper provides such a rigorous theoretical characterization of the DOA estimation error with limited snapshots. 
\subsection{Review of Existing Performance Analysis of Coarray-Based Angle Estimation}
The existing performance analyses for coarray-based algorithms are largely asymptotic in nature. In particular, they rely on the first-order perturbation analysis framework proposed in \cite{Vaccaro93}, which has been used to obtain expressions for the mean square error (MSE) of coarray MUSIC \cite{wang2017}, and coarray ESPRIT \cite{steinwandt2017performance}. Consider the eigen decomposition $\mf{T}_{ca}=\mf{U}\mf{\Gamma}_s\mf{U}+\mf{U}_{\perp}\mf{\Gamma}_n\mf{U}_{\perp}^H $,
where $\mf{U} \in \mathbb{C}^{M_{ca}+1\times S}$ and $\mf{U}_{\perp}\in \mathbb{C}^{M_{ca}+1\times M_{ca}+1-S}$ denote the eigenvectors corresponding to the signal and noise subspaces, respectively. The corresponding perturbed matrices are denoted as $\mf{\widehat{T}}_{ca}=\mf{T}_{ca}+\Delta\mf{T}_{ca}$, $\mf{\widehat{U}}_{\perp}=\mf{U}_{\perp}+\Delta\mf{U}_{\perp}$ and $\mf{\widehat{\Gamma}}_n=\mf{\Gamma}_n+\Delta\mf{\Gamma}_n$. The perturbed matrices satisfy:
$(\mf{T}_{ca}+\Delta\mf{T}_{ca})(\mf{U}_{\perp}+\Delta\mf{U}_{\perp})=(\mf{U}_{\perp}+\Delta\mf{U}_{\perp})(\mf{\Gamma}_n+\Delta\mf{\Gamma}_n).$ 
The perturbation analysis in \cite{wang2017} hinges on (i) the perturbations being \emph{``small enough"} and (ii) ignoring the higher order perturbation terms such as $\Delta\mf{T}_{ca}\Delta\mf{U}_{\perp}$ etc. One of the key drawbacks of this analysis is that a rigorous characterization of an upper bound on the \emph{``small enough perturbation"} $\Vert \Delta\mf{T}_{ca}\Vert_2 \leq \epsilon_1 $ has not been provided explicitly. Secondly, \cite[Theorem~1]{wang2017} makes a critical assumption that ``the signal subspace and the noise subspace are well-separated". This assumption 
{leaves open the possibility of problematic (unidentifiable) source configurations, which have not been explicitly} 
addressed in their analysis. We address both of the aforementioned issues by adopting a non-asymptotic analysis framework that is free from any approximations. Our analysis also explicitly characterizes source configurations that ensure separation between the so-called signal and noise subspaces. In \cite{qiao2019guaranteed}, the first rigorous non-asymptotic probabilistic guarantees were provided for support recovery using a grid-based model. Although their analysis is valid for $S>P$, the sample complexity $L=\Omega(P^2)$ is conservative when $S<P$ as our analysis in \cref{sec:separation} will show.
\section{Performance Analysis of Coarray ESPRIT with Finite Snapshots} \label{sec:performance}

{The Coarray ESPRIT algorithm, an adaptation of ESPRIT in the coarray domain, was introduced in \cite{steinwandt2017performance}. It applies ESPRIT on the estimated coarray covariance matrix  $\mf{\widehat{T}}_{ca}$ as opposed to covariance matrix $\mf{\widehat{R}}_{y}$ of the physical measurements.} For a self-contained exposition, we review the Coarray ESPRIT algorithm and point out certain invariance properties of Coarray ESPRIT. We describe Coarray ESPRIT for the ideal coarray covariance matrix $\mathbf{T}_{ca}$. The extension to the sample covariance estimate is straightforward.
\vspace{-0.3cm}
\subsection{The Coarray ESPRIT Algorithm}\label{sec:co_esp_algo}
The coarray signal subspace is defined as the span of the steering vectors:
$\mathcal{S}_{\text{ca}} := \mathcal{R}\left(\mf{A}_{\mathbb{U}_{\mathbb{S}}}(\blds{\theta})\right). $ Matrix $\mathbf{T}_0:=\mf{A}_{\mathbb{U}_{\mathbb{S}}}(\blds{\theta})\mf{P}\mf{A}_{\mathbb{U}_{\mathbb{S}}}(\blds{\theta})^H$ is positive semi-definite and permits the following eigendecompostion: $\mf{T}_0 = \mathbf{B\Gamma}\mathbf{B}^H$, where the diagonal of $\mf{\Gamma}$ comprises of the eigenvalues ordered in non-increasing fashion and $\mf{B}$ is a unitary matrix. We can partition $\mathbf{B}$ as $\mathbf{B}= [\mathbf{U},\mathbf{U}_{\perp}]$, where the columns of $\mathbf{U} \in \mathbb{C}^{(M_{ca}+1) \times S}$ denote the eigenvectors of $\mf{T}_{0}$ corresponding to its non-zero eigenvalues. Following this decomposition, we write $\mathbf{T}_{ca}$ as:
\begin{equation}\label{Teigdecomp}
\mathbf{T}_{ca} = \mf{T}_0+\sigma^2\mf{I}_{M_{ca}+1}=\mathbf{B}(\mathbf{\Gamma}+ \sigma^2\mathbf{I}_{M_{ca}+1})\mathbf{B}^H.
\end{equation}
If $M_{ca}\geq S$, the Vandermonde structure of $\mf{A}_{\mathbb{U}_{\mathbb{S}}}(\blds{\theta})$ allows us to argue that $\text{rank}( \mf{A}_{\mathbb{U}_{\mathbb{S}}}(\blds{\theta})\mf{P}\mf{A}_{\mathbb{U}_{\mathbb{S}}}(\blds{\theta})^H)=S$, and hence:
\begin{equation}\label{rangespequal}
 \mathcal{S}_{\text{ca}}=\mathcal{R}(\mf{A}_{\mathbb{U}_{\mathbb{S}}}(\blds{\theta})) = \mathcal{R}(\mf{A}_{\mathbb{U}_{\mathbb{S}}}(\blds{\theta})\mf{P}\mf{A}_{\mathbb{U}_{\mathbb{S}}}(\blds{\theta})^H) = \mathcal{R}(\mathbf{U}).   
\end{equation}
As a result of \eqref{rangespequal}, $\exists$ an invertible $\mathbf{Q} \in \mathbb{C}^{S \times S}$ such that
\begin{equation} \label{writingintermsof}
    \mathbf{U} = \mf{A}_{\mathbb{U}_{\mathbb{S}}}(\blds{\theta})\mathbf{Q}.
\end{equation}
 Let $\mathbf{U}_0\in \mathbb{C}^{M_{\text{ca}}\times S}$ and $\mathbf{U}_1 \in \mathbb{C}^{M_{\text{ca}}\times S}$ denote the submatrices corresponding to the first and last $M_{ca}$ rows of $\mathbf{U}$. Similarly, let $\mathbf{V}_0,\mf{V}_1 \in \mathbb{C}^{M_{\text{ca}}\times S}$ be the submatrices corresponding to the first and last $M_{ca}$ rows of $\mf{A}_{\mathbb{U}_{\mathbb{S}}}(\blds{\theta})$. Due to the Vandermonde structure of $\mf{A}_{\mathbb{U}_{\mathbb{S}}}(\blds{\theta})$, the following holds:
$\mf{V}_1=\mathbf{V}_0\mathbf{D}$,
 where 
$
 \mathbf{D} = \text{diag}(e^{j\pi \sin(\theta_1)},e^{j\pi \sin(\theta_2)},\hdots,e^{j\pi \sin(\theta_S)} ).$
 By \eqref{writingintermsof}, matrices $\mf{U}_0$ and $\mf{U}_1$ satisfy:
 \begin{align}
 \mathbf{U}_0 = \mathbf{V}_0\mathbf{Q}, \quad \mathbf{U}_1 =\mathbf{V}_0\mathbf{D}\mathbf{Q}. \label{eqn:U0_main_def}
 \end{align}
Now, consider the matrix 
\ben
\mathbf{\Psi} = \mathbf{U}_0^{\dagger}\mathbf{U}_1 \in \mathbb{C}^{S \times S}.\label{eqn:Psi_def}
\een
Since $\mf{U}_0$ has full column rank \eqref{eqn:Psi_def} implies $\mathbf{U}_0^{\dagger}=\mf{Q}^{-1}\mf{V}_0^{\dagger}$. Plugging this in \eqref{eqn:Psi_def} and combining with \eqref{eqn:U0_main_def}, we have: $\mathbf{\Psi} = \mathbf{Q}^{-1}\mathbf{D}\mathbf{Q}.$
 Hence, the DOAs can be inferred from the eigenvalues of $\mathbf{\Psi}$. Since $L$ is finite, we do not have access to $\mathbf{T}_{ca}$ and Coarray ESPRIT is instead applied on its estimate $\mathbf{\widehat{T}}_{ca}$ defined in \eqref{eqn:T_hat_est}. If we can ensure that the error $\mathbf{E}_L$ is small enough (which we will rigorously specify using Weyl's inequality), $\mathbf{\widehat{T}}_{ca}$ will be at least rank-$S$. Let $\hat{\mathbf{U}}$ be the matrix of eigenvectors corresponding to the largest $S$ eigenvalues of $\mathbf{\widehat{T}}_{ca}$ (which is well-defined). We can consider $\mathbf{\widehat{U}}$ as a basis of the perturbed coarray signal space $\mathcal{\hat{S}}_{\text{ca}}$. From $\mathbf{\widehat{U}}$, we compute the matrices $\mathbf{\widehat{U}}_0$, $\mathbf{\widehat{U}}_1$, $\mathbf{\widehat{\Psi}}$ following the same construction as $\mf{U}_0,\mf{U}_1$ and $\mf{\Psi}$. Let $\hat{\lambda}_i=r_ie^{j\widehat{\phi}_i}$ be the polar representation of the eigenvalues of the matrix $\hat{\mathbf{\Psi}}$. The estimated normalized frequencies $\hat{\Omega}=\{\widehat{\omega}_i\}_{i=1}^{S}$ are then given by
$\widehat{\omega}_i=\frac{\widehat{\phi}_i}{2\pi}.$
\subsection{Basis Invariance Property of ESPRIT}
In the previous section, ESPRIT is performed using the basis given by the singular vectors $\widehat{\mf{U}}$ ($\mf{U}$) of $\widehat{\mf{T}}_{\text{ca}}$ ($\mf{T}_{\text{ca}}$). However, the following Lemma shows that the output of ESPRIT is invariant to the choice of the basis for the subspace.
\begin{lem}\label{lem:ESP_inv}
 Let $\widetilde{\mf{U}}\in \mathbb{C}^{(M_{ca}+1)\times S}$ be another basis for $\mathcal{R}(\widehat{\mf{U}})$. Then, the matrix $\widetilde{\mf{\Psi}}:=\widetilde{\mf{U}}_0^{\dagger}\widetilde{\mf{U}}_1$ is similar to the matrix $\mathbf{\widehat{\Psi}}$, i.e., $\widetilde{\mf{\Psi}}$ and $\mathbf{\widehat{\Psi}}$ share the same eigenvalues. 
\end{lem}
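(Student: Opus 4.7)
The plan is to exhibit an explicit similarity transformation between $\widetilde{\mf{\Psi}}$ and $\widehat{\mf{\Psi}}$ using an invertible change of basis matrix, and then argue that the pseudo-inverse interacts nicely with such a transformation.

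First I would observe that because $\widetilde{\mf{U}}$ and $\widehat{\mf{U}}$ are both bases for the same $S$-dimensional subspace $\mathcal{R}(\widehat{\mf{U}})$, there exists a unique invertible matrix $\mf{G}\in\mathbb{C}^{S\times S}$ such that $\widetilde{\mf{U}} = \widehat{\mf{U}}\,\mf{G}$. Since the operations of taking the first and last $M_{ca}$ rows are linear row-selection operators (say $\mf{J}_0$ and $\mf{J}_1$), they commute with right multiplication by $\mf{G}$. Consequently,
\begin{equation*}
\widetilde{\mf{U}}_0 = \mf{J}_0\widetilde{\mf{U}} = \widehat{\mf{U}}_0\,\mf{G}, \qquad \widetilde{\mf{U}}_1 = \mf{J}_1\widetilde{\mf{U}} = \widehat{\mf{U}}_1\,\mf{G}.
\end{equation*}

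Next I would use the fact that for a matrix $\mf{A}\in\mathbb{C}^{M_{ca}\times S}$ with full column rank and an invertible $\mf{G}\in\mathbb{C}^{S\times S}$, the pseudo-inverse satisfies $(\mf{A}\mf{G})^{\dagger} = \mf{G}^{-1}\mf{A}^{\dagger}$. This is a direct calculation from the formula $(\mf{A}\mf{G})^{\dagger} = \bigl((\mf{A}\mf{G})^H (\mf{A}\mf{G})\bigr)^{-1}(\mf{A}\mf{G})^H$, where the inverse exists because $\mf{G}$ is invertible and $\mf{A}$ has full column rank. Applying this to $\widehat{\mf{U}}_0$ gives $\widetilde{\mf{U}}_0^{\dagger} = \mf{G}^{-1}\widehat{\mf{U}}_0^{\dagger}$. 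Combining the two identities,
\begin{equation*}
\widetilde{\mf{\Psi}} = \widetilde{\mf{U}}_0^{\dagger}\widetilde{\mf{U}}_1 = \mf{G}^{-1}\widehat{\mf{U}}_0^{\dagger}\widehat{\mf{U}}_1\mf{G} = \mf{G}^{-1}\widehat{\mf{\Psi}}\,\mf{G},
\end{equation*}
so $\widetilde{\mf{\Psi}}$ and $\widehat{\mf{\Psi}}$ are similar and share the same eigenvalues (including multiplicities).

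The only subtle point, and what I would flag as the main obstacle, is ensuring that $\widehat{\mf{U}}_0$ has full column rank so that the pseudo-inverse identity is valid. This is implicit in the Coarray ESPRIT setup: the algorithm is well-defined precisely under the regime in which the perturbed signal subspace estimate satisfies the shift-invariance structure non-degenerately (which will be guaranteed elsewhere in the paper via Weyl-type bounds on $\mathbf{E}_L$ ensuring $\widehat{\mf{T}}_{ca}$ has rank at least $S$ and its first $M_{ca}$-row submatrix inherits full column rank from $\mf{V}_0$ under small perturbations). Under this standing assumption the argument goes through verbatim and the lemma follows.
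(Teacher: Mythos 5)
Your proof is correct and follows essentially the same route as the paper's: write $\widetilde{\mf{U}}=\widehat{\mf{U}}\mf{G}$ for an invertible $\mf{G}$, note the row-selections commute with right multiplication, apply $(\mf{A}\mf{G})^{\dagger}=\mf{G}^{-1}\mf{A}^{\dagger}$, and read off the similarity $\widetilde{\mf{\Psi}}=\mf{G}^{-1}\widehat{\mf{\Psi}}\mf{G}$. Your explicit flagging of the full-column-rank hypothesis on $\widehat{\mf{U}}_0$ needed for the pseudo-inverse identity is a point the paper leaves implicit, and is a welcome addition.
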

\vspace{-0.2cm}
\begin{proof}
Since $\mathcal{R}(\widetilde{\mf{U}}) = \mathcal{R}(\widehat{\mf{U}})$, there exists an invertible matrix $\mf{W}\in \mathbb{C}^{S\times S}$ such that
$\widetilde{\mf{U}}:=\widehat{\mf{U}}\mf{W}$.
Thus, the following holds:
$\widetilde{\mf{U}}_0=\widehat{\mf{U}}_0\mf{W}, \widetilde{\mf{U}}_1=\widehat{\mf{U}}_1\mf{W}$.
Since $\mf{W}$ is an invertible matrix, $\widetilde{\mf{U}}_0^{\dagger}=\mf{W}^{-1}\widehat{\mf{U}}_0^{\dagger}$ 
and $\widetilde{\mf{\Psi}}=\widetilde{\mf{U}}_0^{\dagger}\widetilde{\mf{U}}_1= \mf{W}^{-1}\widehat{\mf{U}}_0^{\dagger}\widehat{\mf{U}}_1\mf{W}=\mf{W}^{-1}\mathbf{\widehat{\Psi}}\mf{W}$. 
This completes the proof.
\end{proof}
\subsection{Covariance Estimation Error}\label{sec:Toep}
In this section, we obtain tail bounds on $\| \mf{E}_{L} \|_2$ in terms of array parameters in a finite snapshot setting. Such a bound brings out the effect of the array geometry on the estimation error. {Our analysis leverages recent results derived in \cite{Musco} which we specialize for complex Toeplitz Hermitian matrices. Some of our intermediate steps depart from \cite{Musco} by invoking a result on the bounding the supremum of a certain spectral function from \cite{zygmund2002trigonometric}.}
We first introduce the key quantities and intermediate results on bounding the spectral norm of a Toeplitz Hermitian matrix from \cite{zygmund2002trigonometric,gray2006toeplitz}. 

Let $\mf{M}\in \mathbb{C}^{N \times N}$ be any Hermitian symmetric Toeplitz matrix. Such a matrix can be completely described by only its first column. Consider the ``spectral function" associated with $\mf{m}=[m_{-(N-1)},\dots,m_{-1},m_0,m_1,\dots,m_{N-1}]^{\top}$ \cite{gray2006toeplitz}:
${f_{\mf{m}}(\theta)=\sum_{k=-(N-1)}^{N-1} m_k \exp(-jk\theta)}$, 
where $m_k=M_{k+1,1}$ and $m_{-k}=m_{k}^{*}$ as a result of the Hermitian Toeplitz structure. Evidently, the spectral function is a trigonometric polynomial {of order $N-1$ \cite{zygmund2002trigonometric}} whose coefficients are determined by the vector $\mf{m}$. This spectral function $f_{\mathbf{m}}(\theta)$ can be used to bound $\Vert\mathbf{M}\Vert_2$ as indicated by the following lemma from \cite{gray2006toeplitz,zygmund2002trigonometric,Musco}:
\begin{lem} \label{lem:spectral_norm}Let $\mathbf{M} \!\in\! \mathbb{C}^{N \times N}$ be a Hermitian symmetric Toeplitz matrix and $f_{\mathbf{m}}$ be the associated spectral function. Then, 
$\| \mathbf{M} \|_2 \le \sup_{\theta \in [-\pi, \pi]} |f_{\mathbf{m}}(\theta)|$.
\end{lem}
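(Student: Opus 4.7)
The plan is to derive the bound from the variational characterization of the spectral norm combined with a Fourier-theoretic representation of Hermitian Toeplitz matrices. Since $\mathbf{M}$ is Hermitian, $\|\mathbf{M}\|_2 = \sup_{\|\mathbf{v}\|_2 = 1} |\mathbf{v}^H \mathbf{M} \mathbf{v}|$, so it suffices to uniformly bound the Rayleigh quotient over unit vectors $\mathbf{v} \in \mathbb{C}^N$.

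Writing out the quadratic form using the Toeplitz structure $[\mathbf{M}]_{m,n} = m_{m-n}$ yields $\mathbf{v}^H \mathbf{M} \mathbf{v} = \sum_{m,n=1}^N \bar{v}_m v_n\, m_{m-n}$. The key step is to substitute the Fourier inversion identity $m_k = \frac{1}{2\pi} \int_{-\pi}^{\pi} f_{\mathbf{m}}(\theta) e^{jk\theta}\, d\theta$, which is valid because $f_{\mathbf{m}}$ is by definition the trigonometric polynomial whose Fourier coefficients are the entries $\{m_k\}_{k=-(N-1)}^{N-1}$ of the defining column. Exchanging the finite sum with the integral factors the quadratic form as $\frac{1}{2\pi} \int_{-\pi}^{\pi} f_{\mathbf{m}}(\theta)\, |V(\theta)|^2\, d\theta$, where $V(\theta) := \sum_{n=1}^N v_n e^{-jn\theta}$ is the discrete-time Fourier transform of $\mathbf{v}$.

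To finish, I would pull $\sup_\theta |f_{\mathbf{m}}(\theta)|$ out of the integral as a pointwise bound, then invoke Parseval's identity to compute $\frac{1}{2\pi}\int_{-\pi}^{\pi} |V(\theta)|^2\, d\theta = \sum_{n=1}^N |v_n|^2 = 1$. This gives $|\mathbf{v}^H \mathbf{M} \mathbf{v}| \leq \sup_{\theta \in [-\pi,\pi]} |f_{\mathbf{m}}(\theta)|$, and taking the supremum over unit vectors delivers the claimed inequality.

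Given the classical flavor of the argument, I do not expect a substantive obstacle; the only delicate point is maintaining a consistent sign convention between the definition of $f_{\mathbf{m}}$ (which uses $e^{-jk\theta}$), the Fourier inversion formula for $m_k$, and the exponent used in $V(\theta)$, so that the two sums over $m$ and $n$ combine cleanly into $|V(\theta)|^2$ rather than leaving an oscillatory cross-term. Once those conventions are fixed, the rest of the derivation is routine.
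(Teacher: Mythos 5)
Your argument is correct: the variational characterization of the spectral norm for Hermitian matrices, the Fourier inversion identity for the coefficients $m_k$, the factorization of the quadratic form as $\frac{1}{2\pi}\int_{-\pi}^{\pi} f_{\mathbf{m}}(\theta)\,|V(\theta)|^2\,d\theta$, and Parseval's identity fit together exactly as you describe, and the sign conventions do work out so that the cross terms collapse into $|V(\theta)|^2$. The paper does not actually prove this lemma --- it imports it from the cited references on Toeplitz matrices --- and your derivation is precisely the classical argument given there, so there is nothing to reconcile.
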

\cref{lem:spectral_norm} indicates that the spectral norm of a Hermitian symmetric Toeplitz matrix can be bounded by the supremum of its associated spectral function. Note that the covariance estimation error $\mathbf{E}_L=\mf{T}_{ca}-\mf{\widehat{T}}_{ca}$ is a Toeplitz Hermitian matrix, satisfying $\mathbf{E}_L=\mathcal{T}(\mf{e})$, where $\mf{e}=[e_{-M_{ca}},\dots,e_{-1},e_0,e_1,\dots,e_{M_{ca}}]^{T}$ is conjugate symmetric and $e_i=t_i-\hat{t}_i$. Therefore, to bound $\| \mf{E}_{L} \|_2$ using \cref{lem:spectral_norm}, we need to investigate the spectral function $f_{\mf{e}}(\theta)$:
\ben\label{eqn:f_e}
f_{\mf{e}}(\theta) := \sum_{k=-M_{ca}}^{M_{ca}} e_k \exp(-j\theta k) .  
\een
Towards this purpose, define $\Lamth$, for $1\leq m,n \leq P$,
\begin{align}\label{eqn:Lambda_def}
    [\Lamth]_{m,n}=\frac{1}{\vert\Omega_{d_{m}-d_{n}}\vert}\exp(j (d_m-d_n)\theta), 
\end{align}
and $\Ey:=\Ry-\Ryov$, where $\Ry$ and $\Ryov$ are defined in \eqref{eqn:Ry_element} and \eqref{eqn:Ryov_element}, respectively. The elements of $\Ey$ are given by:
\ben
[\Ey]_{m,n}=t_{d_m-d_n}-\widehat{t}_{d_{m}-d_{n}}=e_{d_m-d_n}, \ 1\leq m,n \leq P. \label{eqn:Ey_def}
\een
\cref{prop:spec_func} provides a compact representation of $f_{\mf{e}}(\theta)$.
\begin{prop}\label{prop:spec_func}
Let $f_{\mf{e}}(\theta)$ be the spectral function defined in \eqref{eqn:f_e}. Then, the following equality holds: $f_{\mf{e}}(\theta)=\text{tr}\left(\Ey\Lamth\right)$ where $\Lamth$ and $\Ey$ are defined in \eqref{eqn:Lambda_def} and \eqref{eqn:Ey_def}, respectively. 
\end{prop}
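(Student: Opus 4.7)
The plan is to expand the right hand side $\mathrm{tr}(\mathbf{E}_{\mathbf{y}}\boldsymbol{\Lambda}(\theta))$ as an explicit double sum and collapse it into $f_{\mathbf{e}}(\theta)$ by partitioning the index pairs $(m,n)$ according to the difference $k = d_m - d_n$. No analytic machinery is required; the proof is purely combinatorial bookkeeping built around the weight function $|\Omega_k|$ introduced in \cref{def:weight_func}.

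First, I would use the identity $\mathrm{tr}(\mathbf{A}\mathbf{B}) = \sum_{m,n}[\mathbf{A}]_{m,n}[\mathbf{B}]_{n,m}$ to write
$$
\mathrm{tr}\bigl(\mathbf{E}_{\mathbf{y}}\boldsymbol{\Lambda}(\theta)\bigr)
= \sum_{m=1}^{P}\sum_{n=1}^{P} [\mathbf{E}_{\mathbf{y}}]_{m,n}\,[\boldsymbol{\Lambda}(\theta)]_{n,m}.
$$
Substituting the element-wise formulas \eqref{eqn:Ey_def} and \eqref{eqn:Lambda_def} gives
$$
\mathrm{tr}\bigl(\mathbf{E}_{\mathbf{y}}\boldsymbol{\Lambda}(\theta)\bigr)
= \sum_{m,n} e_{d_m - d_n}\,\frac{1}{|\Omega_{d_n - d_m}|}\,\exp\bigl(j(d_n - d_m)\theta\bigr).
$$
Next, I would invoke the symmetry $|\Omega_{d_n - d_m}| = |\Omega_{d_m - d_n}|$ noted after \cref{def:weight_func} so the weighting depends only on the magnitude of the difference, and flip the sign in the exponent accordingly.

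The crucial step is then to regroup the double sum by the value of $k := d_m - d_n$. For each $k \in \mathbb{D}_{\mathbb{S}}$, by definition of $\Omega_k$, the number of pairs $(m,n)$ contributing to the sum is exactly $|\Omega_k|$, and on each such pair the summand equals $e_k\,|\Omega_k|^{-1}\exp(-jk\theta)$. Thus
$$
\sum_{(m,n):\,d_m - d_n = k} e_{d_m - d_n}\,\frac{1}{|\Omega_{d_m - d_n}|}\,\exp(-j(d_m - d_n)\theta)
= e_k\,\exp(-jk\theta),
$$
and the factor $|\Omega_k|$ in the numerator cancels the $|\Omega_k|^{-1}$ weight. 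Summing over $k \in \mathbb{D}_{\mathbb{S}} = \{-M_{ca},\ldots,M_{ca}\}$ (using that $\mathbb{S}$ is hole-free) recovers precisely the definition of $f_{\mathbf{e}}(\theta)$ in \eqref{eqn:f_e}, completing the proof.

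There is no real obstacle here; the only subtlety worth flagging is getting the transposition correct when expanding $\mathrm{tr}(\mathbf{E}_{\mathbf{y}}\boldsymbol{\Lambda}(\theta))$ so that the sign in the exponent of $\boldsymbol{\Lambda}$ flips to match the $-jk\theta$ appearing in $f_{\mathbf{e}}(\theta)$, and invoking the weight-function symmetry at exactly the right moment. The proof critically relies on the hole-free assumption so that the summation range $k \in \{-M_{ca},\ldots,M_{ca}\}$ coincides with the support of $\mathbf{e}$.
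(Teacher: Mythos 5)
Your proposal is correct and follows essentially the same route as the paper: expand the trace as a double sum, substitute the entries of $\Ey$ and $\Lamth$, use the symmetry $\vert\Omega_{-k}\vert=\vert\Omega_{k}\vert$ to flip the exponent, and regroup the pairs $(m,n)$ by the difference $k=d_m-d_n$ so that the $\vert\Omega_k\vert$ pairs in each group cancel the $1/\vert\Omega_k\vert$ weight. The paper's proof is just a more compressed version of this same computation.
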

\vspace{-0.2cm}
\begin{proof}
{\footnotesize
\begin{align*}
&\text{tr}\left(\Ey\Lamth\right)=\sum_{m,n=1}^{P}[\Ey]_{m,n}[\Lamth]_{n,m}=\sum_{m,n=1}^{P}e_{d_m-d_n}\frac{e^{-j (d_m-d_n)\theta}}{\vert \Omega_{d_m-d_n}\vert}\\
   &=\sum_{s=-M_{ca}}^{M_{ca}}\sum_{\scriptstyle m,n\atop d_m-d_n=s} e_{s}\frac{\exp(-js\theta)}{\vert \Omega_s\vert}=\sum_{s=-M_{ca}}^{M_{ca}}e_{s}\exp(-js\theta).
\end{align*}}\vspace{-0.5cm}
\end{proof}
We introduce a quantity referred to as ``\emph{Redundancy coefficient}" that will play an important role in bounding $\| \mf{E}_{L} \|_2$.
\begin{definition}[Redundancy Coefficient]
Given a hole-free sparse array $\mathbb{S}$, let $M_{ca}$ be the largest element in its difference set $\mathbb{D}_\mathbb{S}$. The redundancy coefficient $\Delta (\mathbb{S})$ is defined as:
$\smash{\Delta (\mathbb{S}):=\sum_{i=0}^{M_{ca}} \frac{1}{\vert \Omega_{i}\vert}}$, 
where set $\Omega_i$ is defined in \cref{def:weight_func}.
\end{definition}
The quantity $\Delta (\mathbb{S})$ is controlled by the redundancy pattern of the sparse array $\mathbb{S}$, i.e., the number of times an element repeats in the difference set.
We provide an illustrative example to show how the quantity $\Delta (\mathbb{S})$ grows as a function of $P$. 




\begin{lem}\label{lem:DS_nested}
Given a generalized nested array $\mathbb{S}^{(N_1,N_2)}_{\text{nest}}$ with $P\!:=\!N_1+N_2\geq 3$ sensors, the following holds:
   $\ln(P) \leq \Delta (\mathbb{S}^{(N_1,N_2)}_{\text{nest}})\leq 2\ln(P)$, if $N_2=1$, and $P^2/16 \leq \Delta (\mathbb{S}^{(N_1,N_2)}_{\text{nest}})\leq P^2$, if $N_1=\lceil P/2 \rceil$ and $N_2=\lfloor P/2 \rfloor \geq 2$.
\end{lem}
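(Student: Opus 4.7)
The plan is to split the argument according to whether $N_2 = 1$ (in which case the array reduces to a $P$-sensor ULA) or $N_2 \geq 2$ (the genuine two-level nested array), and in each case to characterize the weight function $|\Omega_i|$ by classifying the pairs $(d_m, d_n)$ according to which sub-array they inhabit.

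For $N_2 = 1$, the array is simply $\{1, 2, \ldots, P\}$, so $|\Omega_i| = P - i$ and hence $\Delta(\mathbb{S}) = \sum_{i=0}^{P-1} \frac{1}{P-i} = H_P$, the $P$-th harmonic number. The claimed two-sided estimate then follows from the classical bounds $\ln P \leq H_P \leq 1 + \ln P$, together with the elementary inequality $1 + \ln P \leq 2 \ln P$, which holds for $P \geq 3$ since $\ln P \geq 1$.

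For $N_2 \geq 2$, decompose $\mathbb{S}^{(N_1, N_2)}_{\text{nest}} = \mathbb{S}_1 \cup \mathbb{S}_2$ into the dense part $\mathbb{S}_1 = \{1, \ldots, N_1\}$ and the sparse part $\mathbb{S}_2 = \{k(N_1+1)\}_{k=1}^{N_2}$. A direct case analysis shows that pairs in $\mathbb{S}_1 \times \mathbb{S}_1$ contribute only to positive differences $i \in \{1, \ldots, N_1 - 1\}$ (with multiplicity $N_1 - i$), that pairs in $\mathbb{S}_2 \times \mathbb{S}_2$ contribute only to multiples of $N_1 + 1$ (with multiplicity $N_2 - m$ at $i = m(N_1+1)$), and that the cross-pair differences $k(N_1+1) - \ell$ exactly tile the complementary intervals $[(k-1)(N_1+1) + 1,\ k(N_1+1) - 1]$ for $k = 1, \ldots, N_2$, with each such difference realized by a unique $(k, \ell)$. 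The critical consequence is that every $i$ in a cross-interval with $k \geq 2$ satisfies $|\Omega_i| = 1$, producing at least $(N_2 - 1)N_1$ indices with unit weight.

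For the upper bound $\Delta(\mathbb{S}) \leq P^2$, it suffices to note $|\Omega_i| \geq 1$ and invoke \eqref{eqn:M_ca_bound}. For the lower bound, retain only the dominant $(N_2 - 1)N_1$ weight-one contributions and substitute $N_1 \geq P/2$ together with $N_2 - 1 \geq (P-3)/2$ (valid since $N_2 \geq 2$ forces $P \geq 4$) to obtain $(N_2 - 1)N_1 \geq (P-3)P/4 \geq P^2/16$. The main technical point is the combinatorial bookkeeping that pins down $|\Omega_i|$ exactly: one must verify that the cross-difference intervals for distinct $k$ are disjoint, that consecutive intervals are separated by precisely one multiple of $N_1+1$ (so cross-differences and sparse-sparse differences never overlap), and that contributions from the three pair types are cleanly compartmentalized. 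Once this structure is in hand, the remaining arithmetic on harmonic sums and the leading $P^2$ term is elementary.
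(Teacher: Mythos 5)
Your proposal is correct and follows essentially the same route as the paper: the $N_2=1$ case via harmonic-number bounds, and the $N_2\geq 2$ case by counting the $(N_2-1)N_1$ cross-differences between the sparse and dense sub-arrays that have unit weight (lower bound) together with $\vert\Omega_i\vert\geq 1$ and \eqref{eqn:M_ca_bound} (upper bound). Your combinatorial bookkeeping of the three pair types and the disjointness of the cross-difference intervals is a more explicit justification of the unit-weight claim that the paper asserts directly, and your intermediate arithmetic ($(P-3)P/4\geq P^2/16$ for $P\geq 4$) differs harmlessly from the paper's ($\lfloor P/2\rfloor-1\geq\lfloor P/2\rfloor/2$ followed by $P\leq P^2/2$), but the argument is the same.
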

\begin{proof}
\textbf{Case I} ($N_2=1$):
The choice $N_2=1$ corresponds to a ULA, with $P = N_1+1$ sensors and $\vert \Omega_i\vert =P-i, i\geq 0$. Therefore,
$
 \Delta (\mathbb{S}^{(P-1,1)}_{\text{nest}})=\sum_{i=0}^{P-1}  \frac{1}{P-i}.$ Such a harmonic sum can be bounded as $\ln(P)\leq \sum_{i=0}^{P-1}  \frac{1}{P-i} \leq 1+\ln(P)$ \cite{Harmonic}. For $P\geq 3$, we get the desired bound since $1+\ln(P)\leq 2\ln(P)$. 

\textbf{Case II} ($N_2=\lfloor P/2 \rfloor \geq 2$):
The differences between the elements of the outer and inner ULA which are of the form $k=i(\lceil P/2\rceil+1)-j$, $2\leq i \leq \lfloor P/2\rfloor$ and $1\leq j \leq \lceil P/2\rceil$, satisfy $\vert \Omega_k\vert=1$. Therefore, we have
$\Delta (\mathbb{S}^{(N_1,N_2)}_{\text{nest}}) \geq \lceil P/2 \rceil \lfloor P/2 \rfloor/2 \geq (P^2/8-P/8)\geq P^2/16$, where the first inequality follows from $ \lfloor P/2 \rfloor-1\geq \lfloor P/2 \rfloor/2$ and the last inequality uses $P\leq P^2/2$ for $P\geq 2$. Since $\mathbb{S}^{(N_1,N_2)}_{\text{nest}}$ is hole free, it implies $1/\vert \Omega_i\vert \leq 1$ for all $0\leq i \leq M_{\text{ca}}$. Therefore, we can bound $\Delta (\mathbb{S}^{(N_1,N_2)}_{\text{nest}})\leq M_{\text{ca}}+1\leq P^2$. 
\end{proof}
As the following Theorem will show, $\Delta (\mathbb{S})$ determines the sample complexity for controlling the covariance estimation error. Therefore, with the same number of sensors, two different array geometries could require drastically different sample complexity for ensuring that the covariance estimation error is bounded by the same quantity with high probability. 
\begin{thm}\label{thm:coverr}
Consider the measurement model \eqref{measurementmodel} obeying assumptions [\textbf{A1-A2}], where $\mathbb{S}$ is a hole-free sparse array with redundancy coefficient $\Delta(\mathbb{S})$. Let $\mf{T}_{ca}\in \mathbb{C}^{M_{ca}+1 \times M_{ca}+1}$ be the coarray covariance matrix defined in \eqref{eqn:ToeplitzTca} and $\mf{\widehat{T}}_{ca}$ be its estimate given by \eqref{eqn:T_hat_est}. For any $\epsilon \ge 0$, we have 
{\footnotesize\begin{align*}
&P\left(\| \mathbf{T}_{ca} - \mathbf{\widehat{T}}_{ca} \|_2 \geq \epsilon \right) \nonumber \\
&\le 8 M_{ca} \exp \left[ -c_1 L \min \left( \frac{c_2\epsilon^2 }{\| \mathbf{R_{y}} \|_2^2 \Delta(\mathbb{S})}, \frac{\epsilon }{\| \mathbf{R_{y}} \|_2 \sqrt{\Delta(\mathbb{S})}}\right) \right],
\end{align*}}
where $c_1$ and $c_2$ are a positive universal constants.
\end{thm}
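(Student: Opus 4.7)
The plan is to reduce the spectral--norm bound on the Hermitian Toeplitz matrix $\EL$ to a supremum of its associated trigonometric polynomial $f_{\mf{e}}$, establish pointwise concentration of $f_{\mf{e}}(\theta)$ via a Hanson--Wright / Bernstein argument, and upgrade this pointwise bound to a uniform (in $\theta$) bound through a discretization tailored to low-degree trigonometric polynomials.

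By \cref{lem:spectral_norm} I immediately have $\|\EL\|_2 \leq \sup_{\theta \in [-\pi,\pi]} |f_{\mf{e}}(\theta)|$, and by \cref{prop:spec_func}, $f_{\mf{e}}(\theta) = \mathrm{tr}(\Ey\Lamth)$. Since $\Ryov$ is obtained from $\Ryhat$ by averaging entries along antidiagonals while $\Lamth$ carries the reciprocal weights $1/|\Omega_{d_m-d_n}|$, a short calculation gives $\mathrm{tr}(\Ryov\Lamth) = \mathrm{tr}(\Ryhat\Lamth)$, and hence $f_{\mf{e}}(\theta) = \mathrm{tr}((\Ry - \Ryhat)\Lamth)$. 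Whitening via $\mf{y}(t) = \Ry^{1/2}\mf{z}(t)$ with $\mf{z}(t) \sim \mathcal{CN}(\mathbf{0}, \mf{I}_P)$ i.i.d., and setting $\mf{B}(\theta) := \Ry^{1/2}\Lamth\Ry^{1/2}$, the spectral function becomes
\begin{equation*}
f_{\mf{e}}(\theta) = \frac{1}{L}\sum_{t=1}^{L}\bigl(\mathrm{tr}(\mf{B}(\theta)) - \mf{z}(t)^H \mf{B}(\theta)\mf{z}(t)\bigr).
\end{equation*}
The summands are i.i.d.\ centered quadratic forms in a standard complex Gaussian, so by the Hanson--Wright inequality each is sub-exponential with variance proxy controlled by $\|\mf{B}(\theta)\|_F^2$ and scale controlled by $\|\mf{B}(\theta)\|_2$; Bernstein's inequality then delivers, for each fixed $\theta$, a tail of the form $2\exp\!\left[-c L \min(\epsilon^2/\|\mf{B}(\theta)\|_F^2,\,\epsilon/\|\mf{B}(\theta)\|_2)\right]$.

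The array geometry enters through norm bounds on $\mf{B}(\theta)$. By submultiplicativity, $\|\mf{B}(\theta)\|_F \leq \|\Ry\|_2\|\Lamth\|_F$ and $\|\mf{B}(\theta)\|_2 \leq \|\Ry\|_2 \|\Lamth\|_2 \leq \|\Ry\|_2 \|\Lamth\|_F$. A direct calculation from \eqref{eqn:Lambda_def} and \cref{def:weight_func} yields
\begin{equation*}
\|\Lamth\|_F^2 = \sum_{m,n=1}^{P} \frac{1}{|\Omega_{d_m-d_n}|^2} = \sum_{s=-M_{ca}}^{M_{ca}} \frac{1}{|\Omega_s|} \leq 2\Delta(\mathbb{S}),
\end{equation*}
uniformly in $\theta$, since each difference $s$ is attained by exactly $|\Omega_s|$ pairs $(m,n)$. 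Substituting these bounds into the pointwise tail recovers exactly the dependence on $\|\Ry\|_2^2\,\Delta(\mathbb{S})$ and $\|\Ry\|_2\sqrt{\Delta(\mathbb{S})}$ appearing in the theorem.

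Finally, the supremum over $\theta$ is handled by exploiting that $f_{\mf{e}}$ is a trigonometric polynomial of degree at most $M_{ca}$. Bernstein's inequality for trigonometric polynomials gives $\|f_{\mf{e}}'\|_\infty \leq M_{ca}\|f_{\mf{e}}\|_\infty$, so sampling on a uniform grid $\{\theta_j\}_{j=1}^{N}$ of size $N$ proportional to $M_{ca}$ and applying the mean value theorem yields $\sup_\theta |f_{\mf{e}}(\theta)| \leq 2\max_j |f_{\mf{e}}(\theta_j)|$. A union bound over the grid combined with the pointwise tail above produces the prefactor $8M_{ca}$ after absorbing numerical constants into $c_1,c_2$. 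I expect the main obstacle to lie in the pointwise concentration step: applying Hanson--Wright correctly for \emph{complex} Gaussian quadratic forms while extracting the clean geometric identity $\|\Lamth\|_F^2 \leq 2\Delta(\mathbb{S})$, which is the precise mechanism by which the redundancy structure of the array controls the sample complexity.
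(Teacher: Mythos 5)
Your proposal is correct and follows essentially the same route as the paper: reduction to the spectral function via \cref{lem:spectral_norm} and \cref{prop:spec_func}, the trace identity $\mathrm{tr}(\Ryov\Lamth)=\mathrm{tr}(\Ryhat\Lamth)$, whitening plus the complex Hanson--Wright inequality, the bound $\|\Lamth\|_F^2\leq 2\Delta(\mathbb{S})$, and a union bound over an $O(M_{ca})$-point grid. The only (cosmetic) difference is the discretization step: the paper invokes Zygmund's theorem directly to get $\sup_\theta|f_{\mf{e}}(\theta)|\leq 2\max_k|f_{\mf{e}}(\theta_k)|$ over exactly $4M_{ca}$ points, whereas your Bernstein-plus-mean-value argument needs a somewhat denser grid to achieve the factor $2$, so you would land on a prefactor $CM_{ca}$ with $C$ slightly larger than $8$ --- a constant that sits outside the exponential and hence cannot be absorbed into $c_1,c_2$, but which does not affect the substance of the result.
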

\vspace{-0.2cm}
\begin{proof}
The proof is in Appendix \ref{app:thm1}.
\end{proof}
\vspace{-0.5cm}
\subsection{Frequency/Angle Estimation Error of Coarray ESPRIT}\label{sec:frq}


We next bound the DOA estimation error in terms of the covariance estimation error $\EL$. Finally, we will combine this bound with the probabilistic bounds on $\Vert\EL\Vert_2$ in \cref{thm:coverr} to obtain the main sample complexity result (in \cref{thm:main_coesp}). We will use the matching distance metric, defined as follows \cite{WeilinEsprit}:
\begin{equation}
    \text{md}(\blds{\theta},\hat{\blds{\theta}}) := \min_{\Pi \in \mathcal{P}}\ \max_j \ \min_{k\in \mathbb{Z}}\vert \hat{\omega}_{\Pi(j)} - \omega_j+k  \vert
\end{equation}
{where $\omega_i$ ($\hat{\omega}_i$) are the normalized DOAs and $\mathcal{P}$ denotes the set of all possible permutations on $\{1,2,\cdots,S\}$}.


For our analysis, we will use an additional assumption that will be invoked whenever suitable:
\begin{itemize}
   \item[] \textbf{[A3]} The number of sources $S=O(1)$, i.e., $S$ is held constant and does not grow with $P$.
\end{itemize}
\textbf{Eigen Gap condition:}
Define:
\begin{align} \label{eqn:beta}
\beta:=p_{\min}\sigma_S^2(\mf{A}_{\mathbb{U}_{\mathbb{S}}}(\blds{\theta}))-\sigma^2.
\end{align}
Henceforth, we will refer the condition $\beta>0$ as the \emph{``eigen gap condition"} and it will play an important role in our analysis. Recall, from the definition of $\mf{T}_{ca}=\mf{A}_{\mathbb{U}_{\mathbb{S}}}(\blds{\theta})\mf{P}\mf{A}_{\mathbb{U}_{\mathbb{S}}}(\blds{\theta})^H+\sigma^2$, $\beta > 0$ ensures that there is a margin between the smallest singular value of $\mf{A}_{\mathbb{U}_{\mathbb{S}}}(\blds{\theta})\mf{P}\mf{A}_{\mathbb{U}_{\mathbb{S}}}(\blds{\theta})^H$ and the $(S+1)^\text{th}$ singular value of $\mathbf{T}_{ca}$ (determined by the noise $\sigma$) as $p_{\min}\sigma_{S}^2(\mf{A}_{\mathbb{U}_{\mathbb{S}}}(\blds{\theta}))$ is a lower bound on $\sigma_S(\mf{T}_{ca})$. The following theorem relates the DOA estimation error in terms of matching distance to the covariance estimation error $\mathbf{E}_L$, provided the latter is upper bounded by a suitable quantity. 

\begin{thm}\label{Espiriterror}
Let $\mathbb{S}$ be a hole-free sparse linear array with $P$ sensors. Let $\mf{T}_{ca}\in \mathbb{C}^{M_{ca}+1 \times M_{ca}+1}$ be the coarray covariance matrix defined in \eqref{eqn:ToeplitzTca} and $\mf{\widehat{T}}_{ca}$ be its estimate given by \eqref{eqn:T_hat_est}. If assumption [\textbf{A3}] holds and the following conditions are satisfied: 
\begin{equation}\label{eqn:SNRcondition1}
  \beta>0 \quad \text{ and } \Vert \mathbf{E}_L \Vert_2 \leq C_S \beta
\end{equation}
then the matching distance error of ESPRIT algorithm satisfies
\begin{equation}\label{matcherror2}
\text{md}(\blds{\theta},\hat{\blds{\theta}}) \leq q \Vert \mathbf{E}_L \Vert_2
\end{equation}
where $\mf{E}_L$, $\beta$ are defined in \eqref{eqn:EL}, \eqref{eqn:beta}, $q= (C_S' \sqrt{M_{ca}+1})/(\beta\sigma_S(\mf{A}_{\mathbb{U}_{\mathbb{S}}}(\blds{\theta})))$. Quantities $C_S,C_S'$ ~are dependent only on $S$ which is assumed to be $O(1)$.
\end{thm}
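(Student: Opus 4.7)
The plan is to chain three perturbation arguments: $\mf{E}_L \rightsquigarrow \|\mf{\widehat{U}}-\mf{U}\|_2 \rightsquigarrow \|\mf{\widehat{\Psi}}-\mf{\Psi}\|_2 \rightsquigarrow \text{md}(\blds{\theta},\hat{\blds{\theta}})$. For the first link, recall from \eqref{Teigdecomp} that the top $S$ eigenvalues of $\mf{T}_{ca}$ are at least $p_{\min}\sigma_S^2(\mf{A}_{\mathbb{U}_{\mathbb{S}}}(\blds{\theta}))+\sigma^2$ while the remaining $M_{ca}+1-S$ eigenvalues all equal $\sigma^2$, so the spectral gap around the coarray signal subspace is precisely $\beta$. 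Choosing $C_S$ small enough (depending only on $S$) so that $\|\mf{E}_L\|_2\leq C_S\beta$ forces, via Weyl's inequality, the $S^{\text{th}}$ eigenvalue of $\mf{\widehat{T}}_{ca}$ to remain bounded away from its $(S{+}1)^{\text{st}}$; the Davis--Kahan $\sin\Theta$ theorem (equivalently Wedin's bound) then yields a unitary $\mf{O}\in\mathbb{C}^{S\times S}$ with $\|\mf{\widehat{U}}\mf{O}-\mf{U}\|_2 \lesssim \|\mf{E}_L\|_2/\beta$. By the basis-invariance property \cref{lem:ESP_inv}, the eigenvalues of $\mf{\widehat{\Psi}}$ are unchanged when we replace $\mf{\widehat{U}}$ by $\mf{\widehat{U}}\mf{O}$; from here on I therefore assume $\|\mf{\widehat{U}}_j-\mf{U}_j\|_2 \lesssim \|\mf{E}_L\|_2/\beta$ for the row-shifted blocks $j\in\{0,1\}$.

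For the second link, expand $\mf{\widehat{U}}_0^{\dagger}\mf{\widehat{U}}_1 - \mf{U}_0^{\dagger}\mf{U}_1$ using the standard pseudoinverse perturbation identity. Under the smallness condition the full-column-rank property of $\mf{U}_0$ is preserved by $\mf{\widehat{U}}_0$ and $\|\mf{\widehat{U}}_0^{\dagger}\|_2 \asymp \|\mf{U}_0^{\dagger}\|_2 = 1/\sigma_S(\mf{U}_0)$, yielding $\|\mf{\widehat{\Psi}}-\mf{\Psi}\|_2 \lesssim \|\mf{E}_L\|_2/(\beta\,\sigma_S^2(\mf{U}_0))$. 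For the third link, $\mf{\Psi} = \mf{Q}^{-1}\mf{D}\mf{Q}$ is diagonalizable with eigenvalues $\{e^{j\pi\sin\theta_i}\}$ on the unit circle, so a Bauer--Fike-type bound gives
\begin{equation*}
\max_j\min_{k\in\mathbb{Z}}|\hat{\omega}_{\pi(j)}-\omega_j+k| \;\lesssim\; \kappa(\mf{Q})\,\|\mf{\widehat{\Psi}}-\mf{\Psi}\|_2,
\end{equation*}
after translating eigenvalue distances on the unit circle into normalized-frequency distances; this translation is valid as long as the overall perturbation is below a fixed constant, which the smallness condition supplies.

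To collect factors, use that $\mf{U}=\mf{A}_{\mathbb{U}_{\mathbb{S}}}(\blds{\theta})\mf{Q}$ is semi-unitary, so $\mf{Q}^{-1}$ has the same singular values as $\mf{A}_{\mathbb{U}_{\mathbb{S}}}(\blds{\theta})$, giving $\kappa(\mf{Q})=\sigma_1(\mf{A}_{\mathbb{U}_{\mathbb{S}}}(\blds{\theta}))/\sigma_S(\mf{A}_{\mathbb{U}_{\mathbb{S}}}(\blds{\theta}))$. Each column of $\mf{A}_{\mathbb{U}_{\mathbb{S}}}(\blds{\theta})$ has unit-modulus entries, hence $\sigma_1(\mf{A}_{\mathbb{U}_{\mathbb{S}}}(\blds{\theta})) \leq \sqrt{S(M_{ca}+1)}=O(\sqrt{M_{ca}+1})$ under assumption \textbf{[A3]}. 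A Vandermonde row-removal argument, again leveraging $S=O(1)$, shows that $\sigma_S(\mf{V}_0)$ is comparable to $\sigma_S(\mf{A}_{\mathbb{U}_{\mathbb{S}}}(\blds{\theta}))$ up to an $S$-dependent constant, and combining with $\sigma_{\min}(\mf{Q}) = 1/\sigma_1(\mf{A}_{\mathbb{U}_{\mathbb{S}}}(\blds{\theta}))$ controls $\sigma_S(\mf{U}_0) = \sigma_S(\mf{V}_0\mf{Q})$. Threading the three bounds back together -- with the key observation that the $M_{ca}$-factors from $1/\sigma_S^2(\mf{U}_0)$ and from $\kappa(\mf{Q})$ partially cancel through the common factor $\sigma_1(\mf{A}_{\mathbb{U}_{\mathbb{S}}}(\blds{\theta}))$ -- delivers the claimed $q = C_S'\sqrt{M_{ca}+1}/(\beta\,\sigma_S(\mf{A}_{\mathbb{U}_{\mathbb{S}}}(\blds{\theta})))$.

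The main obstacle is precisely this last cancellation. A naive composition of Davis--Kahan, pseudoinverse perturbation, and Bauer--Fike over-counts the condition number of $\mf{Q}$ and produces a $(M_{ca}+1)^{3/2}$ scaling; obtaining the tighter $\sqrt{M_{ca}+1}$ scaling requires quantitatively exploiting the semi-unitarity of $\mf{U}$ so that $\|\mf{U}_0^{\dagger}\|_2$ and $\kappa(\mf{Q})$ are traded against one another rather than multiplied, and this hinges on assumption \textbf{[A3]} through the row-removal bound for Vandermonde matrices. A secondary subtlety is fixing the constant $C_S$ in the smallness hypothesis so that it simultaneously keeps $\mf{\widehat{U}}_0$ full rank, makes $\|\mf{\widehat{U}}_0^{\dagger}\|_2 \asymp \|\mf{U}_0^{\dagger}\|_2$, and ensures the final eigenvalue perturbation stays in the linearization regime on the unit circle.
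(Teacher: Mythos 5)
Your overall architecture coincides with the paper's: align the perturbed and unperturbed signal-subspace bases (the paper uses the canonical bases from the SVD of $\mf{U}^H\widehat{\mf{U}}$, which plays the role of your unitary $\mf{O}$), control the subspace angle via Wedin's theorem with the eigengap $\beta$, propagate to $\|\widehat{\mf{\Psi}}-\mf{\Psi}\|_2$ through a pseudoinverse perturbation bound, and finish with an eigenvalue-perturbation step. The first two links of your chain are essentially the paper's Lemmas in Appendix B.

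The genuine gap is exactly where you flag "the main obstacle": your factor accounting does not close, and the trading you invoke is asserted rather than executed. Composing your own intermediate bounds --- $\kappa(\mf{Q})=\sigma_1(\mf{A}_{\mathbb{U}_{\mathbb{S}}})/\sigma_S(\mf{A}_{\mathbb{U}_{\mathbb{S}}})$ from Bauer--Fike, and $\sigma_S(\mf{U}_0)\geq\sigma_S(\mf{V}_0)\sigma_{\min}(\mf{Q})=\sigma_S(\mf{V}_0)/\sigma_1(\mf{A}_{\mathbb{U}_{\mathbb{S}}})$ from row removal --- yields $\text{md}\lesssim \sigma_1^3(\mf{A}_{\mathbb{U}_{\mathbb{S}}})\sigma_S^{-3}(\mf{A}_{\mathbb{U}_{\mathbb{S}}})\beta^{-1}\Vert\mf{E}_L\Vert_2\lesssim (M_{ca}+1)^{3/2}\sigma_S^{-3}(\mf{A}_{\mathbb{U}_{\mathbb{S}}})\beta^{-1}\Vert\mf{E}_L\Vert_2$, which equals the claimed $q\Vert\mf{E}_L\Vert_2$ only when $\sigma_S^2(\mf{A}_{\mathbb{U}_{\mathbb{S}}})=\Theta(M_{ca}+1)$, i.e.\ in the well-separated regime; the theorem makes no such assumption. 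The paper closes this differently, and the ingredient you are missing is \cref{lem:U0} (Lemma~3 of \cite{WeilinEsprit}): for \emph{any} orthonormal basis of $\mathcal{R}(\mf{A}_{\mathbb{U}_{\mathbb{S}}}(\blds{\theta}))$ one has $\sigma_S^2(\mf{U}_0)\geq\max(1-S/\sigma_S^2(\mf{A}_{\mathbb{U}_{\mathbb{S}}}(\blds{\theta})),4^{-S})\geq 4^{-S}$, an absolute $S$-dependent constant that does not degrade with $M_{ca}$ or with the node separation. With this, the pseudoinverse-perturbation factor $\sigma_S^{-2}(\mf{U}_0^{(\text{c})})$ is absorbed into $C_S'$, and the single factor $\sqrt{M_{ca}+1}/\sigma_S(\mf{A}_{\mathbb{U}_{\mathbb{S}}}(\blds{\theta}))$ enters only once, through the bound $\text{md}(\blds{\theta},\hat{\blds{\theta}})\leq\pi S^{3/2}\sqrt{M_{ca}+1}\,\sigma_S^{-1}(\mf{A}_{\mathbb{U}_{\mathbb{S}}}(\blds{\theta}))\Vert\widehat{\mf{\Psi}}^{(\text{c})}-\mf{\Psi}^{(\text{c})}\Vert_2$ of Lemma~2 in \cite{WeilinEsprit}, which already packages your Bauer--Fike step together with the permutation matching and the unit-circle-to-frequency conversion. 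Without a proof of a $\sigma_S(\mf{U}_0)\gtrsim_S 1$ bound (or an equivalent mechanism), your route establishes a strictly weaker conclusion than \eqref{matcherror2}.
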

\vspace{-0.2cm}
\begin{proof}
See Appendix \ref{app:ESP}.
\end{proof}

The following Lemma obtains both lower and upper bounds on the spectral norm $\Vert \mf{R}_{\mf{y}}\Vert_2$ that are valid regardless of the array geometry. 
\begin{lem}\label{lem:Ry_up_bound}
Consider the covariance matrix $ \mf{R}_{\mf{y}}$ given by \eqref{eqn:Ry_gt}, where $\mathbb{S}$ is any (sparse) array. Given a fixed $S$, signal powers $\mf{p}$ and noise power $\sigma^2$, for all $\blds{\theta}$ the following holds:
\begin{equation}
\smash{p_{\min}P\leq \smash{\Vert \mf{R}_{\mf{y}}\Vert_2 \leq p_{\max}PS +\sigma^2.}} \label{eqn:Ry_bnd}
\end{equation}
\end{lem}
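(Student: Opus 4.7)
The plan is to prove the two bounds separately using standard matrix norm inequalities, exploiting the unit-modulus entries of the steering vectors.

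\textbf{Upper bound.} I would apply the triangle inequality to $\mathbf{R}_\mathbf{y} = \mathbf{A}_\mathbb{S}(\boldsymbol{\theta})\mathbf{P}\mathbf{A}_\mathbb{S}^H(\boldsymbol{\theta}) + \sigma^2 \mathbf{I}_P$, followed by submultiplicativity of the spectral norm:
\begin{equation*}
\|\mathbf{R}_\mathbf{y}\|_2 \le \|\mathbf{A}_\mathbb{S}(\boldsymbol{\theta})\|_2^2 \, \|\mathbf{P}\|_2 + \sigma^2 \le \|\mathbf{A}_\mathbb{S}(\boldsymbol{\theta})\|_F^2 \, p_{\max} + \sigma^2.
\end{equation*}
Since every entry of $\mathbf{A}_\mathbb{S}(\boldsymbol{\theta}) \in \mathbb{C}^{P\times S}$ is a complex exponential of unit modulus, $\|\mathbf{A}_\mathbb{S}(\boldsymbol{\theta})\|_F^2 = PS$, which yields the claimed upper bound $p_{\max}PS + \sigma^2$.

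\textbf{Lower bound.} I would use the variational characterization $\|\mathbf{R}_\mathbf{y}\|_2 \geq \mathbf{v}^H \mathbf{R}_\mathbf{y} \mathbf{v}$ for any unit vector $\mathbf{v}$, since $\mathbf{R}_\mathbf{y}$ is Hermitian positive semidefinite. Choose $\mathbf{v} = \mathbf{a}_\mathbb{S}(\theta_{k^\star})/\sqrt{P}$, where $k^\star$ is any fixed index (for concreteness, one corresponding to $p_{\min}$); note $\|\mathbf{a}_\mathbb{S}(\theta_{k^\star})\|_2^2 = P$ since the entries have unit modulus, so $\mathbf{v}$ is indeed a unit vector. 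Then
\begin{equation*}
\mathbf{v}^H \mathbf{R}_\mathbf{y} \mathbf{v} = \frac{1}{P}\sum_{j=1}^{S} p_j \bigl|\mathbf{a}_\mathbb{S}^H(\theta_j)\mathbf{a}_\mathbb{S}(\theta_{k^\star})\bigr|^2 + \sigma^2 \ge \frac{p_{k^\star}}{P}\bigl|\mathbf{a}_\mathbb{S}^H(\theta_{k^\star})\mathbf{a}_\mathbb{S}(\theta_{k^\star})\bigr|^2 + \sigma^2,
\end{equation*}
where the inequality follows from retaining only the $j = k^\star$ term in the nonnegative sum. Using $|\mathbf{a}_\mathbb{S}^H(\theta_{k^\star})\mathbf{a}_\mathbb{S}(\theta_{k^\star})|^2 = P^2$ and $p_{k^\star} \ge p_{\min}$, this gives $\mathbf{v}^H \mathbf{R}_\mathbf{y} \mathbf{v} \ge p_{\min} P + \sigma^2 \ge p_{\min} P$, which is the desired lower bound.

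No step here is particularly technical; the main point is simply to observe that the unit-modulus structure of the steering vectors makes $\|\mathbf{A}_\mathbb{S}\|_F^2 = PS$ and $\|\mathbf{a}_\mathbb{S}(\theta_k)\|_2^2 = P$ geometry-independent, which is precisely what makes the bounds hold uniformly over all sparse arrays $\mathbb{S}$ and all DOA configurations $\boldsymbol{\theta}$, as claimed.
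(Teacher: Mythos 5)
Your proof is correct. The upper bound is essentially identical to the paper's: both split off the $\sigma^2\mathbf{I}_P$ term and bound $\sigma_1^2(\mathbf{A}_{\mathbb{S}}(\boldsymbol{\theta})) \le \Vert\mathbf{A}_{\mathbb{S}}(\boldsymbol{\theta})\Vert_F^2 = PS$ using the unit-modulus entries. The lower bound, however, takes a genuinely different route. The paper argues spectrally: $\Vert \mathbf{R}_{\mathbf{y}}\Vert_2 \ge \sigma_1(\mathbf{A}\mathbf{P}\mathbf{A}^H) \ge p_{\min}\sigma_1^2(\mathbf{A})$, and then invokes the rank bound $\sigma_1^2(\mathbf{A}) \ge \Vert\mathbf{A}\Vert_F^2/S = P$, which relies on $\mathbf{A}$ having at most $S$ nonzero singular values. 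You instead exhibit an explicit witness for the Rayleigh quotient, namely a normalized steering vector $\mathbf{a}_{\mathbb{S}}(\theta_{k^\star})/\sqrt{P}$, and retain only the diagonal term $j=k^\star$ in the resulting nonnegative sum. Both arguments ultimately rest on the same structural fact (unit-modulus entries give $\Vert\mathbf{a}_{\mathbb{S}}(\theta_k)\Vert_2^2 = P$ uniformly over geometries), but yours avoids any reasoning about the rank of $\mathbf{A}$ and in fact delivers the marginally stronger conclusion $\Vert \mathbf{R}_{\mathbf{y}}\Vert_2 \ge p_{\min}P + \sigma^2$, whereas the paper's discards the noise contribution at the first step. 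The paper's version is slightly more compact; yours is slightly more informative. Either suffices for how the lemma is used downstream.
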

\begin{proof}
For any $\mathbb{S}$, we can bound the spectral norm $\Vert \mf{R}_{\mf{y}}\Vert_2$ as:
\begin{align*}
    \Vert \mf{R}_{\mf{y}}\Vert_2 &=\sigma_1(\mathbf{A}_{\mathbb{S}}(\blds{\theta})\mf{P}\mathbf{A}_{\mathbb{S}}(\blds{\theta})^H)+\sigma^2\leq p_{\max}\sigma_1(\mathbf{A}_{\mathbb{S}}(\blds{\theta}))^2+\sigma^2\\
    &\leq p_{\max}PS +\sigma^2 
\end{align*}
where the last inequality follows from the fact that $\sigma_1(\mathbf{A}_{\mathbb{S}}(\blds{\theta}))^2\leq \Vert\mathbf{A}_{\mathbb{S}}(\blds{\theta})\Vert_F^2 = PS $. Similarly, we can lower bound the norm
$\Vert \mf{R}_{\mf{y}}\Vert_2 \geq \sigma_1(\mathbf{A}_{\mathbb{S}}(\blds{\theta})\mf{P}\mathbf{A}_{\mathbb{S}}(\blds{\theta})^H)\geq p_{\min}\sigma_1^2(\mf{A}_{\mathbb{S}}(\blds{\theta}))\geq p_{\min}\Vert \mf{A}_{\mathbb{S}}(\blds{\theta})\Vert_F^2/S = p_{\min}P$.
\end{proof}
Combining \cref{thm:coverr} and \ref{Espiriterror}, we next present a sufficient condition on the number ($L$) of snapshots in terms of the model parameters (array geometry, SNR and source configuration) that allows us to bound the matching distance error by a prescribed $\epsilon$ with probability at least $1-\delta$.
\begin{thm}\label{thm:main_coesp}
Consider the measurement model \eqref{measurementmodel}, where $\mathbb{S}$ is a hole-free sparse array. Suppose $\beta>0$ and the statistical assumptions [\textbf{A1-A3}] hold. Then for any $0<\delta <1$ and $\epsilon >0$, the matching distance error satisfies $\text{md}(\blds{\theta},\hat{\blds{\theta}}) \leq \min(\epsilon,C_S\beta q)$  with probability at least $1-\delta$, provided 
{  \begin{equation}
   L\!\geq\!c_3\ln\bigg(\frac{8M_{ca}}{\delta}\bigg) \max\!\left(\frac{q^2_1 \Delta(\mathbb{S})}{c_2\epsilon^2},\!\frac{q_1 \sqrt{\Delta(\mathbb{S})}}{\epsilon},\!\frac{L_0^2}{c_2},\!L_0\!\right)\!.\label{eqn:l_bound}
\end{equation}}
Here $q_1=q\|\mf{R}_{\mf{y}}\|_2,L_0= \| \mathbf{R_{y}} \|_2 \sqrt{\Delta(\mathbb{S})}/(C_S\beta)
$ and $c_2,c_3$ are universal constants.
\end{thm}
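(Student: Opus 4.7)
The plan is to chain together Theorem \ref{Espiriterror}, which deterministically bounds the matching distance in terms of $\|\mathbf{E}_L\|_2$, with the probabilistic tail bound on $\|\mathbf{E}_L\|_2$ from Theorem \ref{thm:coverr}, then convert the resulting exponential tail into a sample complexity condition on $L$.

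First, I would translate the desired matching distance bound into a bound on the covariance estimation error. Theorem \ref{Espiriterror} requires two conditions: (i) the eigen gap condition $\beta > 0$ (already assumed), and (ii) the ``small perturbation'' condition $\|\mathbf{E}_L\|_2 \le C_S \beta$. Under these, we have $\text{md}(\blds{\theta},\hat{\blds{\theta}}) \le q \|\mathbf{E}_L\|_2$. Hence to guarantee $\text{md}(\blds{\theta},\hat{\blds{\theta}}) \le \min(\epsilon, C_S\beta q)$, it suffices to require
\[
\|\mathbf{E}_L\|_2 \le \epsilon' := \min\!\left(\frac{\epsilon}{q},\, C_S\beta\right),
\]
since the $C_S\beta$ branch simultaneously enforces the small-perturbation hypothesis and yields $q\|\mathbf{E}_L\|_2 \le q C_S \beta$, while the $\epsilon/q$ branch yields $q\|\mathbf{E}_L\|_2 \le \epsilon$.

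Next, I would invoke Theorem \ref{thm:coverr} with error tolerance $\epsilon'$ to obtain
\[
\Pr\!\left(\|\mathbf{E}_L\|_2 \ge \epsilon'\right) \le 8 M_{ca}\exp\!\left[-c_1 L\, \min\!\left(\tfrac{c_2 \epsilon'^2}{\|\mathbf{R}_y\|_2^2\Delta(\mathbb{S})},\tfrac{\epsilon'}{\|\mathbf{R}_y\|_2\sqrt{\Delta(\mathbb{S})}}\right)\right].
\]
Forcing the right-hand side to be at most $\delta$ and solving for $L$ gives the sufficient condition
\[
L \ge \tfrac{1}{c_1}\ln\!\bigl(\tfrac{8M_{ca}}{\delta}\bigr)\max\!\left(\tfrac{\|\mathbf{R}_y\|_2^2\Delta(\mathbb{S})}{c_2\epsilon'^2},\tfrac{\|\mathbf{R}_y\|_2\sqrt{\Delta(\mathbb{S})}}{\epsilon'}\right).
\]

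Finally, I would unfold $\epsilon' = \min(\epsilon/q, C_S\beta)$ using the elementary identity $1/\min(a,b) = \max(1/a,1/b)$ applied inside each of the two terms. This replaces each fraction by the maximum of two expressions: the ``$\epsilon/q$'' substitution yields the terms $q_1^2\Delta(\mathbb{S})/(c_2\epsilon^2)$ and $q_1\sqrt{\Delta(\mathbb{S})}/\epsilon$ (using $q_1 = q\|\mathbf{R}_y\|_2$), while the ``$C_S\beta$'' substitution yields $L_0^2/c_2$ and $L_0$ (using the definition $L_0 = \|\mathbf{R}_y\|_2\sqrt{\Delta(\mathbb{S})}/(C_S\beta)$). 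Taking the overall maximum of all four quantities and absorbing $1/c_1$ into a universal constant $c_3$ recovers the bound \eqref{eqn:l_bound}.

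The only nontrivial bookkeeping is keeping track of the min/max inversion when substituting $\epsilon'$, and verifying that the $L_0$-terms indeed correspond to enforcing both the eigen-gap ``small perturbation'' requirement $\|\mathbf{E}_L\|_2 \le C_S\beta$ needed to invoke Theorem \ref{Espiriterror}; this is the main (and minor) obstacle, after which the result follows directly from the two preceding theorems.
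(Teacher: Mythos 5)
Your proposal is correct and follows essentially the same route as the paper: chain the deterministic bound of Theorem \ref{Espiriterror} with the tail bound of Theorem \ref{thm:coverr} at the threshold $\min(\epsilon/q, C_S\beta)$, then invert the exponential to get the four-term maximum (the paper merely phrases the $\min$ as an explicit two-case analysis on whether $\epsilon \leq C_S\beta q$, which is equivalent to your unified $\epsilon'$ bookkeeping).
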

\vspace{-0.2cm}
\begin{proof}
See Appendix \ref{app:prob_coesp}
\end{proof}

\begin{cor}\label{cor:main_coesp}
Consider the measurement model \eqref{measurementmodel}, where $\mathbb{S}$ is a hole-free sparse array. Suppose $\beta>0$ and the statistical assumptions [\textbf{A1-A3}] hold. Then for any $0<\delta < 1$ and $0<\epsilon  \leq q\min( C_S \beta, p_{\min}P\sqrt{\Delta(\mathbb{S})}/c_2 )$, the matching distance error satisfies $\text{md}(\blds{\theta},\hat{\blds{\theta}}) \leq \epsilon$  with probability at least $1-\delta$ provided
  \begin{align}
   &L \geq c_3\ln\left({8M_{ca}}/{\delta}\right) {q^2_1 \Delta(\mathbb{S})}/{(c_2\epsilon^2)},\label{eqn:l_bound2}
\end{align} 
where $q_1,L_0
$,$c_2,c_3$ are given in \cref{thm:main_coesp}.
\end{cor}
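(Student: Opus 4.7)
The plan is to derive the corollary as a direct specialization of \cref{thm:main_coesp} under the restricted range of $\epsilon$. Two reductions must be verified: first, that the min in the conclusion of \cref{thm:main_coesp} collapses to $\epsilon$, so that $\text{md}(\blds{\theta},\hat{\blds{\theta}})\leq \min(\epsilon, C_S\beta q)$ becomes $\text{md}(\blds{\theta},\hat{\blds{\theta}})\leq \epsilon$; and second, that the max inside the sample-complexity bound \eqref{eqn:l_bound} collapses to its first entry, yielding the simpler bound \eqref{eqn:l_bound2}. The first reduction is immediate from the hypothesis $\epsilon \leq q C_S\beta$.

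The bulk of the work is the second reduction, which amounts to three pairwise inequalities showing that $q_1^2 \Delta(\mathbb{S})/(c_2\epsilon^2)$ dominates each of the other three terms in the max. I would carry these out in order. (i) First term versus $q_1\sqrt{\Delta(\mathbb{S})}/\epsilon$: after cancellation this reduces to $\epsilon \leq q_1\sqrt{\Delta(\mathbb{S})}/c_2 = q\|\mf{R}_{\mf{y}}\|_2\sqrt{\Delta(\mathbb{S})}/c_2$; invoking the lower bound $\|\mf{R}_{\mf{y}}\|_2\geq p_{\min}P$ from \cref{lem:Ry_up_bound} shows this is implied by the hypothesis $\epsilon \leq q p_{\min}P\sqrt{\Delta(\mathbb{S})}/c_2$. (ii) First term versus $L_0^2/c_2$: substituting $L_0=\|\mf{R}_{\mf{y}}\|_2\sqrt{\Delta(\mathbb{S})}/(C_S\beta)$ and $q_1=q\|\mf{R}_{\mf{y}}\|_2$, the inequality simplifies cleanly to $\epsilon \leq q C_S\beta$, a hypothesis. (iii) First term versus $L_0$: cross-multiplication and the same substitutions yield the requirement $\epsilon^2 \leq q^2 C_S\beta\,\|\mf{R}_{\mf{y}}\|_2\sqrt{\Delta(\mathbb{S})}/c_2$; multiplying the two halves of the $\epsilon$-restriction together gives $\epsilon^2 \leq q^2 C_S\beta\, p_{\min}P\sqrt{\Delta(\mathbb{S})}/c_2$, and again $\|\mf{R}_{\mf{y}}\|_2\geq p_{\min}P$ closes the gap.

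I do not anticipate any substantive obstacle: the corollary is essentially an algebraic consequence of \cref{thm:main_coesp} together with \cref{lem:Ry_up_bound}. The only slightly non-obvious step is (iii), where a single side of the $\epsilon$-restriction is not enough and one must multiply the two constraints before invoking the lower bound on $\|\mf{R}_{\mf{y}}\|_2$. Once the three dominance checks are in place, the right-hand side of \eqref{eqn:l_bound} equals $c_3\ln(8M_{ca}/\delta)\,q_1^2\Delta(\mathbb{S})/(c_2\epsilon^2)$, which is exactly \eqref{eqn:l_bound2}, and the stated probability guarantee on the matching distance error follows directly from \cref{thm:main_coesp}.
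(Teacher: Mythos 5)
Your proposal is correct and follows essentially the same route as the paper: both specialize \cref{thm:main_coesp} by using $\epsilon \leq qC_S\beta$ to collapse the min, and use the lower bound $\|\mf{R}_{\mf{y}}\|_2 \geq p_{\min}P$ from \cref{lem:Ry_up_bound} together with the two halves of the $\epsilon$-restriction to show the first term dominates the max in \eqref{eqn:l_bound}. The only cosmetic difference is in the $L_0$ comparison, where the paper chains $L_0 \leq q_1\sqrt{\Delta(\mathbb{S})}/\epsilon \leq q_1^2\Delta(\mathbb{S})/(c_2\epsilon^2)$ while you multiply the two constraints directly; these are algebraically equivalent.
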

\vspace{-0.2cm}
\begin{proof}
Using the lower bound on $\Vert \mf{R_y}\Vert_2$ from \cref{lem:Ry_up_bound}, we can see $\epsilon \leq \min(C_S\beta q, q_1\sqrt{\Delta(\mathbb{S})}/c_2)$. Since $\beta\geq \epsilon/(C_Sq)$, this implies 
$ L_0 \leq {q_1 \sqrt{\Delta(\mathbb{S})}}/{\epsilon}$. This inequality also implies $L_0^2/c_2 \leq  {q^2_1 \Delta(\mathbb{S})}/{(c_2\epsilon^2)}$. Using $\epsilon \leq q_1\sqrt{\Delta(\mathbb{S})}/c_2$, we can conclude that $L_0\leq ({q_1 \sqrt{\Delta(\mathbb{S})}}/{\epsilon^2})(q_1\sqrt{\Delta(\mathbb{S})}/c_2)=\frac{q_1^2\Delta(\mathbb{S})}{c_2\epsilon^2}$. Therefore, \eqref{eqn:l_bound2} implies \eqref{eqn:l_bound} since $\max ( \frac{q^2_1 \Delta(\mathbb{S})}{c_2\epsilon^2}, \frac{q_1 \sqrt{\Delta(\mathbb{S})}}{\epsilon},\frac{L_0^2}{c_2},L_0)\!=\!\frac{q^2_1 \Delta(\mathbb{S})}{c_2\epsilon^2}$, and the proof is completed.
\end{proof}
\textbf{Role of redundancy coefficient {in determining Temporal Sample Complexity}:}
\cref{cor:main_coesp} indicates that if the number of snapshots grows proportional to the redundancy coefficient $\Delta(\mathbb{S})$, then it is possible to bound the matching distance error by an arbitrarily small $\epsilon$. Recall that $\Delta(\mathbb{S})$ is a function of the redundancy pattern of $\mathbb{S}$ and from \cref{lem:DS_nested} we have $\Delta(\mathbb{S}_{\text{ula}})=\Theta(\ln(P))$ and $\Delta(\mathbb{S}_{\text{nest}})=\Theta(P^2)$. Based on this, at a cursory glance, one may be tempted to conclude from \eqref{eqn:l_bound2} that for the same number of sensors, the snapshot requirement for the nested array is significantly larger than for the ULA. This is also consistent with an existing misconception that co-array based processing requires a large number of snapshots. However, in reality the sample complexity is also controlled by the interaction of $\Delta(\mathbb{S})$ with other geometry dependent terms in \eqref{eqn:l_bound2} such as $q_1=q\Vert\mf{R}_{\mf{y}}\Vert_2$,{ which in turn depend on both the physical array and coarray size}. {In the next section, 
we clarify this misconception regarding the seemingly higher snapshot requirement of nested arrays in the setting $S=O(1)$.}

\textbf{Spatiotemporal trade-offs:} The snapshot requirement in \cref{cor:main_coesp} is inversely proportional to $\beta$ (since $q\propto \frac{1}{\beta}$). If the array geometry and source configuration are kept fixed and we increase the SNR (either by increasing $p_{\min}$ or decreasing noise power $\sigma$), \cref{cor:main_coesp} suggests that it is possible to achieve the same probability of error with fewer snapshots. Our simulations also are consistent with this theoretical prediction. This SNR and geometry dependent snapshot characterization is another novel contribution of our work.  

\section{A Closer look at the Separation Condition for Super-Resolution with Sparse Arrays}\label{sec:separation}

In order to understand the behavior of the smallest non-zero singular value $\sigma_S(\mf{A}_{\mathbb{U}_{\mathbb{S}}}(\blds{\theta}))$, we consider the notion of minimum separation \cite{WeilinEsprit}:
\begin{equation}\label{mindistancedef}
 \Delta_{\min}(\blds{\theta}) = \min_{\substack{i,j\in \Omega \\i\neq j }} \min_{k \in \mathbb{Z}} \Big\vert \omega_i - \omega_j + k \Big\vert 
 \end{equation}
 where $\omega_i$ is the normalized spatial frequency corresponding to direction $\theta_i$.
By definition, for all $\blds{\theta}$ we have $0 \leq \Delta_{\min}(\blds{\theta}) \leq 1/2$. Instead of analyzing an arbitrary source configuration $\blds{\theta}$, one can obtain a more interpretable condition by representing \eqref{eqn:SNRcondition1} as a function of the minimum separation. The source configurations where $\Delta_{\min}(\blds{\theta})$ is larger than some threshold inversely proportional to $M_{ca}+1$ (i.e. $\Delta_{\min}(\blds{\theta})>\frac{\gamma}{M_{\text{ca}}+1}, \gamma>1$) will be referred to as the ``well-separated" regime. We will inspect what this means for specific array geometries such as the ULA and nested array, and obtain tight bounds on $L$. 
\vspace{-0.3cm}
\subsection{The ``Well-Separated" Case}
In this section, we turn our attention to
how the eigen gap condition can be utilized to obtain sufficient conditions on SNR for different array geometries in the ``well-separated" regime. Let $\mf{V} \in \mathbb{C}^{K\times S}$ be a Vandermonde matrix, with
$
    [\mf{V}]_{m,n}=z_n^{m-1}
$
where $\{z_n\}_{n=1}^{S}$ are the so called ``nodes" of the matrix. We begin by summarizing results from \cite{moitra2015super,batenkov2020conditioning,li2021stable,aubel2019vandermonde} which characterize the minimum singular value of a Vandermonde matrix in the well-separated regime. The following Lemma follows from \cite[Eq. (32)]{aubel2019vandermonde}  which is an intermediate result from  \cite[Theorem 1]{aubel2019vandermonde}. 
\begin{lem}\label{lem:vand_well_sep}
Let $\mf{V}(\blds{\alpha}) \in \mathbb{C}^{K\times S}$ be a Vandermonde matrix with $z_n=e^{j2\pi \alpha_n}$ for $1\leq n \leq S$ and $S\leq K$. If $\alpha_i\in[0,1)$ are all distinct and satisfy: 
\ben
\min_{\substack{i,j\in \Omega \\i\neq j }} \min_{k \in \mathbb{Z}} \Big\vert \alpha_i-\alpha_j + k \Big\vert \geq \frac{\gamma}{K}\label{eqn:alpha_sep}
\een
for some constant $\gamma>1$, then the following holds:
\ben\label{lem:well_sep}
\sigma_S(\mf{V}(\blds{\alpha}))^2 \geq {K}/{C'},\text{ where } C' := {\gamma}/{(\gamma-1)}.
\een
\end{lem}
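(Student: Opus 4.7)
My plan is to reduce the bound on $\sigma_S(\mf{V}(\blds{\alpha}))$ to a spectral estimate on the off-diagonal part of the Gram matrix $\mf{V}(\blds{\alpha})^H \mf{V}(\blds{\alpha})$, and then to invoke the intermediate Dirichlet-kernel estimate of \cite[Eq.~(32)]{aubel2019vandermonde}. First I would compute the Gram matrix entries: the diagonal entries are $[\mf{V}^H \mf{V}]_{ii} = K$, and for $i \neq j$ the off-diagonal entries are Dirichlet kernels
\begin{equation*}
[\mf{V}^H \mf{V}]_{ij} = \sum_{m=0}^{K-1} e^{j 2\pi m (\alpha_j - \alpha_i)} = e^{j\pi (K-1)(\alpha_j - \alpha_i)} \cdot \frac{\sin(\pi K (\alpha_j - \alpha_i))}{\sin(\pi (\alpha_j - \alpha_i))}.
\end{equation*}
Writing $\mf{V}^H \mf{V} = K \mf{I} + \mf{E}$ with $\mf{E}$ zero on the diagonal, Weyl's inequality gives $\sigma_S(\mf{V})^2 = \lambda_{\min}(\mf{V}^H \mf{V}) \geq K - \|\mf{E}\|_2$, so the task reduces to showing $\|\mf{E}\|_2 \leq K/\gamma$.

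To control $\|\mf{E}\|_2$, the key entrywise bound is $|[\mf{E}]_{ij}| \leq 1/|\sin(\pi(\alpha_i - \alpha_j))|$, obtained by using $|\sin(\pi K(\alpha_j - \alpha_i))| \leq 1$. This leaves the problem of uniformly bounding $|\mf{u}^H \mf{E} \mf{u}|$ over unit vectors $\mf{u}$, where each off-diagonal weight decays like the inverse sine of the torus distance between pairs of nodes. A naive row-sum (Gershgorin) argument combined with $|\sin(\pi x)| \geq 2|x|$ for $|x| \leq 1/2$ and the separation $\geq \gamma/K$ only yields $\|\mf{E}\|_2 \lesssim K \ln(S)/\gamma$, losing a harmonic factor in $S$. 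The sharp bound $\|\mf{E}\|_2 \leq K/\gamma$ is precisely the Hilbert/Selberg-type inequality on the torus, which exploits the oscillatory cancellation inside the Dirichlet kernels; this is the intermediate estimate established in the proof of \cite[Theorem~1]{aubel2019vandermonde}, which I would apply directly with separation $\gamma/K$.

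Combining the two pieces gives $\sigma_S(\mf{V}(\blds{\alpha}))^2 \geq K - K/\gamma = K(\gamma - 1)/\gamma = K/C'$ with $C' = \gamma/(\gamma - 1)$, as claimed. The main technical obstacle --- and the reason the proof cannot be completed by elementary means --- is the sharp off-diagonal estimate $\|\mf{E}\|_2 \leq K/\gamma$, since Gershgorin and Schur-test bounds fall short by a factor logarithmic in $S$. Because the heavy lifting has already been carried out in \cite{aubel2019vandermonde}, the role of this lemma is essentially to package that intermediate inequality in the normalized form needed for the Coarray ESPRIT analysis in the subsequent subsection.
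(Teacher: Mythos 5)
Your proposal is correct and follows essentially the same route as the paper, which gives no independent proof but simply cites the intermediate estimate \cite[Eq.~(32)]{aubel2019vandermonde} from the proof of \cite[Theorem~1]{aubel2019vandermonde}. Your added scaffolding --- the Gram-matrix decomposition $\mf{V}^H\mf{V} = K\mf{I} + \mf{E}$, Weyl's inequality, and the observation that the sharp bound $\Vert\mf{E}\Vert_2 \leq K/\gamma$ requires the Hilbert/Selberg-type cancellation rather than a Gershgorin row-sum --- accurately reflects how that cited result is obtained and where the real difficulty lies.
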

From \cref{lem:vand_well_sep}, for $\mathbb{S}=\mathbb{S}_{\text{ula}}$ if the source configurations $\blds{\theta}$ satisfies $\Delta_{\min}(\blds{\theta}) \geq \frac{\gamma}{P}$ for some $\gamma>1$ and $S\leq P$ then we have the following lower bound:
\begin{align}
\sigma_S(\mf{A}_{\mathbb{U}_{\mathbb{S}}}(\blds{\theta}))^2 \geq P/C' \label{eqn:ula_sing}
\end{align}
In the following Proposition, we apply \cref{lem:well_sep} to characterize lower bounds on $\sigma_S(\mf{A}_{\mathbb{U}_{\mathbb{S}}})$ for the nested array.
\begin{prop}[Well-Separated]\label{prop:sing}
    Let $\mathbb{S}=\mathbb{S}^{(N_1,N_2)}_{\text{nest}}$ be a nested array with $N_1=\lceil P/2\rceil$ and $N_2=\lfloor P/2\rfloor$ with $P\geq 3$. Suppose $\Delta_{\min}(\blds{\theta}) \geq \frac{5 \gamma}{P^2}$ for some $\gamma>1$ and $S \leq P^2/5$. Then, the following lower bound holds:
    \begin{align}
        \sigma_S(\mf{A}_{\mathbb{U}_{\mathbb{S}}}(\blds{\theta}))^2 \geq P^2/C_n', \text{ where } C_n'={5\gamma}/{(\gamma-1)}.
    \end{align}
\end{prop}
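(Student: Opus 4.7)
The plan is to reduce Proposition 1 to a direct application of \cref{lem:vand_well_sep}. Since $\mathbb{S}^{(N_1,N_2)}_{\text{nest}}$ is hole-free, the contiguous segment $\mathbb{U}_{\mathbb{S}}$ is a ULA with $M_{\text{ca}}+1 = N_2(N_1+1)$ elements, so $\mf{A}_{\mathbb{U}_{\mathbb{S}}}(\blds{\theta})$ is (up to an unmodulus-$1$ column scaling that does not affect singular values) a Vandermonde matrix with nodes $z_n = e^{j\pi \sin(\theta_n)} = e^{j2\pi\omega_n}$ and $K := M_{\text{ca}}+1$ rows. Thus I can invoke \cref{lem:vand_well_sep} directly, provided I verify the two hypotheses (minimum separation and $S \leq K$) in the nested-array setting.

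First, I would invoke the nested array bound \eqref{eqn:M_ca_bound}, namely $P^2/5 \leq M_{\text{ca}}+1 \leq P^2$ for $P\geq 3$, which is the only geometry-specific fact needed. This immediately yields $S \leq P^2/5 \leq M_{\text{ca}}+1 = K$, satisfying the rank condition of \cref{lem:vand_well_sep}. Next, I would translate the assumed separation $\Delta_{\min}(\blds{\theta}) \geq 5\gamma/P^2$ into the form required by \eqref{eqn:alpha_sep}: since $M_{\text{ca}}+1 \leq P^2$, we have $\gamma/K = \gamma/(M_{\text{ca}}+1) \geq \gamma/P^2$, and so
\begin{equation*}
\Delta_{\min}(\blds{\theta}) \geq \frac{5\gamma}{P^2} \geq \frac{\gamma}{M_{\text{ca}}+1} = \frac{\gamma}{K},
\end{equation*}
which is precisely the minimum-separation hypothesis with the same $\gamma>1$.

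With both hypotheses verified, \cref{lem:vand_well_sep} gives $\sigma_S(\mf{A}_{\mathbb{U}_{\mathbb{S}}}(\blds{\theta}))^2 \geq K/C' = (M_{\text{ca}}+1)/C'$, where $C' = \gamma/(\gamma-1)$. Applying the lower bound $M_{\text{ca}}+1 \geq P^2/5$ from \eqref{eqn:M_ca_bound} once more yields
\begin{equation*}
\sigma_S(\mf{A}_{\mathbb{U}_{\mathbb{S}}}(\blds{\theta}))^2 \geq \frac{P^2}{5C'} = \frac{P^2}{C_n'}, \qquad C_n' := \frac{5\gamma}{\gamma-1},
\end{equation*}
which is the desired bound. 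There is no real obstacle here: the statement is essentially an instantiation of \cref{lem:vand_well_sep} to the coarray ULA of a nested array, and the factor of $5$ in both the separation threshold and in $C_n'$ is exactly the price paid for replacing the exact length $M_{\text{ca}}+1$ with the loose lower bound $P^2/5$. The only care required is to use \eqref{eqn:M_ca_bound} in the correct direction when checking the separation hypothesis (upper bound on $M_{\text{ca}}+1$) versus when extracting the final lower bound (lower bound on $M_{\text{ca}}+1$).
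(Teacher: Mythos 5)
Your overall route is exactly the paper's: reduce to \cref{lem:vand_well_sep} applied to the coarray ULA of length $M_{\text{ca}}+1$ and then convert to $P^2$ via \eqref{eqn:M_ca_bound}. However, there is a sign-of-inequality error in your verification of the separation hypothesis. The step you need is $\frac{5\gamma}{P^2} \geq \frac{\gamma}{M_{\text{ca}}+1}$, which is equivalent to $M_{\text{ca}}+1 \geq P^2/5$, i.e., it requires the \emph{lower} bound in \eqref{eqn:M_ca_bound}. You instead justify it from the upper bound $M_{\text{ca}}+1 \leq P^2$, which only gives $\frac{\gamma}{M_{\text{ca}}+1} \geq \frac{\gamma}{P^2}$ --- an inequality pointing the wrong way and not sufficient to conclude $\frac{5\gamma}{P^2} \geq \frac{\gamma}{M_{\text{ca}}+1}$. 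Your closing remark compounds this by asserting that the separation check uses the upper bound on $M_{\text{ca}}+1$; in fact every use of \eqref{eqn:M_ca_bound} in this proposition (the rank condition $S\leq M_{\text{ca}}+1$, the separation check, and the final conversion of $(M_{\text{ca}}+1)/C'$ into $P^2/(5C')$) relies on the lower bound $M_{\text{ca}}+1\geq P^2/5$; the upper bound is never needed here. The claimed inequalities are all true and the proof is easily repaired by citing the correct direction, so this is a flaw in justification rather than in the result, but as written the separation step does not follow from the premise you give.
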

\begin{proof}
For the nested array  with $N_1=\lceil P/2\rceil$ and $N_2=\lfloor P/2\rfloor$, from \eqref{eqn:M_ca_bound} we have $M_{ca}+1 \geq \frac{P^2}{5}$. Hence, $\Delta_{\min}(\blds{\theta}) \geq \frac{5\gamma}{P^2}$ implies $\Delta_{\min}(\blds{\theta}) \geq \frac{\gamma}{M_{\text{ca}}+1}$. Therefore, the condition on $\Delta_{\min}(\blds{\theta})$ in \cref{lem:vand_well_sep} holds and we have the desired lower bound:
    $ \sigma_S(\mf{A}_{\mathbb{U}_{\mathbb{S}}}(\blds{\theta}))^2 \geq \frac{M_{\text{ca}}+1}{C'} \geq (\frac{\gamma-1}{\gamma})\frac{P^2}{5}=\frac{P^2}{C_n'}.
    $
\end{proof}
 \cref{prop:sing} shows that for a nested array, the sources are well-separated if $\Delta_{\min} (\blds{\theta}) \geq 5\gamma/P^2$ and in this case, $\sigma_S(\mf{A}_{\mathbb{U}_{\mathbb{S}}}(\blds{\theta}))$ grows as $\Omega(P)$, owing to the the larger difference coarray of a nested array. 

 In order to highlight the dependence of sample complexity only on key model parameters, we define quantities to combine parameters that are held fixed (such as $S,p_{\min},p_{\max},\sigma$): 
 \begin{align}
 C_{\text{ula}}(S,\sigma,p_{\max}) &:=8C_S^{'2}C^{'3}\frac{c_3}{c_2} (S+\frac{\sigma^2}{p_{\max}})^2 \label{eqn:C_ula}\\
 C_{\text{nest}}(S,\sigma,p_{\max}) &:=4C_S^{'2}C_n^{'3}\frac{c_3}{c_2}(S + \frac{\sigma^2}{p_{\max}})^2 \label{eqn:C_nest}
 \end{align}
 where $C',C_n'$ are universal constants and $C_S$ defined in \cref{Espiriterror} is dependent only on $S$. Using \cref{prop:sing}, we now specialize \cref{cor:main_coesp} for the ULA and nested array.

\begin{thm}\label{thm:well_sep_ula}
Let $\mathbb{S}=\mathbb{S}_{\text{ula}}$ be a ULA with $P$ sensors. Suppose the minimum angular separation between the sources, and the SNR satisfy the following conditions {for some $\gamma>1$:} 
\be
\Delta_{\min}(\blds{\theta})\geq \gamma/P, \quad p_{\min}/{\sigma^2} > {2C'}/{P}, \text{ where }C'= \frac{\gamma}{\gamma-1}.
\ee
 Under assumptions [\textbf{A1-A3}], for any $0<\delta<1$ and $0<\epsilon \leq C_1(S):=C_SC_S'$, $\text{md}(\blds{\theta},\widehat{\blds{\theta}}) \leq \epsilon$ is satisfied with probability at least $1-\delta$, {provided $P\geq 3$ and} 
\begin{align}
    L \geq \frac{{C}_{\text{ula}}(S,\sigma,p_{\max})}{\epsilon^2}\left(\frac{p_{\max}}{p_{\min}}\right)^2\left(\ln \left(\frac{8P}{\delta} \right)\right)^{2}. \label{eqn:snap_bound}
\end{align}
\end{thm}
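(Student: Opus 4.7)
The plan is to specialize Corollary \ref{cor:main_coesp} to the ULA geometry by instantiating each of the model-dependent quantities ($\sigma_S(\mathbf{A}_{\mathbb{U}_{\mathbb{S}}})$, $\beta$, $\Delta(\mathbb{S})$, $M_{ca}$, $\Vert \mathbf{R}_\mathbf{y}\Vert_2$, $q$, $q_1$) and then collecting terms. Since the ULA is the degenerate nested array $\mathbb{S}^{(P-1,1)}$, we have $M_{ca}+1=P$ and, by Case I of Lemma \ref{lem:DS_nested}, $\Delta(\mathbb{S}_{\text{ula}})\le 2\ln P$ for $P\geq 3$.

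First I would use the separation hypothesis together with Lemma \ref{lem:vand_well_sep} (in the form of \eqref{eqn:ula_sing}) to deduce $\sigma_S^2(\mathbf{A}_{\mathbb{U}_{\mathbb{S}}}(\bm{\theta}))\geq P/C'$, and then combine this with the SNR hypothesis to control $\beta$: because $p_{\min}/\sigma^2>2C'/P$, we get $p_{\min}P/C'>2\sigma^2$, hence
\[
\beta \;=\; p_{\min}\sigma_S^2(\mathbf{A}_{\mathbb{U}_{\mathbb{S}}})-\sigma^2 \;\geq\; \tfrac{p_{\min}P}{C'}-\tfrac{p_{\min}P}{2C'} \;=\; \tfrac{p_{\min}P}{2C'}>0,
\]
so the eigen-gap condition of Corollary \ref{cor:main_coesp} holds. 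I would also note the trivial upper bound $\sigma_S^2(\mathbf{A}_{\mathbb{U}_{\mathbb{S}}})\leq \Vert \mathbf{A}_{\mathbb{U}_{\mathbb{S}}}\Vert_F^2/S=M_{ca}+1=P$, which gives $q = C_S'\sqrt{P}/(\beta\sigma_S(\mathbf{A}_{\mathbb{U}_{\mathbb{S}}}))\geq C_S'/\beta$ and, combined with the lower bound on $\sigma_S$, the matching upper bound $q\leq 2C_S'{C'}^{3/2}/(p_{\min}P)$.

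Next I would verify the admissibility of $\epsilon$ in Corollary \ref{cor:main_coesp}. The bound $\sigma_S(\mathbf{A}_{\mathbb{U}_{\mathbb{S}}})\leq \sqrt{P}$ gives $qC_S\beta\geq C_SC_S'=C_1(S)$, so the hypothesis $\epsilon\leq C_1(S)$ implies $\epsilon\leq qC_S\beta$; a similar trivial manipulation handles the second factor $q p_{\min}P\sqrt{\Delta(\mathbb{S})}/c_2$ (which, thanks to $\sqrt{\Delta(\mathbb{S})}\gtrsim \sqrt{\ln P}$, turns out to be larger than $C_1(S)$ under our SNR assumption, assuming $P\geq 3$). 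This is the step that feels most like ``plumbing''; I expect the main obstacle to lie in keeping all these constants straight so that the assembled expression reproduces $C_{\text{ula}}(S,\sigma,p_{\max})$ exactly as defined in \eqref{eqn:C_ula}.

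Finally I would substitute into the sample-complexity bound \eqref{eqn:l_bound2}. Using Lemma \ref{lem:Ry_up_bound} I bound $\Vert\mathbf{R}_\mathbf{y}\Vert_2\leq p_{\max}PS+\sigma^2 \leq p_{\max}P\bigl(S+\sigma^2/p_{\max}\bigr)$, and combining with the upper bound on $q$ obtained above yields
\[
q_1^2 \;\leq\; 4C_S'^{\,2}C'^{\,3}\Bigl(\tfrac{p_{\max}}{p_{\min}}\Bigr)^{\!2}\!\Bigl(S+\tfrac{\sigma^2}{p_{\max}}\Bigr)^{\!2}.
\]
Multiplying by $\Delta(\mathbb{S}_{\text{ula}})\leq 2\ln P$ and by $c_3\ln(8M_{ca}/\delta)/(c_2\epsilon^2)$, and bounding both $\ln(8(P-1)/\delta)$ and $\ln P$ by $\ln(8P/\delta)$ (valid for $P\geq 3$), I get exactly the right-hand side of \eqref{eqn:snap_bound} with the constant $8C_S'^{\,2}C'^{\,3}c_3/c_2\,(S+\sigma^2/p_{\max})^2=C_{\text{ula}}(S,\sigma,p_{\max})$. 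This completes the specialization.
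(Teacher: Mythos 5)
Your proposal is correct and follows essentially the same route as the paper's own proof: invoke \cref{lem:vand_well_sep} to get $\sigma_S^2(\mf{A}_{\mathbb{U}_{\mathbb{S}}}(\blds{\theta}))\geq P/C'$, use the SNR hypothesis to ensure $\beta>0$ and bound $q$ above and below, verify the admissibility of $\epsilon$ for \cref{cor:main_coesp} via $\sigma_S(\mf{A}_{\mathbb{U}_{\mathbb{S}}})\leq\sqrt{P}$ and $\Delta(\mathbb{S}_{\text{ula}})\geq\ln P$, and then assemble $q_1^2\Delta(\mathbb{S}_{\text{ula}})\leq 2\ln(P)\cdot 4C_S'^2C'^3(p_{\max}/p_{\min})^2(S+\sigma^2/p_{\max})^2$ to recover $C_{\text{ula}}$ exactly. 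The only spot where you are terser than the paper is the check that $qp_{\min}P\sqrt{\Delta(\mathbb{S}_{\text{ula}})}/c_2\geq C_1(S)$, but the mechanism you cite ($q\geq C_S'/(p_{\min}P)$, $\sqrt{\ln P}/c_2>1$ for $P\geq 3$, and $C_S<1$) is precisely the one the paper uses.
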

\begin{proof}
From \cref{lem:vand_well_sep}, if $\Delta_{\min}(\blds{\theta}) \geq \gamma/P$, we have $\sigma_S^2(\mf{A}_{\mathbb{U}_{\mathbb{S}}}(\blds{\theta}))\geq \frac{P}{C'}$. Under the assumption on the SNR, we have $p_{\min} \sigma_S^2(\mf{A}_{\mathbb{U}_{\mathbb{S}}}(\blds{\theta}))\geq  p_{\min}\frac{P}{C'}>2\sigma^2$ which ensures $\beta>p_{\min} \sigma^2_{S}(\mf{A}_{\mathbb{U}_{\mathbb{S}}}(\blds{\theta}))/2>0$. Notice that for ULA $M_{ca}+1=P$ and from the fact that $\sigma_S^2(\mf{A}_{\mathbb{U}_{\mathbb{S}}}(\blds{\theta}))\geq \frac{P}{C'}$, we can obtain the following bound:
\begin{align}
    q &= \frac{C_S' \sqrt{P}}{\beta\sigma_S(\mf{A}_{\mathbb{U}_{\mathbb{S}}}(\blds{\theta}))} \leq \frac{2C_S' \sqrt{P}}{p_{\min}\sigma^3_S(\mf{A}_{\mathbb{U}_{\mathbb{S}}}(\blds{\theta}))}\leq \frac{C_S''}{p_{\min}P} 
    \label{eqn:q_bnd1}
\end{align}
where $C_S''=2C_S'C'^{1.5}$. Notice that:
\begin{align}
    C_S\beta q =\frac{C_SC_S' \sqrt{M_{ca}+1}}{\sigma_S(\mf{A}_{\mathbb{U}_{\mathbb{S}}}(\blds{\theta}))} \geq C_1(S)\frac{\sqrt{P}}{\sqrt{P}}= C_1(S)\label{cond1} 
\end{align}
where the inequality follows from $\sigma_S(\mf{A}_{\mathbb{U}_{\mathbb{S}}})\!\leq\! \Vert\mf{A}_{\mathbb{U}_{\mathbb{S}}}\Vert_F/\sqrt{S}\! =\! \sqrt{P}$. {Using the fact that  $\beta \leq p_{\min}\sigma^2_S(\mf{A}_{\mathbb{U}_{\mathbb{S}}}(\blds{\theta}))$, and the above lower bound on $\sigma_S(\mf{A}_{\mathbb{U}_{\mathbb{S}}}(\blds{\theta}))$, we obtain}
\ben\label{eqn:cond2}
\smash{q\geq \frac{C_S' \sqrt{P}}{p_{\min}\sigma^3_S(\mf{A}_{\mathbb{U}_{\mathbb{S}}}(\blds{\theta}))} \geq \frac{C_S' }{p_{\min}P}}.
\een
{Therefore, $q p_{\min}P\sqrt{\Delta(\mathbb{S}_{\text{ula}})}/c_2 \!\geq\! C_S' \sqrt{\Delta(\mathbb{S}_{\text{ula}})}/c_2 \!\geq\! C_S' \sqrt{\ln(P)}/c_2$, where the last inequality follows from the lower bound on $\Delta(\mathbb{S}_{\text{ula}})$ in \cref{lem:DS_nested}. Recall that $c_2 <1$ \footnote{The constant $c_2=3/16\sqrt{2}$ is specified in the proof of Theorem 1 in Appendix A.},
and therefore for $P\geq 3$,  $\sqrt{\ln P}/c_2 >1$. This implies that $\min(C_1(S),C_S' \sqrt{\ln(P)}/c_2)=C_1(S)$.} 
Combining this with \eqref{cond1}, we have $\epsilon \leq C_1(S)= \min(C_1(S),C_S' \sqrt{\ln(P)}/c_2)\leq \min(C_S\beta q,qp_{\min}P\sqrt{\Delta\mathbb{S}_{\text{ula}}}/c_2)$, which ensures that the assumption on $\epsilon$ in \cref{cor:main_coesp} holds. From \cref{lem:Ry_up_bound}, we have $\Vert \mf{R}_{\mf{y}}\Vert_2 \leq p_{\max}PS+\sigma^2$. {Using this bound and 
\eqref{eqn:q_bnd1}, we get:}
\begin{align}
 q_1 \sqrt{\Delta (\Sula)} &\leq \frac{C_S''}{p_{\min}P}(PSp_{\max}+\sigma^2)\sqrt{2\ln(P)}\nonumber\\
 &=C_S'' (S+\frac{\sigma^2}{p_{\max}P})\left(\frac{p_{\max}}{p_{\min}}\right)\sqrt{2\ln(P)}\nonumber\\
 &\leq\widetilde{C}_1(S,\sigma,p_{\max})\left(\frac{p_{\max}}{p_{\min}}\right)\sqrt{\ln(8P/\delta)}\label{eqn:q1_bound2}
\end{align}
where $\widetilde{C}_1(S,\sigma,p_{\max}):=(S+\frac{\sigma^2}{p_{\max}})\sqrt{2}C_S''$. The upper bound follows from the observations that $(S+\frac{\sigma^2}{p_{\max}P})\leq (S+\frac{\sigma^2}{p_{\max}})$ for all $P\geq 1$ and $\ln(P)\leq \ln(8P/\delta)$ for any $\delta<1$. { Notice from  \eqref{eqn:C_ula}, that $C_{\text{ula}}(S,\sigma,p_{\max}) =c_3/c_2 \widetilde{C}^2_1(S,\sigma,p_{\max})$.} From \eqref{eqn:q1_bound2}, we have
\begin{align*}
     &c_3\ln(\frac{8P}{\delta}) \frac{q^2_1 \Delta(\mathbb{S}_{\text{ula}})}{c_2\epsilon^2} \leq \frac{c_3}{c_2\epsilon^2}\widetilde{C}^2_1(S,\sigma,p_{\max})(\frac{p_{\max}}{p_{\min}}\ln(8P/\delta))^2\\
     &=\frac{C_{\text{ula}}(S,\sigma,p_{\max})}{\epsilon^2}\left(\frac{p_{\max}}{p_{\min}}\right)^2\left(\ln(8P/\delta)\right)^2.
\end{align*}
Therefore, \eqref{eqn:snap_bound} implies \eqref{eqn:l_bound2} and the proof is completed by applying \cref{cor:main_coesp} since $\beta>0$ and the conditions on $\epsilon$ and $L$ required for applying the corollary are satisfied.
\end{proof}

\begin{thm}\label{thm:well_sep_nst}
Let $\mathbb{S}=\Snst^{(N_1,N_2)}$ be a nested array with $N_1=\lceil P/2 \rceil$ and $N_2=\lfloor P/2 \rfloor$. Suppose the minimum angular separation between the sources, and the SNR satisfy the following conditions {for some $\gamma>1$:
\be
\Delta_{\min}(\blds{\theta})\geq \frac{5\gamma}{P^2}, \quad  \frac{p_{\min}}{\sigma^2} > \frac{2C_n'}{P^2}, \text{ where } C_n'={5\gamma}/{(\gamma-1)}.
\ee
Under} the assumptions [\textbf{A1-A3}], for any $\delta>0$ and $0<\epsilon\leq C_2(S):=\sqrt{1/5} C_SC_S'$, $\text{md}(\blds{\theta},\widehat{\blds{\theta}}) \leq \epsilon$
is satisfied with probability at least $1-\delta$ { provided $P\geq 3$ and
\begin{align}
    L \geq\frac{C_{\text{nest}}(S,\sigma,p_{\max})}{\epsilon^2}\left(\frac{p_{\max}}{p_{\min}}\right)^2\ln \left(\frac{8P^2}{\delta} \right). \label{eqn:snap_bound2}
\end{align}}
\end{thm}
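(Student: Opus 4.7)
The plan is to mirror the proof of \cref{thm:well_sep_ula} step-by-step, substituting the nested array bounds at each stage. I will first invoke \cref{prop:sing} to get $\sigma_S^2(\mf{A}_{\mathbb{U}_{\mathbb{S}}}(\blds{\theta})) \geq P^2/C_n'$ under the assumed separation $\Delta_{\min}(\blds{\theta})\geq 5\gamma/P^2$. The assumed SNR condition $p_{\min}/\sigma^2 > 2C_n'/P^2$ then gives $p_{\min}\sigma_S^2(\mf{A}_{\mathbb{U}_{\mathbb{S}}}(\blds{\theta})) \geq p_{\min}P^2/C_n' > 2\sigma^2$, which ensures $\beta > p_{\min}\sigma_S^2(\mf{A}_{\mathbb{U}_{\mathbb{S}}}(\blds{\theta}))/2 > 0$. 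Consequently, I can apply \cref{cor:main_coesp} provided the remaining hypotheses are verified.

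Next I will bound $q$. Using $\beta \geq p_{\min}\sigma_S^2(\mf{A}_{\mathbb{U}_{\mathbb{S}}}(\blds{\theta}))/2$, $M_{ca}+1 \leq P^2$, and $\sigma_S^2(\mf{A}_{\mathbb{U}_{\mathbb{S}}}(\blds{\theta})) \geq P^2/C_n'$, I get
$q = C_S'\sqrt{M_{ca}+1}/(\beta \sigma_S(\mf{A}_{\mathbb{U}_{\mathbb{S}}}(\blds{\theta}))) \leq 2C_S' C_n'^{3/2}/(p_{\min}P^2)$.
To verify the admissibility condition on $\epsilon$ in \cref{cor:main_coesp}, I will use $\sigma_S(\mf{A}_{\mathbb{U}_{\mathbb{S}}}(\blds{\theta})) \leq \sqrt{M_{ca}+1} \leq P$ together with the lower bound $M_{ca}+1 \geq P^2/5$ from \eqref{eqn:M_ca_bound}, yielding
$C_S \beta q = C_SC_S'\sqrt{M_{ca}+1}/\sigma_S(\mf{A}_{\mathbb{U}_{\mathbb{S}}}(\blds{\theta})) \geq C_SC_S'/\sqrt{5} = C_2(S)$.
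Similarly, using $\beta \leq p_{\min}\sigma_S^2(\mf{A}_{\mathbb{U}_{\mathbb{S}}}(\blds{\theta}))$ and the lower bound $\Delta(\mathbb{S}_{\text{nest}})\geq P^2/16$ from \cref{lem:DS_nested}, I will show $qp_{\min}P\sqrt{\Delta(\mathbb{S})}/c_2 \geq C_S'/(4c_2)$, which (since $c_2 < 1$) dominates $C_2(S)$ for the $O(1)$ constant $C_S$. Hence $\epsilon \leq C_2(S) \leq \min(C_S\beta q, qp_{\min}P\sqrt{\Delta(\mathbb{S})}/c_2)$, so the $\epsilon$-hypothesis of \cref{cor:main_coesp} is satisfied.

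Finally, I will bound $q_1\sqrt{\Delta(\mathbb{S})}$. Using $\Vert \mf{R}_{\mf{y}}\Vert_2 \leq p_{\max}PS+\sigma^2$ from \cref{lem:Ry_up_bound}, $\sqrt{\Delta(\mathbb{S}_{\text{nest}})} \leq P$ from \cref{lem:DS_nested}, and the upper bound on $q$ above, straightforward algebra gives
$q_1\sqrt{\Delta(\mathbb{S})} \leq 2C_S'C_n'^{3/2}(S+\sigma^2/p_{\max})(p_{\max}/p_{\min})$.
Squaring this and substituting into the sample complexity bound \eqref{eqn:l_bound2}, together with $\ln(8M_{ca}/\delta) \leq \ln(8P^2/\delta)$ (since $M_{ca}+1 \leq P^2$), yields exactly \eqref{eqn:snap_bound2} after identifying the definition of $C_{\text{nest}}(S,\sigma,p_{\max})$ in \eqref{eqn:C_nest}. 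Invoking \cref{cor:main_coesp} then completes the proof.

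The main obstacle, as in the ULA case, is the bookkeeping for the admissibility condition on $\epsilon$: we must carefully check that both $C_S \beta q$ and $q p_{\min}P\sqrt{\Delta(\mathbb{S})}/c_2$ dominate $C_2(S)$, so that the $\max$ in \eqref{eqn:l_bound} reduces to the first term and Corollary applies. All other steps amount to substituting the nested-array-specific bounds $\sigma_S^2 = \Omega(P^2)$, $M_{ca}+1=\Theta(P^2)$, and $\Delta(\mathbb{S}_{\text{nest}})=\Theta(P^2)$ into the general machinery of \cref{thm:main_coesp}.
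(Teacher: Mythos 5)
Your proposal follows the paper's proof essentially verbatim: invoke \cref{prop:sing} and the SNR condition to get $\beta>p_{\min}\sigma_S^2(\mf{A}_{\mathbb{U}_{\mathbb{S}}}(\blds{\theta}))/2>0$, bound $q\leq 2C_S'C_n'^{3/2}/(p_{\min}P^2)$, verify the admissibility of $\epsilon$ via the two lower bounds $C_S\beta q\geq C_2(S)$ and $qp_{\min}P\sqrt{\Delta(\mathbb{S})}/c_2\geq C_S'/(4c_2\sqrt{5})$, and then substitute $q_1\sqrt{\Delta(\mathbb{S}_{\text{nest}})}\leq \bar{C}_S''(S+\sigma^2/p_{\max})(p_{\max}/p_{\min})$ into \cref{cor:main_coesp}. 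The only deviation is your intermediate constant $C_S'/(4c_2)$ in place of the paper's $C_S'/(4c_2\sqrt{5})$, which is immaterial since either quantity exceeds $C_2(S)$.
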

\begin{proof}
From \cref{prop:sing}, {if $\Delta_{\min}(\blds{\theta}) \geq 5\gamma/P^2$,} we have $\sigma_S^2(\mf{A}_{\mathbb{U}_{\mathbb{S}}}(\blds{\theta}))\geq \frac{P^2}{C_n'}$. { Following the same argument as \cref{thm:well_sep_ula}, this ensures that $\beta >0$. Using the fact that $M_{\text{ca}}+1\leq P^2$ (from \eqref{eqn:M_ca_bound}) and 
the lower bound on $\sigma_S^2(\mf{A}_{\mathbb{U}_{\mathbb{S}}}(\blds{\theta}))$, we obtain 
\begin{equation}
\smash{q \leq\frac{C_S'P}{\beta\sigma_S(\mf{A}_{\mathbb{U}_{\mathbb{S}}}(\blds{\theta}))} \leq \frac{2C_S' P}{p_{\min}\sigma^3_S(\mf{A}_{\mathbb{U}_{\mathbb{S}}}(\blds{\theta}))}\leq \frac{\bar{C}_S''}{p_{\min}P^2}} \label{eqn:q_bnd2}
\end{equation}
where $\bar{C}_S'':=2C_S'C_n'^{1.5}$. Notice that $\sigma_S(\mf{A}_{\mathbb{U}_{\mathbb{S}}}(\blds{\theta})) \leq \|\mf{A}_{\mathbb{U}_{\mathbb{S}}}\|_{F}/\sqrt{S} = \sqrt{M_{\text{ca}}+1} \leq P$. Hence,  similar to \eqref{cond1}, we can establish that $C_S\beta q \geq C_2(S).$  Using the fact $P^2/5\leq M_{\text{ca}}+1$ from \eqref{eqn:M_ca_bound}, similar to \eqref{eqn:cond2} we obtain $q\geq \frac{C_S' P}{\sqrt{5}p_{\min}\sigma^3_S(\mf{A}_{\mathbb{U}_{\mathbb{S}}}(\blds{\theta}))} \geq \frac{C_S' }{\sqrt{5}p_{\min}P^2}$. From \cref{lem:DS_nested}, $\Delta(\mathbb{S}_{\text{nest}}) \geq P^2/16$. It follows that 
$q p_{\min}P\sqrt{\Delta(\mathbb{S}_{\text{nest}})}/c_2 \geq \frac{C_S'}{4c_2\sqrt{5}}$. Since $4c_2<1$, it follows that $\min(C_2(S),  C_S'/(4c_2\sqrt{5}))=C_2(S)$
and therefore $\epsilon \leq C_2(S)=\min( C_2(S),  C_S'/(4c_2\sqrt{5}))$ 
ensures that the assumption on $\epsilon$ in \cref{cor:main_coesp} holds.} 
Using $\Delta (\Snst)\leq P^2$ (from \cref{lem:DS_nested}), \cref{lem:Ry_up_bound},  and \eqref{eqn:q_bnd2}, we get:  
\begin{align}
    q_1 \sqrt{\Delta (\Snst)} \leq \widehat{C}_1(S,\sigma,p_{\max})({p_{\max}}/{p_{\min}}),\label{q_1_nested}
\end{align}
where $\widehat{C}_1(S,\sigma,p_{\max})\!=\!(S + \frac{\sigma^2}{p_{\max}})\bar{C}_S''$. By \eqref{q_1_nested}, we have  
\begin{align*}
     &\ln(\frac{8M_{ca}}{\delta}) \frac{c_3q^2_1 \Delta(\mathbb{S}_{\text{nst}})}{c_2\epsilon^2} \leq \frac{c_3}{c_2\epsilon^2}\widehat{C}^2_1(S,\sigma,p_{\max})\ln(8P^2/\delta)(\frac{p_{\max}}{p_{\min}})^2\\
     &=\frac{C_{\text{nest}}(S,\sigma,p_{\max})}{\epsilon^2}\left(\frac{p_{\max}}{p_{\min}}\right)^2\left(\ln(8P^2/\delta)\right).
\end{align*}
Therefore \eqref{eqn:snap_bound2} implies \eqref{eqn:l_bound2} and the proof is again completed by applying \cref{cor:main_coesp} since $\beta >0$ and the  conditions on $\epsilon$ and $L$ required for applying the corollary are satisfied. 
\end{proof}
Note that the range of values for $\epsilon$ where \cref{thm:well_sep_ula} and \ref{thm:well_sep_nst} are applicable differ slightly. However in the regime $\epsilon \leq \min(C_1(S), C_2(S))=C_2(S)$ and $P\geq 3$, we can fairly compare the two array geometries.

\textbf{Towards higher resolution with same snapshots:} Theorem \ref{thm:well_sep_ula} states that for a ULA, the matching distance error for Coarray ESPRIT can be bounded by $\epsilon$ provided (i) the snapshots scales only (poly)logarithmically in the dimension of the coarray covariance matrix and (ii) the minimum separation is $\Delta_{\min}\geq \gamma/P$. On the other hand, Theorem \ref{thm:well_sep_nst} guarantees that for a nested array with $P$ sensors, it is possible to bound the matching distance error by the same $\epsilon$ with order wise the same number of snapshots ($L=\Omega(\ln(P^2)$), but with a \emph{relaxed separation} condition that allows $\Delta_{\min}$ to be $\Delta_{\min}=\Omega(1/P^2)$. This validates the superior resolution properties of nested arrays compared to ULA with \emph{the same budget of temporal snapshots}. This has been empirically observed in the literature, but never theoretically established, until now. 

\textbf{Noise Resilience of Nested Arrays:}
If we consider the separation regime $\Delta_{\min}=\Omega(1/P)$ that is applicable for both the ULA and nested array, Theorems \ref{thm:well_sep_ula} and \ref{thm:well_sep_nst} indicate that the SNR $(p_{\min}/\sigma^2)$ requirement for the nested array can be $P$ times smaller than that of the ULA, in order to achieve the same DOA error bound with order-wise the same number of snapshots ($L=\Omega(\ln P)$). This brings out another advantage of nested arrays in terms of robustness against noise, especially in the low-SNR regime \cite{shahsavari2021fundamental}. 

\textbf{Effect of Dynamic Range:}
Our analysis also reveals the challenge posed by sources with higher dynamic range $p_{\max}/p_{\min}$ {as also observed in \cite{liao2016music}.} Theorem \ref{thm:well_sep_nst} suggests that at the same SNR (defined with respect to the weakest source $p_{\min}$), more snapshots maybe needed for resolving sources with disproportionately varying powers (higher $p_{\max}$ compared to the fixed $p_{\min}$). As will be shown, the numerical results are indeed consistent with the prediction made by our analysis. 
\vspace{-0.3cm}
\subsection{The Myth of Large Snapshots: Correlation Error vs. Angle Estimation Error}\label{sec:myth}
Since nested (and other) sparse arrays realize the virtual difference coarray by correlation-processing, it is commonly believed that one needs a large number ($L = \Omega (P^2)$) of temporal snapshots to estimate $\Theta(P^2)$ (cross) correlation values between sensor pairs. This `myth' of large snapshots (that grows quadratically in the number of sensors $P$) is partially true, if our goal is to estimate the coarray covariance matrix $\mf{T}_{ca}$. If we only allow $L$ to scale as $L = \Theta (\log P)$ (the so-called sample-starved regime), then one may indeed incur large error in covariance estimation. However, Theorem \ref{thm:well_sep_nst} shows that the angle estimation error can be made arbitrarily small ($\epsilon$) with high probability ($1-\delta$) provided $L$ scales only as $\Omega(\frac{1}{\epsilon^2}\ln(8P^2/\delta))$, despite the possibility of the coarray covariance error of a nested array {increasing with $P$} in this snapshot-starved regime. This surprising phenomenon is due to the fact that the potentially large covariance estimation error (which can even grow with $P$ in this regime) can actually be mitigated/counterbalanced by the enhanced aperture/difference set of the nested array that results in a large restricted smallest singular value $\sigma_S(\mathbf{A}_{\mathbb{U}_{\mathbb{S}}})$. As long as $\Delta_{\min}(\blds{\theta})\geq \frac{5\gamma}{P^2}$, $\sigma^2_S(\mathbf{A}_{\mathbb{U}_{\mathbb{S}}})$ scales as $cP^2$ (for some constant $c$), and this helps
us obtain reliable angle estimation, although the covariance estimates may be unreliable. 
\vspace{-0.2cm}
\section{Simulations}\label{sec:simulations}
We numerically investigate the useful SNR regime for coarray processing (\cref{sec:num_coarray_vs_direct}), the impact of SNR and the number of snapshots on DOA estimation error (\labelcref{sec:num_pr_res,sec:num_L_vs_SNR}), the relationship between DOA and covariance estimation error (\labelcref{sec:num_doa_vs_cov}), and the effect of the dynamic range of source powers on resolving two closely spaced sources (\labelcref{sec:num_dyrange}).
\subsection{When is Coarray-Based DOA Estimation Beneficial?}\label{sec:num_coarray_vs_direct}
We begin by examining under which circumstances coarray-based algorithms offer an advantage over more conventional DOA estimation methods. Specifically, in case of the ULA, we could apply MUSIC or ESPRIT \emph{directly} to the sample covariance matrix $\mathbf{\widehat{R}}_{\mf{y}}$ in \eqref{eqn:samp_cov} instead of the averaged coarray covariance matrix $\mf{\widehat{T}}_{ca}$ in \eqref{eqn:T_hat_est}. \cref{fig:direct_esprit} shows the matching distance error of coarray ESPRIT and direct ESPRIT, averaged over $10^3$ Monte Carlo trials, in case of the ULA, and, for comparison, coarray ESPRIT in case of the nested array with the same number of sensors ($P=20$). We consider $L=100$ snapshots, and $S=4$ equipower sources equally spaced by $\Delta = 2/P$. At medium to low SNR, the advantage of coarray-based processing is apparent. At high SNR, the situation is reversed, as the error of direct ESPRIT continues decreasing as a function of SNR, whereas the error of coarray ESPRIT saturates\footnote{This well-known and fundamental phenomenon is due to the finite-snapshot error of the coarray covariance matrix, see \cite{koochakzadeh2016cramerrao,wang2017,liu2017cramerrao}.}. However, coarray-based processing---including redundancy averaging \eqref{eqn:t_hat}---can clearly offer significant benefits in SNR or snapshot-limited conditions. As mostly such challenging scenarios are of interest in many applications, we focus on coarray ESPRIT herein.
\begin{figure}[h]
	\centering
	\includegraphics[width=0.74\linewidth]{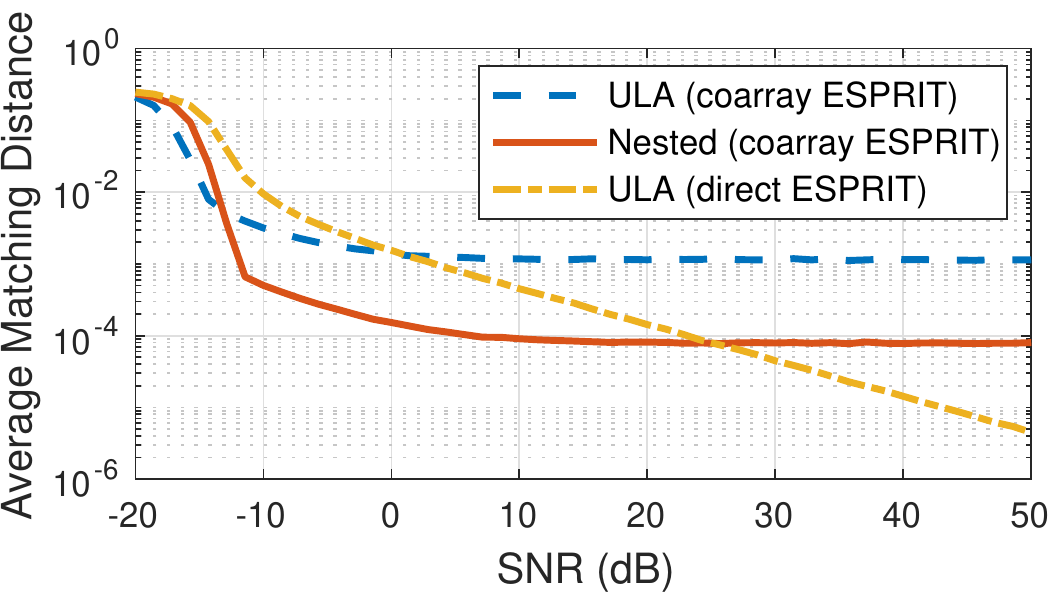}
	\caption{Comparison of ESPRIT applied to the sample covariance matrix \eqref{eqn:samp_cov} (direct ESPRIT) and the estimated coarray covariance matrix \eqref{eqn:T_hat_est} (coarray ESPRIT). Coarray ESPRIT achieves lower angle estimation error than direct ESPRIT at medium to low SNR.}
	\label{fig:direct_esprit}
\end{figure}
\vspace{-0.5cm}
\subsection{Improving Resolution by Increasing SNR or Snapshots}\label{sec:num_pr_res}
Next, we compare the probability of resolution as a function of the minimum separation for the nested array and ULA with the same number of sensors, $P=20$. Coarray ESPRIT is employed for both array geometries. We consider two sources with equal power ($p_1=p_2$) and (normalized) angles $\blds{\omega}=\{0.1, 0.1+\Delta\}$. The sources are declared to be successfully resolved when the estimated DOAs satisfy $\max_i \vert \hat{\omega}_i-\omega_i\vert \leq \Delta/10$. \cref{fig:prob_res} shows the empirical probability of resolution (averaged over $1000$ Monte-Carlo trials) for varying separation $\Delta$ and a fixed number of snapshots $L=55$ and SNR $=0$ and $-16$ dB. We observe that both array geometries can operate at a smaller separation at a higher SNR, i.e., smaller $\sigma/p_{\min}$ ratio. Indeed, the transition from low to high probability of resolution occur around $\Delta \propto 1/P$ for the ULA and $\Delta\propto 1/P^2$ for the nested array, as predicted by \cref{thm:well_sep_nst,thm:well_sep_ula}. It is also possible to enhance resolution by increasing the number of snapshots, as \cref{fig:prob_res} demonstrates. Here, the SNR is fixed at $0$ dB and the number of snapshots is $L=55$ and $L=600$, respectively.


\begin{figure}[!t]
	\centering
{\includegraphics[width=0.74\linewidth]{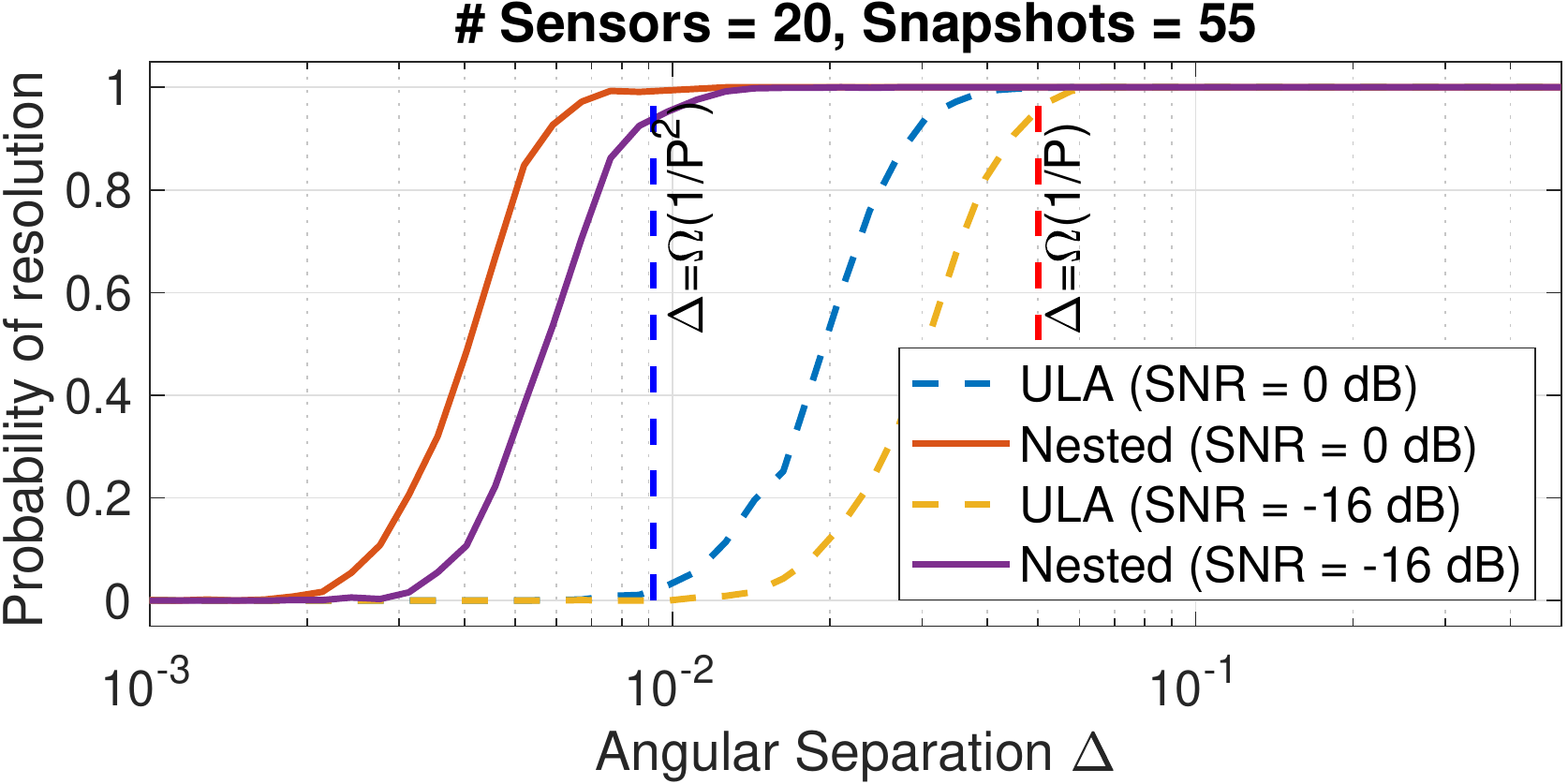}\label{fig:prob_res_snr}}\\
 {\includegraphics[width=0.74\linewidth]{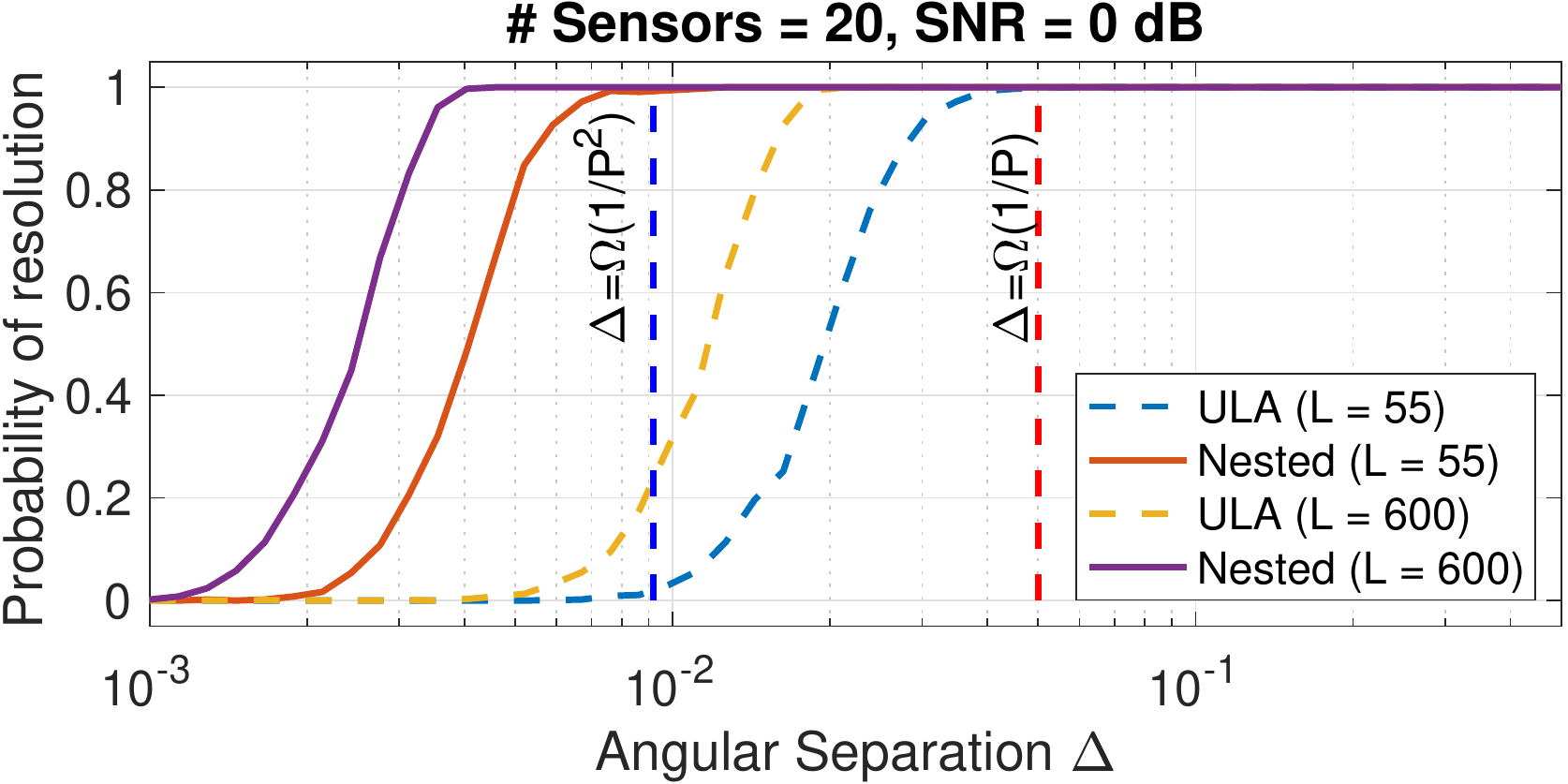}\label{fig:prob_res_snap}}
	\caption{Probability of resolution vs. source separation for different SNR levels (top) and number of snapshots (bottom). Increasing either improves resolution for both arrays.}
	\label{fig:prob_res}
\end{figure}

\subsection{Snapshot and SNR Trade-off} \label{sec:num_L_vs_SNR}
\cref{sec:num_pr_res} showed that SNR and the number of temporal snapshots can be exchanged for improved resolution. We now study this trade-off in further detail. We consider $S=2$ equipowered sources located at $\blds{\omega}=\{0.1,0.1+\Delta\}$, where $\Delta\in\{2/P,2/P^2\}$ and $P=20$. \cref{fig:SNR_L} shows the separation-relative matching distance error $\text{md}(\blds{\theta},\blds{\widehat{\theta}})/\Delta$ (averaged over $10^3$ Monte Carlo trials) as a function of both the number of snapshots and SNR. Firstly, fewer snapshots are required at higher SNR (and vice versa) to obtain the same recovery error, both in case of the ULA (left column) and nested array (right column). This supports \cref{thm:main_coesp}, where the matching distance depends on the number of snapshots and SNR through \labelcref{eqn:l_bound,eqn:beta}, respectively. Secondly, the nested array displays a more advantageous trade-off between snapshots and SNR compared to the ULA for both source separation $2/P$ (top row) and $2/P^2$ (bottom row). The benefit is especially apparent for $\Delta = 2/P^2$, where the nested array has a greatly larger range of operating points where the relative matching distance is low, as predicted by \cref{thm:well_sep_nst}. Note that the gray pixels correspond to a relative error of approximately $10\%$ of the separation, whereas white corresponds $\leq 1\%$ error.
\begin{figure}[!h]
    \centering
    \begin{tabular}{cc}\includegraphics[width=0.45\linewidth]{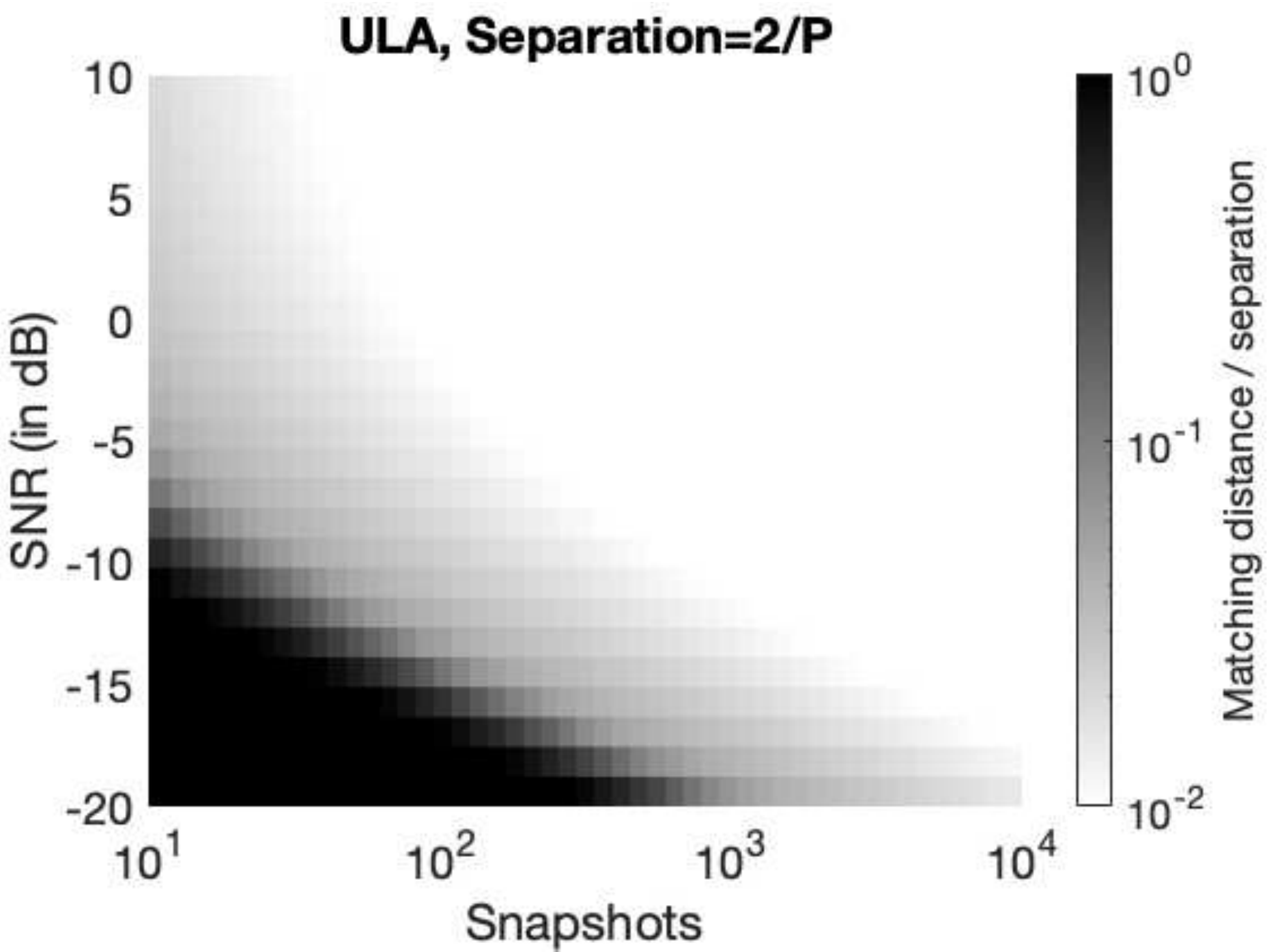}&
         \includegraphics[width=0.45\linewidth]{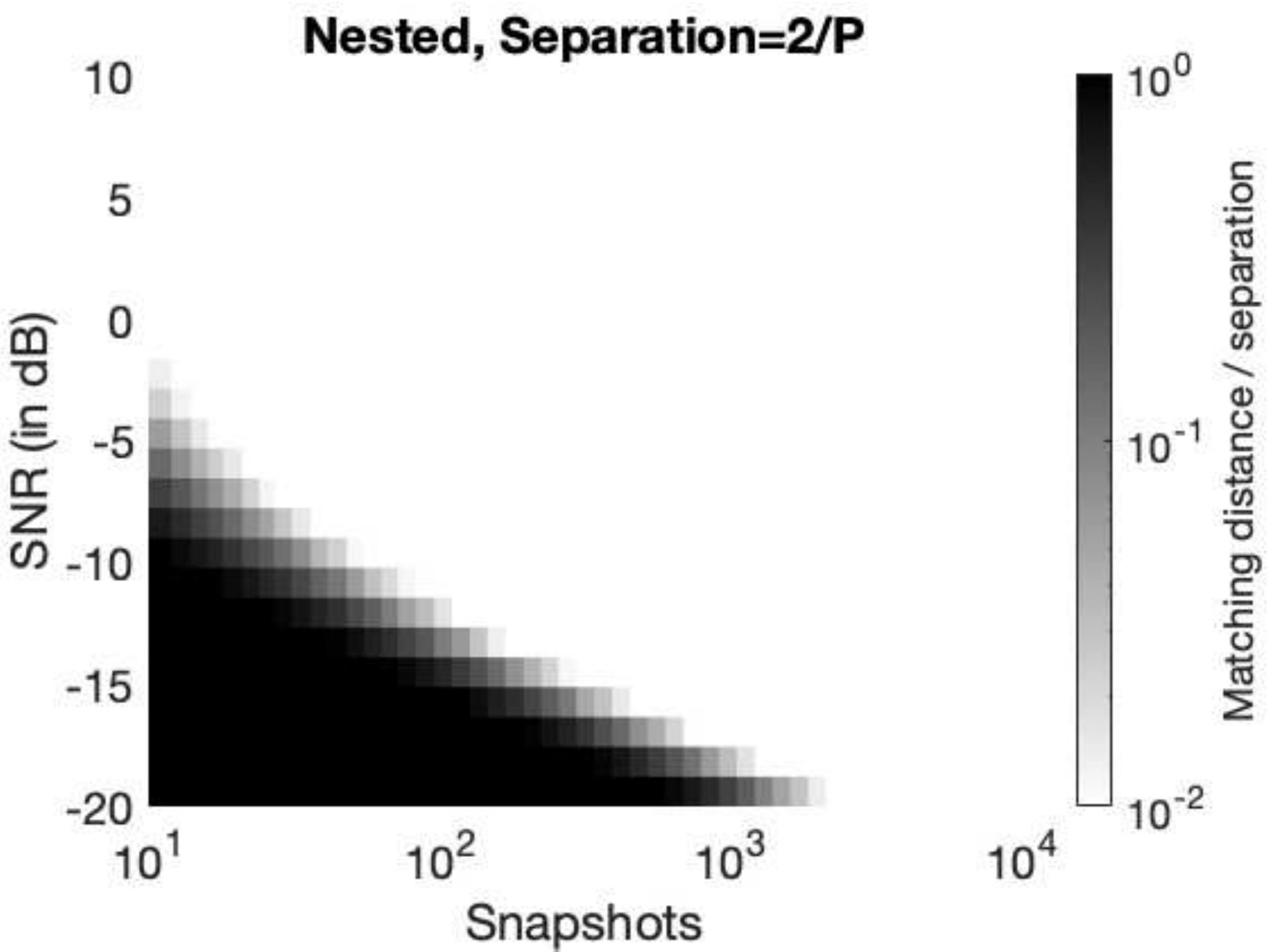}\\
        \includegraphics[width=0.45\linewidth]{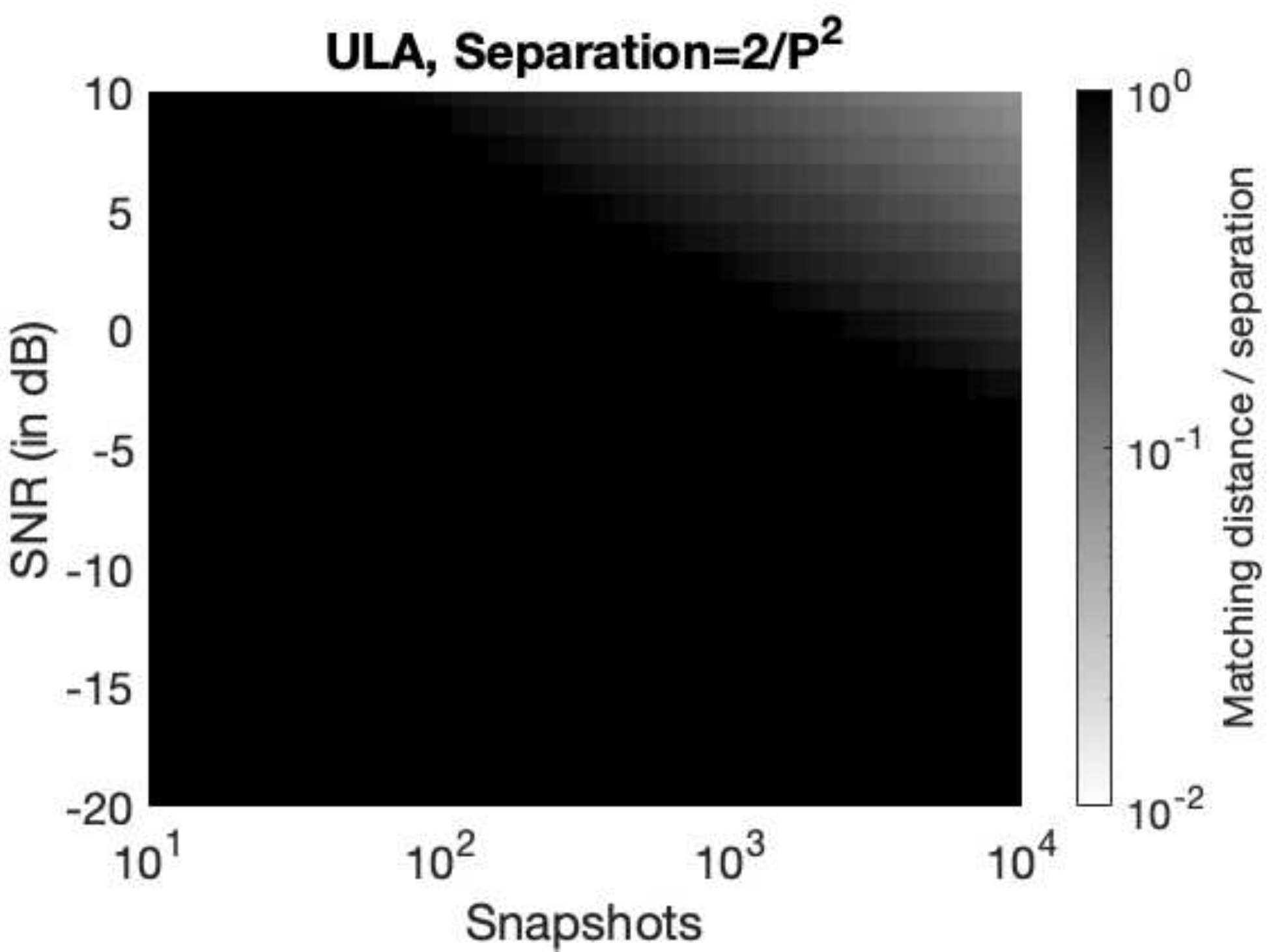}&
        \includegraphics[width=0.45\linewidth]{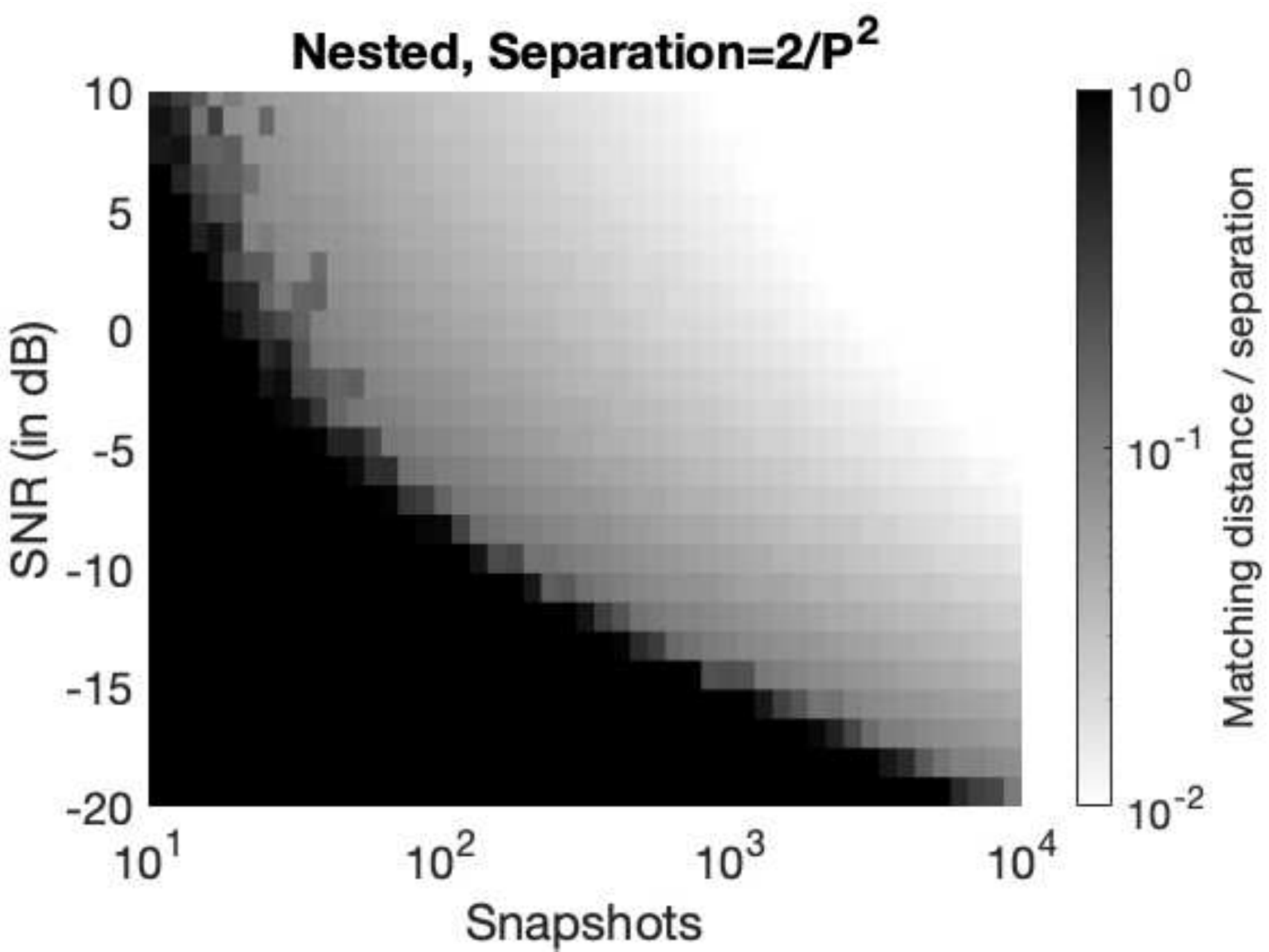}
    \end{tabular}
    \caption{Relative matching distance error $\text{md}(\blds{\theta},\blds{\widehat{\theta}})/\Delta$ as a function of snapshots and SNR. The nested array (right column) achieves lower error than the ULA (left column) for both source separation $\Delta=2/P$ (top row) and $\Delta=2/P^2$ (bottom row).}
    \label{fig:SNR_L}
\end{figure}
\vspace{-0.5cm}
\subsection{DOA and Covariance Estimation Error}\label{sec:num_doa_vs_cov}
Next, we illustrate an intriguing benefit of coarray-based DOA estimation in case of the nested array. We consider the average DOA matching distance and average covariance estimation error defined as {$\Vert \mf{T}_{ca}-\mf{\widehat{T}}_{ca}\Vert_2$}
for a varying number of sensors $P$ and $S=4$ equipower sources equally spaced by $\Delta\in \{1/P^{1.5},1/P^2\}$. The number of snapshots is $L=50$ and SNR $=0$ dB. \cref{fig:fix_snap_P} shows that the nested array incurs a larger covariance estimation error compared to the ULA with the same number of sensors. However, despite obtaining a worse estimate of the covariance matrix $\mf{\widehat{T}}_{ca}$, the nested array achieves superior DOA estimation performance when coarray ESPRIT is applied to $\mf{\widehat{T}}_{ca}$. In fact, when the separation is $\Delta=1/P^2$, the average matching distance no longer decays with $P$ for the ULA, whereas it continues to do so for the nested array. This is enabled by the larger coarray aperture of the nested array, which offsets the effect of finite snapshot covariance estimation error as discussed in \cref{sec:myth}. Note that for a fixed number of snapshots and a growing number of sensors $P$, the entries of the coarray covariance matrix $\mf{T}_{ca}$ become increasingly challenging to estimate, since the size of $\mf{T}_{ca}$ is proportional to the number of coarray elements $M_{ca}$, which is $\propto P$ for the ULA and $\propto P^2$ for the nested array.
\begin{figure}[!h]
	\centering
{\includegraphics[width=0.74\linewidth]{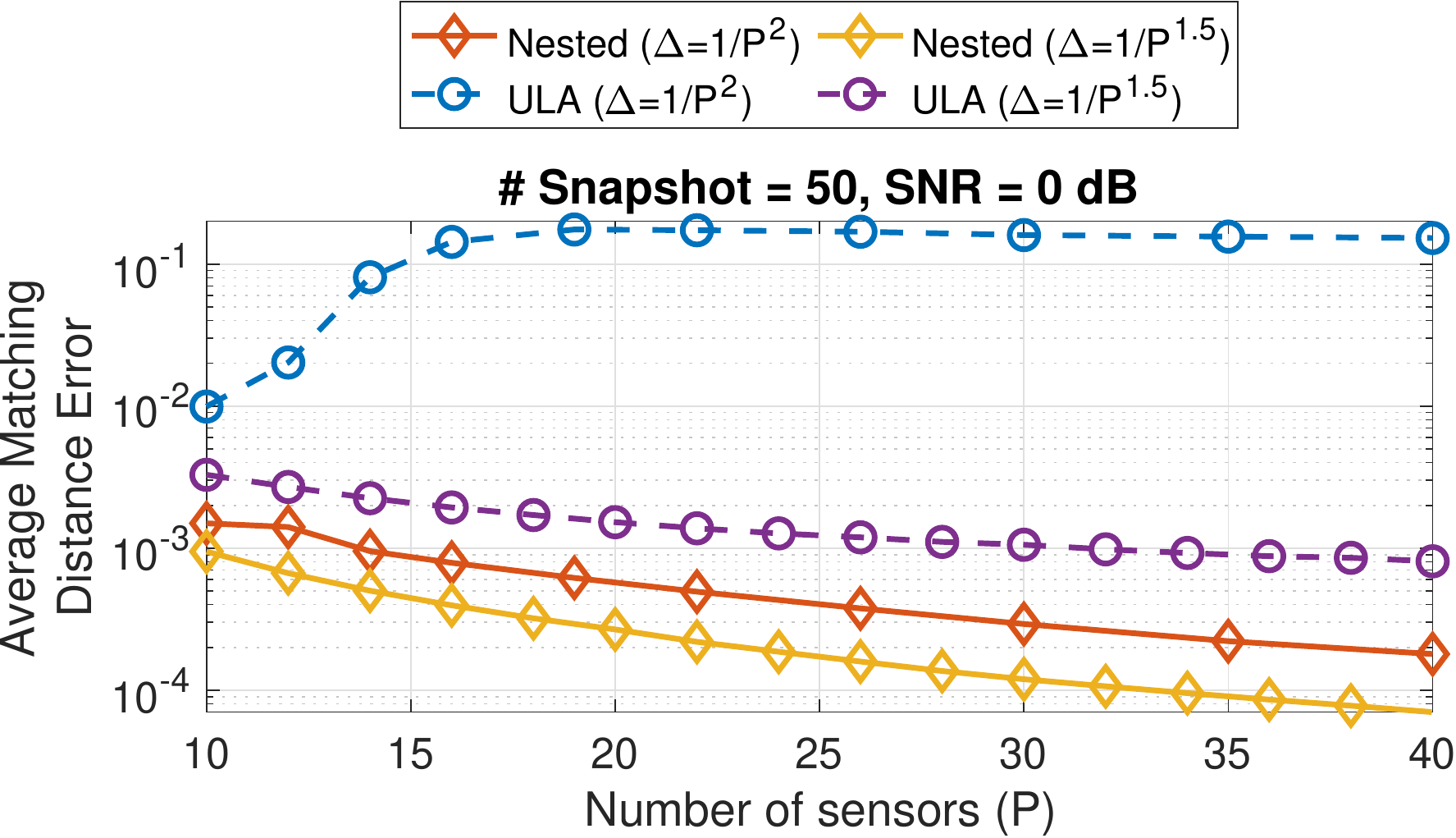}\label{fig:fix_snap_P_md}}\\
 {\includegraphics[width=0.74\linewidth]{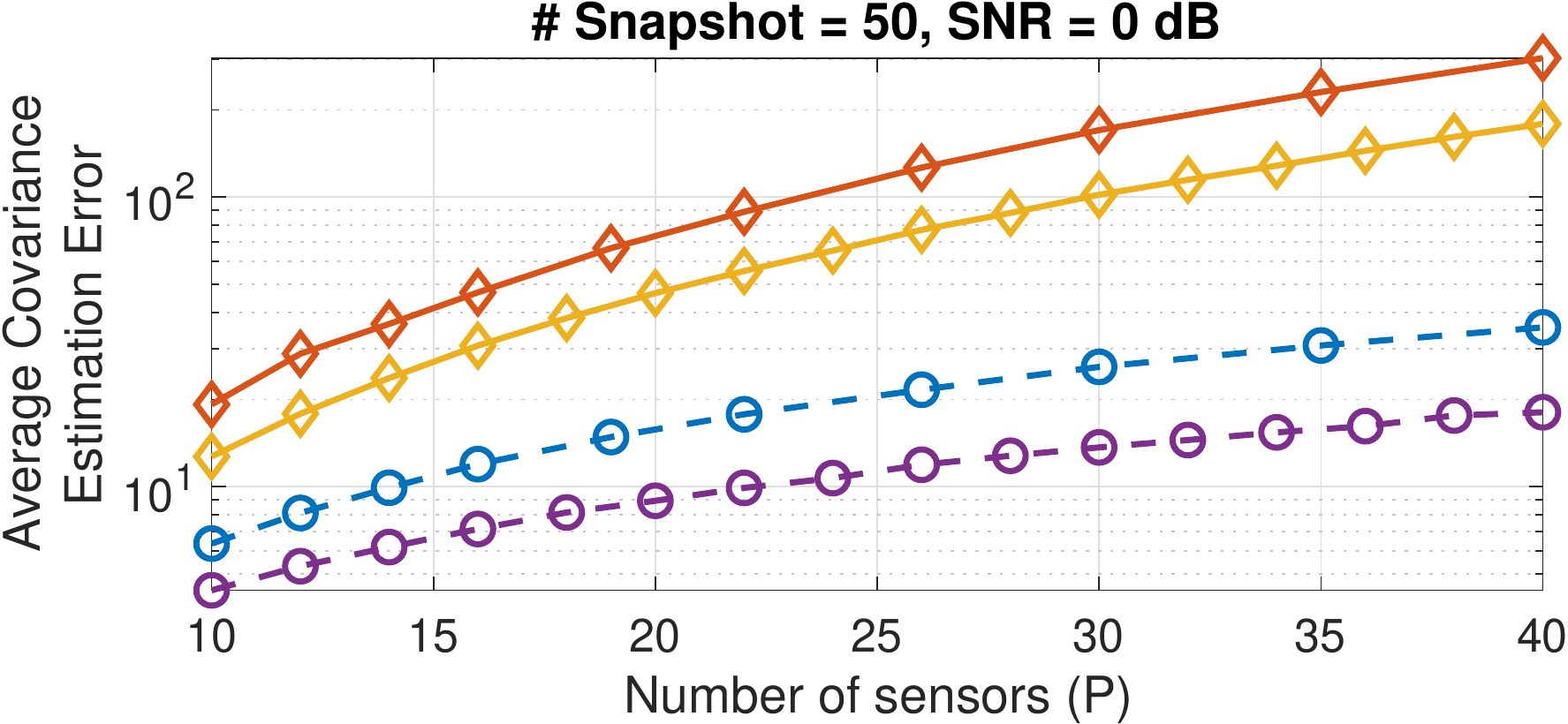}\label{fig:fix_snap_P_cov}}
	\caption{Average matching distance (top) and covariance estimation error (bottom) as a function of the number of sensors $P$. The DOA estimation error of the nested array decays despite the larger covariance estimation error compared to the ULA.}
	\label{fig:fix_snap_P}
\end{figure}
\vspace{-0.4cm}
\subsection{Effect of Dynamic Range of Source Powers} \label{sec:num_dyrange}
In the final experiment, we investigate the ability of coarray ESPRIT to resolve two sources with unequal powers. We set the dynamic range to $p_{\max}/p_{\min}\in \{1,10\}$ by fixing the power of the weaker source to $p_{\min} = 0.2$ and varying $p_{\max}$. \cref{fig:dyrange} shows that the number of snapshots required to distinguish two sources (separated by $\Delta=1/P$) is significantly larger when $p_{\max}/p_{\min}=10$ compared to $p_{\max}/p_{\min}=1$. {This is consistent with \cref{thm:well_sep_ula,thm:well_sep_nst}, which imply that the sufficient number of snapshots for resolving two sources (with high probability) grows with $p_{\max}$ if $p_{\min}$ and $\sigma$ are held fixed, irrespective of the array geometry. This brings out a non-trivial dependence of the dynamic range $p_{\max}/p_{\min}$ on the sample complexity.} Hence, distinguishing two sources with greatly different powers is more challenging and requires more snapshots than when the powers are equal.
\begin{figure}[!h]
\setlength\belowcaptionskip{-1.5\baselineskip}
    \centering
    \includegraphics[width=0.78\linewidth]{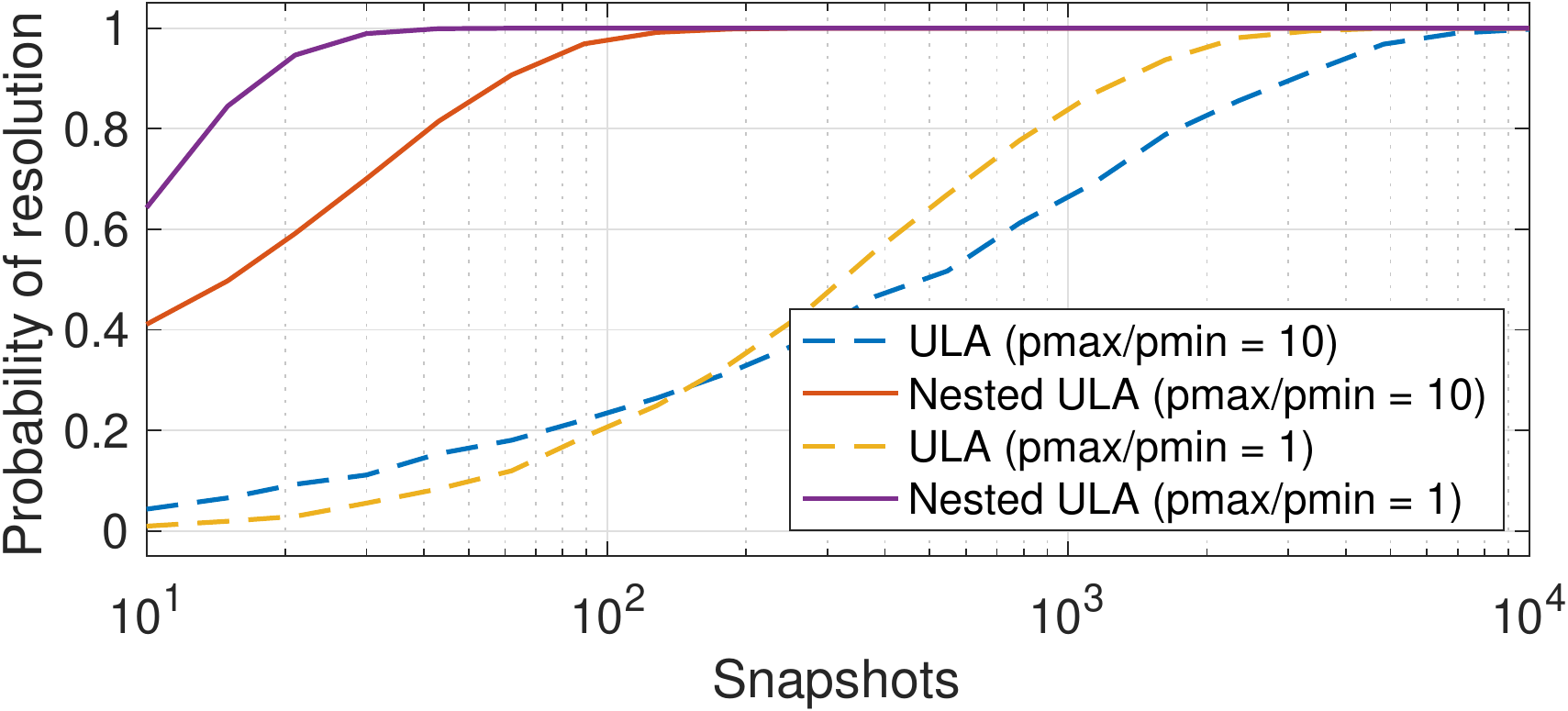}
    \caption{Effect of dynamic range of source powers on probability of resolution. Coarray ESPRIT requires more snapshots to detect two sources with larger dynamic range $p_{\max}/p_{\min}$.}
    \label{fig:dyrange}
\end{figure}
\vspace{-0.5cm}
\section{Conclusion} \label{sec:conclusion}
This paper investigated angle estimation error of coarray ESPRIT. We considered both additive noise and finite-snapshot covariance estimation error, which we probabilistically characterized in the case of Toeplitz covariance matrices. Our results show that if the number temporal snapshots scales logarithmically with the number of sensors, coarray ESPRIT achieves arbitrarily low estimation error with high probability. This also shows that the DOA estimation error can be small even though the covariance estimation error may be large. Finally, our theoretical and simulation results demonstrate that sparse arrays can provide higher resolution and better noise resilience compared to the ULA with the same number of sensors and snapshots.

\vspace{-0.2cm}
\begin{appendices}
\section{}
\subsection{Intermediate Results} \label{app:cov_est}
We will first state the complex extension of Hanson-Wright inequality\cite{hanson1971bound}, which is obtained by applying \cite[Theorem 1.1]{rudelson2013hanson} with the strategy described on \cite[Section 3.1, Page 9]{rudelson2013hanson}.
\begin{lem}\label{lem:hw_cplx}
Let $\mf{A} \in \mathbb{C}^{n \times n}$ be a fixed Hermitian matrix. Consider the random vector $\mf{x}=[x_1,x_2,\cdots,x_n]^{\top}\in \mathbb{C}^n$ with independent real and imaginary components $\text{Re}(x_i)$, $\text{Im}(x_i)$ satisfying 
$\E(\text{Re}(x_i))=\E(\text{Im}(x_i))=0$, and $\Vert \text{Re}(x_i)\Vert_{\psi_2}\leq K$, $\Vert \text{Im}(x_i)\Vert_{\psi_2}\leq K$. Then for any $\epsilon>0$, we have 
{\footnotesize
\begin{align*}
    &\mathbb{P}(\vert\mf{x}^H\mf{A}\mf{x}-\E(\mf{x}^H\mf{A}\mf{x})\vert\!>\!\epsilon)\leq 2 \exp\bigg(-c\min (\frac{\epsilon^2}{2K^4 \Vert \mf{A}\Vert_F^2},\frac{\epsilon}{K^2 \Vert \mf{A}\Vert_2})\bigg)
\end{align*}}
where $c>0$ is a universal constant.
\end{lem}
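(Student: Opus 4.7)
The plan is to reduce the stated complex Hanson--Wright inequality to the real-valued version (Rudelson--Vershynin, Theorem 1.1 of \cite{rudelson2013hanson}) by rewriting the complex quadratic form $\mf{x}^H \mf{A} \mf{x}$ as a real quadratic form in a $2n$-dimensional vector with independent sub-Gaussian entries. Write $\mf{x} = \mf{u} + j\mf{v}$ with $\mf{u},\mf{v} \in \mathbb{R}^n$ the vectors of real and imaginary parts, and decompose $\mf{A} = \mf{B} + j\mf{C}$ where, by the Hermitian property $\mf{A}^H = \mf{A}$, the matrix $\mf{B}$ is real symmetric and $\mf{C}$ is real skew-symmetric.

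The first key step is the algebraic identity. Expanding $\mf{x}^H \mf{A} \mf{x} = (\mf{u} - j\mf{v})^\top(\mf{B} + j\mf{C})(\mf{u} + j\mf{v})$ and using $\mf{B}^\top = \mf{B}$ (to kill cross terms in $\mf{B}$) together with $\mf{C}^\top = -\mf{C}$ (which forces $\mf{u}^\top \mf{C}\mf{u} = \mf{v}^\top\mf{C}\mf{v} = 0$), the imaginary part cancels and one is left with the real identity
\begin{equation*}
\mf{x}^H \mf{A} \mf{x} \;=\; \mf{u}^\top \mf{B}\mf{u} + \mf{v}^\top\mf{B}\mf{v} + 2\mf{v}^\top\mf{C}\mf{u} \;=\; \mf{z}^\top \tilde{\mf{A}} \mf{z},
\end{equation*}
where $\mf{z} := [\mf{u}^\top, \mf{v}^\top]^\top \in \mathbb{R}^{2n}$ and $\tilde{\mf{A}} := \bigl(\begin{smallmatrix} \mf{B} & -\mf{C} \\ \mf{C} & \mf{B}\end{smallmatrix}\bigr) \in \mathbb{R}^{2n \times 2n}$. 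The matrix $\tilde{\mf{A}}$ is symmetric (the diagonal blocks are symmetric and the off-diagonal blocks are each other's transposes since $\mf{C}^\top = -\mf{C}$), and the entries of $\mf{z}$ are independent, mean zero, and sub-Gaussian with $\|z_i\|_{\psi_2} \le K$ by hypothesis.

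The second key step is to translate the norms. A direct computation gives $\|\tilde{\mf{A}}\|_F^2 = 2\|\mf{B}\|_F^2 + 2\|\mf{C}\|_F^2 = 2\|\mf{A}\|_F^2$, since $\|\mf{A}\|_F^2 = \|\mf{B}\|_F^2 + \|\mf{C}\|_F^2$. For the spectral norm, one uses the standard fact that $\tilde{\mf{A}}$ is the real representation of the complex Hermitian matrix $\mf{A}$: the map $\mf{w} = \mf{a} + j\mf{b} \mapsto [\mf{a}^\top, \mf{b}^\top]^\top$ is an $\mathbb{R}$-linear isometry that intertwines multiplication by $\mf{A}$ with multiplication by $\tilde{\mf{A}}$, so each eigenvalue of $\mf{A}$ becomes an eigenvalue of $\tilde{\mf{A}}$ with doubled multiplicity, giving $\|\tilde{\mf{A}}\|_2 = \|\mf{A}\|_2$.

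The final step is to apply the real Hanson--Wright inequality from \cite[Theorem 1.1]{rudelson2013hanson} to $\mf{z}^\top \tilde{\mf{A}} \mf{z}$ with sub-Gaussian parameter $K$, and substitute the two norm identities above to obtain the claimed bound. The potential pitfall is purely bookkeeping: verifying that the real quadratic identity comes out cleanly with no leftover imaginary part (which depends critically on $\mf{A}$ being Hermitian) and that the correct factor of $2$ appears in the Frobenius-norm term but not in the spectral-norm term. No new probabilistic input is required beyond the cited real Hanson--Wright theorem.
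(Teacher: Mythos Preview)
Your proposal is correct and essentially identical to the paper's own proof: both stack the real and imaginary parts into $\mf{z}\in\mathbb{R}^{2n}$, form the same block matrix $\tilde{\mf{A}}=\bigl(\begin{smallmatrix}\text{Re}(\mf{A}) & -\text{Im}(\mf{A})\\ \text{Im}(\mf{A}) & \text{Re}(\mf{A})\end{smallmatrix}\bigr)$, verify $\|\tilde{\mf{A}}\|_F=\sqrt{2}\|\mf{A}\|_F$ and $\|\tilde{\mf{A}}\|_2=\|\mf{A}\|_2$, and invoke the real Hanson--Wright inequality \cite[Theorem~1.1]{rudelson2013hanson}. Your write-up simply fills in more of the algebraic details (the expansion showing the imaginary part vanishes and the eigenvalue-doubling argument for the spectral norm) that the paper leaves implicit.
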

\begin{proof}
Let $\mf{z}=[\text{Re}(\mf{x})^{\top},\text{Im}(\mf{x})^{\top}]^{\top}\in \mathbb{R}^{2n}$ and define :
$
    \mf{\tilde{A}}=\begin{bmatrix}
    \text{Re}(\mf{A}) & -\text{Im}(\mf{A})\\
    \text{Im}(\mf{A}) & \text{Re}(\mf{A})
    \end{bmatrix}
$. It is easy to see that for any Hermitian $\mf{A}$, we have the following equality
$    \mf{x}^{H}\mf{A}\mf{x}=\mf{z}^{T}\mf{\tilde{A}}\mf{z}
$. Further, it can be verified that $\Vert\mf{\tilde{A}}\Vert_F=\sqrt{2}\Vert\mf{A}\Vert_F$ and $\Vert\mf{\tilde{A}}\Vert_2=\Vert\mf{A}\Vert_2$. Now, we can apply \cite[Theorem 1.1]{rudelson2013hanson}, to obtain the desired probability bound. 
\end{proof}
\vspace{-0.2cm}
\begin{lem}\label{lem:hw_snap}
Let $\mathbf{w}_i \in \mathbb{C}^n$, $1 \leq i \leq T$ be i.i.d complex circularly symmetric Gaussian random variable with distribution $\mathcal{CN}(\mathbf{0},\mathbf{\Sigma})$. Let $\mathbf{A} \in \mathbb{C}^{n \times n}$ be a fixed Hermitian matrix, then for any $\epsilon>0$ and universal constant $c$, we have
$P( \vert\frac{1}{L}\sum_{i =1}^{L} \mathbf{w}^H_{i}\mathbf{A}\mathbf{w}_{i} - \E[\mathbf{w}^H_{i}\mathbf{A}\mathbf{w}_{i} ]   \vert \geq \epsilon)
\leq  2\exp\left(-cL \min\left(\frac{\epsilon^2}{2K^4\Vert \mathbf{\Sigma}\Vert_2^2 \Vert \mathbf{A} \Vert_F^2},\frac{\epsilon}{K^2\Vert \mathbf{\Sigma}\Vert_2 \Vert \mathbf{A} \Vert_2}\right)\right)$.
\end{lem}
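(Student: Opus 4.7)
The plan is to reduce Lemma \ref{lem:hw_snap} to a single application of the complex Hanson--Wright inequality (Lemma \ref{lem:hw_cplx}) by stacking and whitening the snapshots. Since $\mathbf{w}_i \sim \mathcal{CN}(\mathbf{0},\boldsymbol{\Sigma})$ i.i.d., I would first write $\mathbf{w}_i = \boldsymbol{\Sigma}^{1/2}\mathbf{v}_i$, where $\mathbf{v}_i \sim \mathcal{CN}(\mathbf{0},\mathbf{I}_n)$ are i.i.d. This gives $\mathbf{w}_i^H \mathbf{A}\mathbf{w}_i = \mathbf{v}_i^H \widetilde{\mathbf{A}}\mathbf{v}_i$ with $\widetilde{\mathbf{A}} := \boldsymbol{\Sigma}^{1/2}\mathbf{A}\boldsymbol{\Sigma}^{1/2}$, which is still Hermitian. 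Stacking $\mathbf{v} = [\mathbf{v}_1^\top, \ldots, \mathbf{v}_L^\top]^\top \in \mathbb{C}^{nL}$, I can rewrite the empirical average as a single complex quadratic form
\begin{equation*}
\frac{1}{L}\sum_{i=1}^L \mathbf{w}_i^H \mathbf{A}\mathbf{w}_i = \mathbf{v}^H \mathbf{B}\mathbf{v}, \qquad \mathbf{B} := \frac{1}{L}\bigl(\mathbf{I}_L \otimes \widetilde{\mathbf{A}}\bigr).
\end{equation*}
The entries of $\mathbf{v}$ are i.i.d. standard complex Gaussians, so their real and imaginary parts are centered with sub-Gaussian norm at most some universal constant $K$.

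Next I would compute the two norms of $\mathbf{B}$ that appear in Lemma \ref{lem:hw_cplx}. Using the Kronecker identities $\|\mathbf{I}_L \otimes \widetilde{\mathbf{A}}\|_F^2 = L\|\widetilde{\mathbf{A}}\|_F^2$ and $\|\mathbf{I}_L \otimes \widetilde{\mathbf{A}}\|_2 = \|\widetilde{\mathbf{A}}\|_2$, together with the submultiplicative bounds $\|\widetilde{\mathbf{A}}\|_F \le \|\boldsymbol{\Sigma}^{1/2}\|_2^2 \|\mathbf{A}\|_F = \|\boldsymbol{\Sigma}\|_2\|\mathbf{A}\|_F$ and $\|\widetilde{\mathbf{A}}\|_2 \le \|\boldsymbol{\Sigma}\|_2\|\mathbf{A}\|_2$, this yields
\begin{equation*}
\|\mathbf{B}\|_F^2 \le \frac{\|\boldsymbol{\Sigma}\|_2^2 \|\mathbf{A}\|_F^2}{L}, \qquad \|\mathbf{B}\|_2 \le \frac{\|\boldsymbol{\Sigma}\|_2 \|\mathbf{A}\|_2}{L}.
\end{equation*}

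Finally, I would invoke Lemma \ref{lem:hw_cplx} for the matrix $\mathbf{B}$ and vector $\mathbf{v}$, substitute the two norm bounds above, and observe that the factor of $1/L$ in both $\|\mathbf{B}\|_F^2$ and $\|\mathbf{B}\|_2$ can be pulled out of the minimum to produce the advertised factor of $L$ in the exponent, giving exactly the stated tail bound. There is no real obstacle here: the one-line checks are (i) that Kronecker products preserve Hermitianness and scale the Frobenius norm by $\sqrt{L}$ while leaving the spectral norm unchanged, and (ii) that $\boldsymbol{\Sigma}^{1/2}\mathbf{A}\boldsymbol{\Sigma}^{1/2}$ remains Hermitian so that Lemma \ref{lem:hw_cplx} applies verbatim. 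The only conceptual step worth flagging is the whitening trick, which is needed because Lemma \ref{lem:hw_cplx} requires the entries of the random vector to be independent, whereas the raw entries of the stacked $\mathbf{w}_i$ are only independent across snapshots, not within a snapshot.
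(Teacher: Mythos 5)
Your proposal is correct and follows essentially the same route as the paper: whiten the snapshots, stack them into one long vector, express the average as a single Hermitian quadratic form (your $\mathbf{I}_L \otimes \widetilde{\mathbf{A}}$ is exactly the paper's block-diagonal $\mathrm{diag}(\mathbf{A},\dots,\mathbf{A})$ conjugated by the block-diagonal $\boldsymbol{\Sigma}^{1/2}$), compute the Frobenius and spectral norms of the stacked matrix, and apply the complex Hanson--Wright inequality once. The only cosmetic difference is that you absorb the $1/L$ into the matrix $\mathbf{B}$ whereas the paper rescales the deviation level to $L\epsilon$; both yield the identical exponent.
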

\vspace{-0.2cm}
\begin{proof}
Since $\mathbf{w}_i$ is a complex circularly symmetric Gaussian random variable distributed according to $\mathcal{CN}(\mathbf{0},\mathbf{\Sigma})$, we define a new transformed variable $\smash{\mathbf{u}_i=\mathbf{\Sigma}^{-1/2}\mf{w}_i}$ where $\mathbf{\Sigma}^{1/2}$ is the square root of the covariance matrix $\mathbf{\Sigma}$. It can be verified that $\smash{\mf{u}_i\sim \mathcal{CN}(\mf{0},\mf{I}_n)}$, i.e., it is also a complex circularly symmetric Gaussian random variable with independent real and imaginary components. Define block-wise diagonal matrices
$\Tilde{\mathbf{A}} = \text{diag}(\mathbf{A},\hdots,\mathbf{A}),\Tilde{\mathbf{\Sigma}}^{1/2} = \text{diag}(\mathbf{\Sigma}^{1/2},\hdots,\mathbf{\Sigma}^{1/2})\in \mathbb{C}^{nL \times nL}$
and $\Tilde{\mathbf{u}} = [\mathbf{u}^T_{1}, \hdots,\mathbf{u}^T_{L}]^T \in \mathbb{C}^{nL}$. 
Next, we can observe that 
$
\sum_{i = 1}^{L} \mathbf{w}_i^H \mathbf{A} \mathbf{w}_i = \sum_{i = 1}^{L} \mathbf{u}_i^H \mathbf{\Sigma}^{1/2} \mathbf{A} \mathbf{\Sigma}^{1/2} \mathbf{u}_i= \mathbf{\tilde{u}}^H \mathbf{\tilde{\Sigma}}^{1/2} \mathbf{\tilde{A}}\mathbf{\tilde{\Sigma}}^{1/2}  \mathbf{\tilde{u}} 
$.
We have $\mathbb{E} (\sum_{i = 1}^{L} \mathbf{w}_i^H \mathbf{A} \mathbf{w}_i)=L \mathbb{E} \left(\mf{w}_i^H\mf{A}\mf{w}\right)
$, since it is a sum of $L$ i.i.d random variables.
The desired probability can be re-written as:
$P( \vert\frac{1}{L}\sum_{i =1}^{L} \text{Re}(\mathbf{w}^H_{i}\mathbf{A}\mathbf{w}_{i}) - \mathbb{E}[\text{Re}( \mathbf{w}^H_{i}\mathbf{A}\mathbf{w}_{i}) ]   \vert \geq \epsilon)
= P( \vert \text{Re}(\Tilde{\mathbf{u}}^H \Tilde{\mathbf{\Sigma}}^{1/2}\Tilde{\mathbf{A}}\Tilde{\mathbf{\Sigma}}^{1/2}\Tilde{\mathbf{u}})-\mathbb{E}[\text{Re}(\Tilde{\mathbf{u}}^H \Tilde{\mathbf{\Sigma}}^{1/2}\Tilde{\mathbf{A}}\Tilde{\mathbf{\Sigma}}^{1/2}\Tilde{\mathbf{u}})] \vert \geq L\epsilon)$. 
Recall that $\text{Re}(\tilde{u}_i), \text{ Im}(\tilde{u}_i)$ are i.i.d distributed as $\mathcal{N}(0,1/2)$ and hence sub-Gaussian with $K=2/\sqrt{3}$. Note that due to the block-diagonal structure we have
$\smash{
\Vert\Tilde{\mathbf{\Sigma}}^{1/2}\Tilde{\mathbf{A}}\Tilde{\mathbf{\Sigma}}^{1/2}\Vert_F^2=L \Vert\mathbf{\Sigma}^{1/2}\mathbf{A}\mathbf{\Sigma}^{1/2}\Vert_F^2 \leq L \Vert \mf{A}\Vert_F^2 \Vert\blds{\Sigma} \Vert_2^2}
$
and
$\smash{
\Vert\Tilde{\mathbf{\Sigma}}^{1/2}\Tilde{\mathbf{A}}\Tilde{\mathbf{\Sigma}}^{1/2}\Vert_2=\Vert\mathbf{\Sigma}^{1/2}\mathbf{A}\mathbf{\Sigma}^{1/2}\Vert_2\leq \Vert \mf{A}\Vert_2\Vert\mf{\Sigma}\Vert_2}$. The proof is completed by applying \cref{lem:hw_cplx} with $\epsilon=\epsilon L$. 
\end{proof}
\vspace{-0.6cm}
\subsection{Proof of Theorem \ref{thm:coverr}}\label{app:thm1}
From Lemma \ref{lem:spectral_norm}, we have $P(\Vert \mf{E}_L \Vert_2\geq \epsilon)\!\leq\! P( \sup\vert f_{\mf{e}}(\theta)\vert \!\geq\! \epsilon )$. In general, it is not straightforward to evaluate this supremum, however, we exploit the following result from  \cite{zygmund2002trigonometric}that bounds it by using the function value evaluated at a few grid points. 
\begin{lem} \cite[Theorem 7.28, Chapter 10, Vol.2, Pg. 33]{zygmund2002trigonometric} \label{lem:finite_point}Let $f(\theta)$ be a trigonometric polynomial of order $N$. Then,
$
\sup_{\theta \in [-\pi, \pi]} |f(\theta)| \le  
2 \max_{1 \le k \le 4N} |f(\theta_k)|, \quad \theta_k = \frac{k - 2N}{4N} \pi. 
$
\end{lem}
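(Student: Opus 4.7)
The plan is to combine Bernstein's classical inequality for trigonometric polynomials with a uniform discretization argument, so that the $\sup$ over the continuous interval is controlled by the finite grid up to the modest factor of $2$.

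First I would reduce the problem to controlling how much $|f|$ can vary between a continuous maximizer and its nearest grid node. Let $M := \sup_{\theta \in [-\pi,\pi]} |f(\theta)|$ and let $\theta^{\star}$ be a point achieving this supremum (existence is guaranteed since a trigonometric polynomial is continuous and the circle is compact). Let $\theta_{k^{\star}}$ denote the grid point $\theta_k$ closest to $\theta^{\star}$. Since the grid spacing is $\pi/(4N)$, every point on the relevant interval lies within $\pi/(8N)$ of some node, so $|\theta^{\star} - \theta_{k^{\star}}| \le \pi/(8N)$. By the triangle inequality,
\[
M \;=\; |f(\theta^{\star})| \;\le\; |f(\theta^{\star}) - f(\theta_{k^{\star}})| + \max_{1\le k\le 4N} |f(\theta_k)|,
\]
so it suffices to prove $|f(\theta^{\star}) - f(\theta_{k^{\star}})| \le M/2$.

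Next I would invoke Bernstein's inequality for trigonometric polynomials of order $N$, namely $\sup_{\theta}|f'(\theta)| \le N \sup_{\theta}|f(\theta)| = NM$. This is a classical result (see, e.g., Zygmund, Chapter 10) which I would cite as a black box. Combined with the fundamental theorem of calculus applied componentwise to the real and imaginary parts,
\[
|f(\theta^{\star}) - f(\theta_{k^{\star}})| \;\le\; |\theta^{\star} - \theta_{k^{\star}}| \cdot \sup_{\theta} |f'(\theta)| \;\le\; \frac{\pi}{8N}\cdot NM \;=\; \frac{\pi M}{8} \;<\; \frac{M}{2},
\]
which, substituted into the triangle-inequality bound above and rearranged, yields the claim $M \le 2\max_{1\le k\le 4N}|f(\theta_k)|$.

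The main obstacle is really Bernstein's inequality itself, whose proof is nontrivial (it is typically established via an explicit interpolation formula expressing $f'(\theta)$ as a finite linear combination of values of $f$ at $2N$ nodes with coefficients whose absolute values sum to at most $N$, or via a Chebyshev-extremal/complex-analytic argument). For the present lemma, however, it can be quoted directly. A secondary but purely arithmetic check is that the $\pi/(4N)$ grid spacing is fine enough: any spacing $h$ with $Nh/2 \le 1/2$, i.e.\ $h \le 1/N$, would suffice to give a factor of $2$, and $\pi/(4N)$ comfortably satisfies this with slack $\pi/4 < 1$. Everything else is triangle-inequality bookkeeping.
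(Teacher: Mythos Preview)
The paper does not prove this lemma; it is quoted directly from Zygmund, so there is no in-paper argument to compare against. Your Bernstein-based strategy is the natural one and is essentially how such sampling inequalities are usually derived, but there is a genuine gap in the execution that you have glossed over.

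The stated grid $\theta_k=(k-2N)\pi/(4N)$, $1\le k\le 4N$, runs from $\theta_1=(1-2N)\pi/(4N)$ to $\theta_{4N}=\pi/2$ and therefore spans an interval of length $(4N-1)\pi/(4N)<\pi$, not the full period $2\pi$. Consequently your assertion that ``every point on the relevant interval lies within $\pi/(8N)$ of some node'' is false for the interval $[-\pi,\pi]$: if the maximizer $\theta^{\star}$ falls in $(\pi/2,\pi]$ (or in $[-\pi,-\pi/2)$), the nearest grid point can be of order $\pi/2$ away, and the mean-value estimate collapses. You computed the spacing correctly but did not check that $4N$ nodes at that spacing cannot cover a length-$2\pi$ circle.

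This is almost certainly a transcription slip in the paper's statement of Zygmund's theorem (the nodes should be equispaced over a full period). However, even after correcting the grid to $4N$ equispaced points on $[-\pi,\pi)$, the spacing becomes $\pi/(2N)$, the half-spacing $\pi/(4N)$, and your Bernstein step yields
\[
|f(\theta^{\star})-f(\theta_{k^{\star}})|\le \frac{\pi}{4N}\cdot NM=\frac{\pi}{4}M,
\]
which only gives $M\le (1-\pi/4)^{-1}\max_k|f(\theta_k)|\approx 4.66\max_k|f(\theta_k)|$, not the claimed factor $2$. To obtain the constant $2$ with only $4N$ nodes one needs the sharper interpolation-based inequality actually proved in Zygmund, not Bernstein alone. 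Alternatively, your argument goes through verbatim with $8N$ equispaced nodes on the full period; for the paper's application this would merely replace $8M_{ca}$ by $16M_{ca}$ in Theorem~\ref{thm:coverr} and leave every downstream conclusion unchanged.
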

From \cref{prop:spec_func}, we have $f_{\mf{e}}(\theta)=\text{tr}(\Ey\Lamth)$. However, we want to relate it to the sample covariance matrix $\Ryhat$. In order to do this, we show that $\text{tr}(\Ryov\Lamth)=\text{tr}(\Ryhat\Lamth)$ where recall from \eqref{eqn:samp_cov} that $\Ryhat$ is the sample covariance matrix: 
{\footnotesize
\begin{align*}
   &\text{tr}(\Ryov\Lamth)=\sum_{m=1}^{P}\sum_{n=1}^{P}[\Ryov]_{m,n}[\Lamth]_{n,m}\\
    &=\sum_{s=-M_{ca}}^{M_{ca}}\sum_{m,n: \atop d_m-d_n=s} \hat{t}_s \frac{\exp(-js\theta)}{\vert \Omega_s\vert}=\sum_{s=-M_{ca}}^{M_{ca}}\hat{t}_s\vert \Omega_s\vert \frac{\exp(-js\theta)}{\vert \Omega_s\vert}\\
    &\underset{(a)}{=}\sum_{s=-M_{ca}}^{M_{ca}}\sum_{m,n: \atop d_m-d_n=s}[\Ryhat]_{m,n}[\Lamth]_{n,m}=\text{Tr}(\Ryhat\Lamth),
\end{align*}}
where $(a)$ follows from the redundancy averaged estimator where for all $m,n$ such that $d_m-d_n=s$, we have $\vert\Omega_{s}\vert\hat{t}_{s}=\sum_{d_m-d_n=s}[\Ryhat]_{m,n}$. Therefore, we have the following relation:
$f_\mf{e}(\theta)\!=\!\text{tr}\left(\Ey\Lamth\right)\!=\!\text{tr}\left((\mf{R}_{y}-\Ryov)\mf{\Lambda}(\theta)\right)=\text{tr}((\mf{R}_{y}-\mf{\widehat{R}}_{y})\mf{\Lambda}(\theta))\!=\!\frac{1}{L}\sum_{t=1}^{L}\left(\E[\mf{y}(t)^H\mf{\Lambda}(\theta)\mf{y}(t)]-\mf{y}(t)^H\mf{\Lambda}(\theta)\mf{y}(t)\right)$. 
Since the snapshots are i.i.d, we can define i.i.d random variables $\{ Z_t(\theta)\}_{t=1}^{L}$ as 
$
Z_t (\theta)  \triangleq \mathbf{y}(t)^H \mathbf{\Lambda}(\theta) \mathbf{y}(t) - \mathbb{E} (\mathbf{y}(t)^H \mathbf{\Lambda}(\theta) \mathbf{y}(t))
$ with $\mf{y}(t)\sim \mathcal{CN}(\mf{0},\Ry)$.
Note that $\Lamth$ is Hermitian.
Hence, we can apply Lemma \ref{lem:hw_snap} with $\mf{\Sigma}=\Ry$ and $\mf{A}=\Lamth$ to obtain $\forall \epsilon > 0$,
{\footnotesize\ben
&&\hspace{-0.3in}\mathbb{P}\left( \frac{1}{L} |\sum_{t = 1}^{L} Z_t(\theta)|\ge \epsilon \right) \le \label{eqn_bound_sum_Z_l}\\
&&\hspace{-0.3in} 2 \exp \left[ -c L \min \left( \frac{\epsilon^2}{2K^4\| \mathbf{R_{y}} \|_2^2 \|\mathbf{\Lambda}(\theta) \|_F^2}, \frac{\epsilon}{K^2\| \mathbf{R_{y}} \|_2 \| \mathbf{\Lambda}(\theta) \|_2}\right) \right]. \nonumber 
\een}
We want to obtain a universal upper bound that is similar to \eqref{eqn_bound_sum_Z_l} but not dependent on $\theta$. Notice,
$
\Vert\Lambda(\theta)\Vert_F^2=\frac{1}{\vert\Omega_0\vert}+\sum_{s=1}^{M_{ca}}\frac{2}{\vert \Omega_s\vert}\leq 2\Delta(\mathbb{S})
$.
Similarly, we can also bound $\Vert\Lambda(\theta)\Vert_2\leq \Vert\Lambda(\theta)\Vert_F \leq \sqrt{2\Delta(\mathbb{S})}$. This gives us the following bound:
{\footnotesize
\begin{align}
&\mathbb{P}\left( \frac{1}{L} |\sum_{t = 1}^{L} Z_t(\theta)|\ge \epsilon \right) \le \label{eqn:theta_free}\\
&\hspace{-0.2cm}2 \exp \left[ -c L \min \left( \frac{\epsilon^2}{4K^4\| \mathbf{R_{y}} \|_2^2 \Delta(\mathbb{S})},\frac{\epsilon}{K^2\| \mathbf{R_{y}} \|_2 \sqrt{2\Delta(\mathbb{S})}}\right) \right]. \nonumber 
\end{align}}
Note $f_{\mf{e}}$ is a trigonometric polynomial of order $M_{ca}$. Now, we will use Lemma \ref{lem:finite_point} to bound 
the spectral function $\vert f_{\mf{e}}(\theta)\vert$. 
$\mathbb{P} (\sup_{\theta \in [-\pi,\pi]}\vert f_{\mf{e}}(\theta)\vert\!\ge\! \epsilon)\!\le\!\mathbb{P} (2 \max_{1 \le k \le 4M_{ca}} |f_{e}(\theta_k)|\!\ge\!\epsilon)\le \sum_{k=1}^{4M_{ca}} \mathbb{P} \left( |f_{\mf{e}}(\theta_k)| \ge \frac{\epsilon}{2} \right)
\le 8 M_{ca} \exp \big[ -c_1 L \min \big( \frac{c_2\epsilon^2}{\| \mathbf{R_{y}} \|_2^2 \Delta(\mathbb{S})}, \frac{\epsilon}{\| \mathbf{R_{y}} \|_2 \sqrt{\Delta(\mathbb{S})}}\big) \big]$, 
where $c_1=c/(2\sqrt{2}K^2)$ ($c$ was given in \cref{lem:hw_snap}) and $c_2=1/(4\sqrt{2}K^2)=3/(16\sqrt{2})<1$. The first inequality follows due to Lemma \ref{lem:finite_point}, the second inequality follows from union bound. The last inequality is a consequence of the bound computed in \eqref{eqn:theta_free}.
\section{}
\subsection{Proof of Theorem \ref{Espiriterror}}
\label{app:ESP}
The proof uses several results from \cite{WeilinEsprit}. However, unlike \cite{WeilinEsprit} the underlying subspace of interest is the coarray subspace and the perturbation is due to covariance estimation error and noise. We provide key intermediate steps to make the results self-contained. 

Recall that columns of $\mf{U}$ and $\mf{\widehat{U}}$ are orthonormal bases for the subspaces $\mathcal{R}(\mf{U})$ and $\mathcal{R}(\mf{\widehat{U}})$. Let the principal angles between the subspaces  $\mathcal{R}(\mf{U})$ and $\mathcal{R}(\mf{\widehat{U}})$ be denoted as $\boldsymbol{\Theta}( \mathcal{R}(\mf{U}),\mathcal{R}(\mf{\widehat{U}})):=[\psi_1,\psi_2,\cdots,\psi_S]^T$ where $0 \leq \psi_1 \leq \psi_2 \leq \cdots \leq \psi_S \leq \pi/2$. Then from \cite{stewart1990matrix}, we have $ \cos(\psi_i)=\sigma_i( \mf{U}^H\mf{\widehat{U}})\ i=1,2,\cdots,S$.
Recall from \cref{lem:ESP_inv}, the output of ESPRIT is invariant to the choice of the basis. 
For ease of analysis, we will choose a pair of basis for $\mathcal{R}(\mf{U})$ and $\mathcal{R}(\mf{\widehat{U}})$, which are also known as ``canonical bases" \cite{WeilinEsprit}.  Let the SVD of the matrix $\mf{U}^{H}\mf{\widehat{U}}$ be of the form $\mf{U}^{H}\mf{\widehat{U}}\!:=\!\mf{L}\mf{\Sigma}_c\mf{R}^H, \mf{L},\mf{R}\in \mathbb{C}^{S\times S}$,
where $\mf{\Sigma}_c=\text{diag}(\sigma^{c}_1,\sigma^{c}_2,\cdots,\sigma^{c}_{S})$ where $\sigma_i^{c}=\sigma_i(\mf{U}^H\mf{\widehat{U}})$ are arranged in descending order. The canonical basis $\mf{U}^{(\text{c})}$ and $\mf{\widehat{U}}^{(\text{c})}$ are given by:
\begin{align}
  \mf{U}^{(\text{c})}:= \mf{U}\mf{L}, \quad\mf{\widehat{U}}^{(\text{c})}:=\mf{\widehat{U}}\mf{R} \label{eqn:u_can}
\end{align}
Using the canonical basis, we define the following matrices:
$
    \mf{\Psi}^{(\text{c})}\!:=\! \mf{U}^{(\text{c})\dagger}_0\mf{U}^{(\text{c})}_1,\ \mf{\widehat{\Psi}}^{(\text{c})}\!:=\! \mf{\widehat{U}}^{(\text{c})\dagger}_0\mf{\widehat{U}}^{(\text{c})}_1$. 
Since $\mathcal{R}(\mf{U})\!=\!\mathcal{R}(\mf{U}^{(\text{c})})$ and $\mathcal{R}(\mf{\widehat{U}})\!=\!\mathcal{R}(\mf{\widehat{U}}^{(\text{c})})$, we have $
\boldsymbol{\Theta}(\mathcal{R}(\mf{U}^{(\text{c})}),\mathcal{R}(\mf{\widehat{U}}^{(\text{c})}))\!=\!\boldsymbol{\Theta}(\mathcal{R}(\mf{U}),\mathcal{R}(\mf{\widehat{U}}))$. Notice that the canonical basis has the following property:
$\cos(\psi_i)=\sigma_i(\mf{U}^{\text{(c)}H}\mf{\widehat{U}}^{(\text{c})})=\mf{u}^{(\text{c})H}_i\mf{\widehat{u}}^{(\text{c})}_i$. 
We will use \cite[Lemma 2]{WeilinEsprit} that relates the matching distance error to the quantity $\Vert \mathbf{\widehat{\Psi}}^{(\text{c})} -\mathbf{\Psi}^{(\text{c})} \Vert_2$ and holds universally:
\ben 
\smash{\text{md}(\blds{\theta},\hat{\blds{\theta}}) \leq\pi \frac{S^{3/2}\sqrt{M_{ca}+1}}{\sigma_S(\mathbf{A}_{\mathbb{U}_{\mathbb{S}}}(\blds{\theta}))}\Vert \mathbf{\widehat{\Psi}}^{(\text{c})} -\mathbf{\Psi}^{(\text{c})} \Vert_2. \label{eqn:md_psi_bound}}
\een
\subsection{Relating $\Vert \mathbf{\widehat{\Psi}}^{(\text{c})} -\mathbf{\Psi}^{(\text{c})} \Vert_2$ to $\Vert\EL\Vert_2$} 
Let $\mf{B}=\mf{A}+\mf{N} \in \mathbb{C}^{M \times N}$, where $\text{rank}(\mf{A})\geq L$. Suppose $\psi_L$ is the largest principal angle between 
the subspace spanned by $L$ principal singular vectors (corresponding to $L$ largest singular values) of $\mf{A}$ and $\mf{B}$, respectively. If $\sigma_{L+1}(\mf{A})\leq \alpha$ and $\sigma_L(\mf{B})\geq \alpha+\delta$ for some $\alpha\geq 0$ and $\delta>0$ then, Wedin's Theorem \cite{wedin1972perturbation} states that:
\begin{equation}
   \smash{\sin(\psi_L) \leq {\|\mf{N}\|_2}/{\delta}}\label{eqn:wedin_main}
\end{equation}
\begin{lem}\label{anglebetweensubspace}
Suppose $ \sigma_S(\mathbf{T}_{ca}) \geq 2\Vert \mathbf{E}_L \Vert_2 $ and $\beta=p_{\min}\sigma_S^2(\mf{A}_{\mathbb{U}_{\mathbb{S}}}(\blds{\theta}))-\sigma^2>0$. Then 
\ben
\smash{\sin(\psi_S)\leq {2\Vert \mathbf{E}_L \Vert_2 }/{\beta}} \label{eqn:sin_theta}
\een
\end{lem}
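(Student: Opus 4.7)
The plan is to invoke Wedin's $\sin\Theta$ theorem (restated in \eqref{eqn:wedin_main}) on the pair $\mf{A}:=\mf{T}_{ca}$, $\mf{B}:=\mf{\widehat{T}}_{ca}$, treating $\mf{N}:=\mf{\widehat{T}}_{ca}-\mf{T}_{ca}=-\EL$ as the perturbation. Since $\psi_S$ is the largest principal angle between the top-$S$ eigenspaces of $\mf{T}_{ca}$ and $\mf{\widehat{T}}_{ca}$, I set $L=S$ in \eqref{eqn:wedin_main} and need to supply a threshold $\alpha\ge\sigma_{S+1}(\mf{T}_{ca})$ together with a gap $\delta>0$ such that $\sigma_S(\mf{\widehat{T}}_{ca})\ge\alpha+\delta$. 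The natural choice will be $\alpha=\sigma^2$, and the key task reduces to extracting a gap $\delta\ge\beta/2$ from the eigen-gap hypothesis.

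First, I would exploit the decomposition \eqref{Teigdecomp}, $\mf{T}_{ca}=\mf{T}_0+\sigma^2\mf{I}_{M_{ca}+1}$, where $\mf{T}_0=\mf{A}_{\mathbb{U}_{\mathbb{S}}}(\blds{\theta})\mf{P}\mf{A}_{\mathbb{U}_{\mathbb{S}}}(\blds{\theta})^H$ is PSD of rank exactly $S$ by \eqref{rangespequal}. The spectrum of $\mf{T}_{ca}$ is therefore the spectrum of $\mf{T}_0$ shifted by $\sigma^2$, which yields both $\sigma_{S+1}(\mf{T}_{ca})=\sigma^2$ and
$$\sigma_S(\mf{T}_{ca})=\sigma_S(\mf{T}_0)+\sigma^2\;\ge\;p_{\min}\sigma_S^2(\mf{A}_{\mathbb{U}_{\mathbb{S}}}(\blds{\theta}))+\sigma^2\;=\;\beta+2\sigma^2,$$
where the inequality uses $\sigma_S(\mf{A}\mf{P}\mf{A}^H)\ge p_{\min}\sigma_S^2(\mf{A})$ and the last equality invokes the definition of $\beta$ in \eqref{eqn:beta}.

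Next, I would transfer this lower bound to the perturbed matrix via Weyl's inequality, $\sigma_S(\mf{\widehat{T}}_{ca})\ge\sigma_S(\mf{T}_{ca})-\|\EL\|_2$. Combining with the hypothesis $\sigma_S(\mf{T}_{ca})\ge 2\|\EL\|_2$ gives $\sigma_S(\mf{\widehat{T}}_{ca})\ge\sigma_S(\mf{T}_{ca})/2\ge\beta/2+\sigma^2$. Hence $\alpha=\sigma^2$ and $\delta=\beta/2>0$ satisfy the premises of Wedin's theorem, and applying \eqref{eqn:wedin_main} produces the stated bound $\sin(\psi_S)\le\|\mf{N}\|_2/\delta=2\|\EL\|_2/\beta$.

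The argument is essentially a three-step chain (identify the noise floor $\sigma^2$ at index $S+1$, lower bound $\sigma_S(\mf{T}_{ca})$ using $\beta$, and push this bound through Weyl to $\mf{\widehat{T}}_{ca}$), so I do not anticipate a major obstacle. The only point to handle with care is ensuring that the perturbed signal subspace (basis of $\widehat{\mf{U}}$) is well-defined, i.e.\ $\sigma_S(\mf{\widehat{T}}_{ca})>\sigma_{S+1}(\mf{\widehat{T}}_{ca})$; this is automatically implied by $\sigma_S(\mf{\widehat{T}}_{ca})\ge\beta/2+\sigma^2$ together with Weyl applied at index $S+1$, so no extra work is needed.
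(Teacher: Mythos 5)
Your proposal is correct and follows essentially the same route as the paper's proof: identify $\sigma_{S+1}(\mf{T}_{ca})=\sigma^2$ via \eqref{Teigdecomp}, lower-bound $\sigma_S(\mf{\widehat{T}}_{ca})$ through Weyl's inequality and the hypothesis $\sigma_S(\mf{T}_{ca})\geq 2\Vert\EL\Vert_2$, and apply Wedin's theorem with $\alpha=\sigma^2$ and a gap of at least $\beta/2$. The only cosmetic difference is that you take $\delta=\beta/2$ directly while the paper takes $\delta=\sigma_S(\mf{\widehat{T}}_{ca})-\sigma^2$ and then bounds it below by $\beta/2$; the resulting estimate is identical.
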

\begin{proof}
Recall that $\hat{\mathbf{T}}_{ca}\!=\! \mathbf{T}_{ca} - \mf{E}_L$. To apply Wedin's theorem, we need to characterize quantities $\alpha$ and $\delta$ such that: $\sigma_S(\hat{\mathbf{T}}_{ca}) \geq \delta + \alpha, \quad \sigma_{S+1}(\mathbf{T}_{ca}) \leq \alpha$. 
From \eqref{Teigdecomp}, we have $\sigma_{S+1}(\mathbf{T}_{ca})\!=\!\sigma^2$. We choose $\alpha\!=\!\sigma^2$.
Using Weyl's ~inequality,
$\sigma_S(\mathbf{\widehat{T}}_{ca}) \geq \sigma_S(\mathbf{T}_{ca}) - \Vert \mathbf{E}_L \Vert_2\overset{(a)}{\geq}{\sigma_S(\mathbf{T}_{ca})}/{2}
\overset{(b)}{=}{(\sigma_{S}(\mf{A}_{\mathbb{U}_{\mathbb{S}}}(\blds{\theta})\mf{P}\mf{A}_{\mathbb{U}_{\mathbb{S}}}(\blds{\theta})^H)+\sigma^2)}/{2}$, where $(a)$ follows from the assumption $2\Vert \mf{E}_L\Vert_2\leq \sigma_S(\mf{T}_{ca})$ and $(b)$ follows from \eqref{Teigdecomp}. Combining with the preceding inequality, we obtain
$\sigma_S(\hat{\mathbf{T}}_{ca})-\sigma^2 \geq {(\sigma_S(\mf{A}_{\mathbb{U}_{\mathbb{S}}}(\blds{\theta})\mf{P}\mf{A}_{\mathbb{U}_{\mathbb{S}}}(\blds{\theta})^H)-\sigma^2)}/{2}
\geq \beta/{2}> 0$, 
where the last term is positive due to the given condition. Then we can choose $\delta=\sigma_S(\mf{\widehat{T}}_{\text{ca}})-\sigma^2$ which satisfies $\sigma_S(\mf{\widehat{T}}_{\text{ca}})=\alpha+\delta$ with $\delta>0$. The proof is completed by using \eqref{eqn:wedin_main}.
\end{proof}
\begin{lem}\label{lem:combo}
If $p_{\min}\sigma_S^2(\mf{A}_{\mathbb{U}_{\mathbb{S}}}(\blds{\theta}))>\sigma^2$ and 
{\footnotesize
\ben
\Vert \mathbf{E}_L \Vert_2 \leq \frac{\sigma_S(\mathbf{U}_0^{(\text{c})})(\sigma_S(\mf{A}_{\mathbb{U}_{\mathbb{S}}}(\blds{\theta})\mf{P}\mf{A}_{\mathbb{U}_{\mathbb{S}}}(\blds{\theta})^H)-\sigma^2)}{4\sqrt{2}} \label{eqn:EL_condition}
\een}
then 
$\Vert \mathbf{\Psi}^{(\text{c})} - \hat{\mathbf{\Psi}}^{(\text{c})} \Vert_2 \leq \frac{14\sqrt{2}\Vert\mathbf{E}_L \Vert_2}{\sigma_S^2(\mathbf{U}_0^{(\text{c})})(p_{\min}\sigma^2_S(\mf{A}_{\mathbb{U}_{\mathbb{S}}}(\blds{\theta}))-\sigma^2)}$.
\end{lem}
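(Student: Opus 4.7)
\subsection*{Proof sketch for Lemma \ref{lem:combo}}

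The plan is to first split $\mathbf{\Psi}^{(\text{c})}-\widehat{\mathbf{\Psi}}^{(\text{c})}$ into two pieces that can each be bounded via the spectral perturbation of the canonical basis, and then invoke Lemma \ref{anglebetweensubspace} to convert the subspace perturbation into a bound in terms of $\|\mathbf{E}_L\|_2$. Specifically, using the standard telescoping identity
\begin{equation*}
\mathbf{\Psi}^{(\text{c})}-\widehat{\mathbf{\Psi}}^{(\text{c})}
=\mathbf{U}_0^{(\text{c})\dagger}\bigl(\mathbf{U}_1^{(\text{c})}-\widehat{\mathbf{U}}_1^{(\text{c})}\bigr)
+\bigl(\mathbf{U}_0^{(\text{c})\dagger}-\widehat{\mathbf{U}}_0^{(\text{c})\dagger}\bigr)\widehat{\mathbf{U}}_1^{(\text{c})},
\end{equation*}
and the triangle inequality, it suffices to control (i) $\|\mathbf{U}_0^{(\text{c})\dagger}\|_2=1/\sigma_S(\mathbf{U}_0^{(\text{c})})$, (ii) $\|\mathbf{U}_i^{(\text{c})}-\widehat{\mathbf{U}}_i^{(\text{c})}\|_2$ for $i=0,1$, (iii) $\|\mathbf{U}_0^{(\text{c})\dagger}-\widehat{\mathbf{U}}_0^{(\text{c})\dagger}\|_2$, and (iv) $\|\widehat{\mathbf{U}}_1^{(\text{c})}\|_2\leq 1$ (since $\widehat{\mathbf{U}}_1^{(\text{c})}$ is a submatrix of a matrix with orthonormal columns up to a unitary rotation).

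For (ii), I would exploit the defining property of the canonical basis, $\mathbf{u}_i^{(\text{c})H}\widehat{\mathbf{u}}_i^{(\text{c})}=\cos\psi_i$. A short calculation using $\|\mathbf{u}_i^{(\text{c})}-\widehat{\mathbf{u}}_i^{(\text{c})}\|_2^2=2-2\cos\psi_i\leq 2\sin^2\psi_S$ (combined with the fact that restricting to the first/last $M_{ca}$ rows cannot increase the spectral norm) yields $\|\mathbf{U}_i^{(\text{c})}-\widehat{\mathbf{U}}_i^{(\text{c})}\|_2\leq\sqrt{2}\sin\psi_S$ for $i=0,1$. Under the hypothesis $p_{\min}\sigma_S^2(\mathbf{A}_{\mathbb{U}_{\mathbb{S}}}(\blds{\theta}))>\sigma^2$ and the assumed bound on $\|\mathbf{E}_L\|_2$, one verifies the condition $\sigma_S(\mathbf{T}_{ca})\geq 2\|\mathbf{E}_L\|_2$ needed for Lemma \ref{anglebetweensubspace} (which follows because $\sigma_S(\mathbf{T}_{ca})\geq p_{\min}\sigma_S^2(\mathbf{A}_{\mathbb{U}_{\mathbb{S}}}(\blds{\theta}))+\sigma^2$ and $\sigma_S(\mathbf{U}_0^{(\text{c})})\leq 1$), and the lemma then gives $\sin\psi_S\leq 2\|\mathbf{E}_L\|_2/\beta$.

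For (iii), I would use a standard pseudoinverse perturbation inequality of Wedin, $\|\mathbf{U}_0^{(\text{c})\dagger}-\widehat{\mathbf{U}}_0^{(\text{c})\dagger}\|_2\leq\sqrt{2}\,\max\!\bigl(\|\mathbf{U}_0^{(\text{c})\dagger}\|_2^{2},\|\widehat{\mathbf{U}}_0^{(\text{c})\dagger}\|_2^{2}\bigr)\,\|\mathbf{U}_0^{(\text{c})}-\widehat{\mathbf{U}}_0^{(\text{c})}\|_2$, which requires both matrices to have full column rank $S$. To guarantee this and to retain a clean bound in terms of $\sigma_S(\mathbf{U}_0^{(\text{c})})$ only, I would apply Weyl's inequality together with the assumed upper bound \eqref{eqn:EL_condition} on $\|\mathbf{E}_L\|_2$. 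The threshold is chosen precisely so that $\|\mathbf{U}_0^{(\text{c})}-\widehat{\mathbf{U}}_0^{(\text{c})}\|_2\leq\sqrt{2}\sin\psi_S\leq \tfrac{1}{2}\sigma_S(\mathbf{U}_0^{(\text{c})})$, from which Weyl gives $\sigma_S(\widehat{\mathbf{U}}_0^{(\text{c})})\geq \tfrac{1}{2}\sigma_S(\mathbf{U}_0^{(\text{c})})$ and therefore $\|\widehat{\mathbf{U}}_0^{(\text{c})\dagger}\|_2\leq 2/\sigma_S(\mathbf{U}_0^{(\text{c})})$.

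Combining the pieces, the first term in the telescoping identity contributes $\leq \sqrt{2}\sin\psi_S/\sigma_S(\mathbf{U}_0^{(\text{c})})$, while the second contributes at most $\sqrt{2}\cdot 4/\sigma_S^{2}(\mathbf{U}_0^{(\text{c})})\cdot\sqrt{2}\sin\psi_S=8\sin\psi_S/\sigma_S^{2}(\mathbf{U}_0^{(\text{c})})$. Because $\sigma_S(\mathbf{U}_0^{(\text{c})})\leq 1$, both terms can be grouped into $\sigma_S^{-2}(\mathbf{U}_0^{(\text{c})})$, and substituting $\sin\psi_S\leq 2\|\mathbf{E}_L\|_2/\beta$ with $\beta=p_{\min}\sigma_S^2(\mathbf{A}_{\mathbb{U}_{\mathbb{S}}}(\blds{\theta}))-\sigma^2$ yields the stated constant $14\sqrt{2}$ (after a routine arithmetic cleanup). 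The main obstacle, aside from the bookkeeping of constants, is verifying that the threshold \eqref{eqn:EL_condition} is exactly what is needed both to trigger the Wedin-type pseudoinverse inequality and to let $\sigma_S(\widehat{\mathbf{U}}_0^{(\text{c})})$ be safely lower bounded via Weyl; once this is in place the remaining steps are mechanical.
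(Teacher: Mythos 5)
Your overall route is essentially the paper's: the same telescoping decomposition $\mathbf{\Psi}^{(\text{c})}-\widehat{\mathbf{\Psi}}^{(\text{c})}=\mathbf{U}_0^{(\text{c})\dagger}(\mathbf{U}_1^{(\text{c})}-\widehat{\mathbf{U}}_1^{(\text{c})})+(\mathbf{U}_0^{(\text{c})\dagger}-\widehat{\mathbf{U}}_0^{(\text{c})\dagger})\widehat{\mathbf{U}}_1^{(\text{c})}$, the same canonical-basis identity giving $\Vert\mathbf{U}^{(\text{c})}-\widehat{\mathbf{U}}^{(\text{c})}\Vert_2\leq\sqrt{2}\sin\psi_S$, and the same appeal to \cref{anglebetweensubspace}. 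The one substantive difference is the pseudoinverse perturbation step: the paper invokes a bound of the form $\Vert\mathbf{F}^{\dagger}-\widetilde{\mathbf{F}}^{\dagger}\Vert_2\leq 3\Vert\widetilde{\mathbf{E}}\Vert_2/(\sigma_S(\mathbf{F})(\sigma_S(\mathbf{F})-\Vert\widetilde{\mathbf{E}}\Vert_2))$ from Hansen, yielding a factor $6/\sigma_S^2(\mathbf{U}_0^{(\text{c})})$, whereas you use Wedin's $\sqrt{2}\max(\Vert\mathbf{F}^{\dagger}\Vert_2^2,\Vert\widetilde{\mathbf{F}}^{\dagger}\Vert_2^2)$ form together with Weyl, yielding $4\sqrt{2}/\sigma_S^2(\mathbf{U}_0^{(\text{c})})$. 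Both require the same half-singular-value condition $\Vert\widehat{\mathbf{U}}_0^{(\text{c})}-\mathbf{U}_0^{(\text{c})}\Vert_2\leq\sigma_S(\mathbf{U}_0^{(\text{c})})/2$, and your final constant $2\sqrt{2}+16$ is indeed below $14\sqrt{2}$, so the arithmetic closes.

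There is, however, one step that does not go through as written. You verify the condition $\Vert\widehat{\mathbf{U}}_0^{(\text{c})}-\mathbf{U}_0^{(\text{c})}\Vert_2\leq\tfrac{1}{2}\sigma_S(\mathbf{U}_0^{(\text{c})})$ by chaining $\sqrt{2}\sin\psi_S\leq 2\sqrt{2}\Vert\mathbf{E}_L\Vert_2/\beta$ with the threshold \eqref{eqn:EL_condition}. But the threshold is stated in terms of $\sigma_S(\mf{A}_{\mathbb{U}_{\mathbb{S}}}(\blds{\theta})\mf{P}\mf{A}_{\mathbb{U}_{\mathbb{S}}}(\blds{\theta})^H)-\sigma^2$, while your sine bound has denominator $\beta=p_{\min}\sigma_S^2(\mf{A}_{\mathbb{U}_{\mathbb{S}}}(\blds{\theta}))-\sigma^2$; since $\sigma_S(\mf{A}\mf{P}\mf{A}^H)\geq p_{\min}\sigma_S^2(\mf{A})$ the ratio $(\sigma_S(\mf{A}\mf{P}\mf{A}^H)-\sigma^2)/\beta$ can exceed $1$ (substantially so when the dynamic range $p_{\max}/p_{\min}$ is large), and the chain only yields $\sqrt{2}\sin\psi_S\leq\tfrac{1}{2}\sigma_S(\mathbf{U}_0^{(\text{c})})\cdot(\sigma_S(\mf{A}\mf{P}\mf{A}^H)-\sigma^2)/\beta$, which is not the needed inequality. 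The fix is the one the paper uses: the proof of \cref{anglebetweensubspace} actually establishes the sharper intermediate bound $\sin\psi_S\leq 2\Vert\mathbf{E}_L\Vert_2/(\sigma_S(\mf{A}\mf{P}\mf{A}^H)-\sigma^2)$ (since $\delta=\sigma_S(\widehat{\mathbf{T}}_{ca})-\sigma^2\geq(\sigma_S(\mf{A}\mf{P}\mf{A}^H)-\sigma^2)/2$); using that version for the Weyl verification, and relaxing to the $\beta$-denominated form only at the very end, closes the argument. A smaller cosmetic point: bounding columnwise norms $\Vert\mathbf{u}_i^{(\text{c})}-\widehat{\mathbf{u}}_i^{(\text{c})}\Vert_2$ does not by itself control the spectral norm of the difference matrix; you should compute the Gram matrix $(\mathbf{U}^{(\text{c})}-\widehat{\mathbf{U}}^{(\text{c})})^H(\mathbf{U}^{(\text{c})}-\widehat{\mathbf{U}}^{(\text{c})})=2\mathbf{I}-2\bSigma_c$ as the paper does, which is diagonal and gives the same $\sqrt{2}\sin\psi_S$ without a $\sqrt{S}$ loss.
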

\begin{proof}
From the definition of $\mathbf{U}^{(\text{c})},\mf{\widehat{U}}^{(\text{c})}$ we have:
\begin{align}
    &\Vert \mathbf{U}^{(\text{c})}-\mf{\widehat{U}}^{(\text{c})} \Vert_2^2 =  \Vert (\mathbf{U}^{(\text{c})}-\mf{\widehat{U}}^{(\text{c})})^H(\mathbf{U}^{(\text{c})}-\mf{\widehat{U}}^{(\text{c})}) \Vert_2 \nonumber\\
    & = 2(1 - \cos(\psi_S))\leq 2(1 - \cos^2(\psi_S))=2\sin^2(\psi_S).\label{eqn:u_temp}
\end{align}
By the assumption of this lemma,
$2\Vert \mathbf{E}_L \Vert_2 \leq\sigma_S(\mathbf{U}_0^{(\text{c})})(\sigma_S(\mf{A}_{\mathbb{U}_{\mathbb{S}}}(\blds{\theta})\mf{P}\mf{A}_{\mathbb{U}_{\mathbb{S}}}(\blds{\theta})^H)\!-\!\sigma^2)$ 
and $\sigma_S(\mathbf{U}_0^{(\text{c})})\leq 1$, we have
$2\Vert \mathbf{E}_L \Vert_2\leq \sigma_S(\mathbf{T}_{ca})\sigma_S(\mathbf{U}_0^{(\text{c})})\leq \sigma_S(\mathbf{T}_{ca})
$. This together with the assumption $p_{\min}\sigma_S^2(\mf{A}_{\mathbb{U}_{\mathbb{S}}}(\blds{\theta}))>\sigma^2$ enables us to apply Lemma \ref{anglebetweensubspace}. Combining \eqref{eqn:sin_theta} with \eqref{eqn:u_temp} we obtain the following bound:
{\footnotesize
\ben
\Vert \mathbf{\widehat{U}}^{(\text{c})}-\mathbf{U}^{(\text{c})} \Vert_2 \leq \frac{2\sqrt{2}\Vert \mathbf{E}_L\Vert_2}{\sigma_S(\mf{A}_{\mathbb{U}_{\mathbb{S}}}(\blds{\theta})\mf{P}\mf{A}_{\mathbb{U}_{\mathbb{S}}}(\blds{\theta})^H)-\sigma^2}.\label{eqn:U_diff}
\een}
Notice that 
{\footnotesize
\ben
&&\|\widehat{\boldsymbol{\Psi}}^{(\text{c})}-\boldsymbol{\Psi}^{(\text{c})}\|_2=\|(\mf{\widehat{U}}_0^{{(\text{c})}\dagger}-\mf{U}_0^{{(\text{c})}\dagger})\mf{\widehat{U}}^{(\text{c})}_1\nonumber+\mf{U}_0^{^{(\text{c})}\dagger}(\mf{\widehat{U}}_1^{(\text{c})}-\mf{U}_1^{(\text{c})})\|_2 \nonumber\\&&\le \|\mf{\widehat{U}}_0^{(\text{c})\dagger}-\mf{U}_0^{(\text{c})\dagger}\|_2\|\mf{\widehat{U}}_1^{(\text{c})}\|_2+\|\mf{U}_0^{(\text{c})\dagger}\|_2\|\mf{\widehat{U}}_1^{(\text{c})}-\mf{U}_1^{(\text{c})}\|_2\nonumber\\&&\le \|\mf{\widehat{U}}_0^{(\text{c})\dagger}-\mf{U}_0^{(\text{c})\dagger}\|_2+ \|\mf{U}_0^{(\text{c})\dagger}\|_2\|\mf{\widehat{U}}^{(\text{c})}-\mf{U}^{(\text{c})}\|_2\label{eqn:psi_bound}
\een}
where the last inequality follows from the fact that 
$\mf{\widehat{U}}^{(\text{c})}_1,\mf{\widehat{U}}^{(\text{c})}_1-\mf{U}^{(\text{c})}_1$ are submatrices of $\mf{\widehat{U}}^{(\text{c})}$ and $\mf{\widehat{U}}^{(\text{c})}-\mf{U}^{(\text{c})}$, respectively. Therefore, we have $\|\mf{\widehat{U}}_1^{(\text{c})}\|_2\le \|\mf{\widehat{U}}^{(\text{c})}\|_2=1$, and $\|\mf{\widehat{U}}_1^{(\text{c})}-\mf{U}_1^{(\text{c})}\|_2\le \|\mf{\widehat{U}}^{(\text{c})}-\mf{U}^{(\text{c})}\|_2$. 
We use a result from \cite[Theorem 3.2]{hansen1987truncatedsvd} which states that a matrix $\mathbf{F}$ with rank $S$, and its perturbed matrix $\widetilde{\mathbf{F}}=\mathbf{F}+\widetilde{\mathbf{E}}$ satisfy the following inequality:
$\Vert \mathbf{F}^{\dagger}-\widetilde{\mathbf{F}}^{\dagger}\Vert_2 \leq 3\Vert \widetilde{\mathbf{E}} \Vert_2/(\sigma_S(\mathbf{F})(\sigma_S(\mathbf{F})-\Vert\widetilde{\mathbf{E}}\Vert_2))
$
provided $\Vert \widetilde{\mathbf{E}} \Vert_2 < \sigma_S(\mathbf{F})$.
From \eqref{eqn:U_diff}, and using the assumption of the lemma we have:
{\footnotesize
\begin{align}
 \Vert \widehat{\mathbf{U}}_0^{(\text{c})}-\mathbf{U}_0^{(\text{c})} \Vert_2&\leq \Vert \widehat{\mathbf{U}}^{(\text{c})}-\mathbf{U}^{(\text{c})} \Vert_2 \leq \frac{2\sqrt{2}\Vert \mathbf{E}_L \Vert_2 }{\sigma_S(\mf{A}_{\mathbb{U}_{\mathbb{S}}}(\blds{\theta})\mf{P}\mf{A}_{\mathbb{U}_{\mathbb{S}}}(\blds{\theta})^H)-\sigma^2}\nonumber\\
&\leq {\sigma_S(\mathbf{U}_0^{(\text{c})})}/{2}.\label{eqn:U0_bound_new}
\end{align}}
We can use the aforementioned result by substituting $\mathbf{F}$ with $\mf{U}_0^{(\text{c})}$, and $\widetilde{\mathbf{F}}$ with $\mf{\widehat{U}}_0^{(\text{c})}$:
$\|(\mf{\widehat{U}}_0^{(\text{c})\dagger}-\mf{U}_0^{(\text{c})\dagger})\|_2\le \frac{3\|(\mf{\widehat{U}}^{(\text{c})}_0-\mf{U}^{(\text{c})}_0)\|_2}{\sigma_S(\mf{U}^{(\text{c})}_0)(\sigma_S(\mf{U}^{(\text{c})}_0)-\|\mf{\widehat{U}}^{(\text{c})}_0-\mf{U}^{(\text{c})}_0\|_2)}\!\le\!\frac{6\|\mf{\widehat{U}}_0^{(\text{c})}-\mf{U}_0^{(\text{c})}\|_2}{\sigma_S^2(\mf{U}_0^{(\text{c})})}\!\le\!\frac{6\|\mf{\widehat{U}}^{(\text{c})}-\mf{U}^{(\text{c})}\|_2}{\sigma_S^2(\mf{U}_0^{(\text{c})})}
$, 
where the second inequality follows from \eqref{eqn:U0_bound_new}.
Combining this with \eqref{eqn:psi_bound}, we get the final bound:
$\Vert \mathbf{\Psi}^{(\text{c})} - \hat{\mathbf{\Psi}}^{(\text{c})} \Vert_2 \leq \frac{6\|(\mf{\widehat{U}}^{(\text{c})}-\mf{U}^{(\text{c})})\|_2}{\sigma_S^2(\mf{U}_0^{(\text{c})})}+\frac{1}{\sigma_{S}(\mf{U}_0^{(\text{c})})}\|(\mf{\widehat{U}}^{(\text{c})}-\mf{U}^{(\text{c})})\|_2
    \leq \frac{7\|(\mf{\widehat{U}}^{(\text{c})}-\mf{U}^{(\text{c})})\|_2}{\sigma_S^2(\mf{U}_0^{(\text{c})})}
    \leq \frac{14\sqrt{2}\Vert\mathbf{E}_L \Vert_2}{\sigma_S^2(\mathbf{U}_0^{(\text{c})})(\sigma_S(\mathbf{A}_{\mathbb{U}_{\mathbb{S}}}(\blds{\theta}))\mf{P}\mathbf{A}_{\mathbb{U}_{\mathbb{S}}}(\blds{\theta}))^H)-\sigma^2)}  \leq {14\sqrt{2}\Vert\mathbf{E}_L \Vert_2}/{\sigma_S^2(\beta\mathbf{U}_0^{(\text{c})})}.$
\end{proof}
\vspace{-0.2cm}
Next, we state the following Lemma from \cite{WeilinEsprit} that can be used to obtain a lower bound on $\sigma_S^2(\mathbf{U}_0^{(\text{c})})$.
\begin{lem}[Lemma 3, \cite{WeilinEsprit}]\label{lem:U0}
Let $\mf{U}^{(a)}$ be any orthonormal basis for $\mathcal{R}(\mathbf{A}_{\mathbb{U}_{\mathbb{S}}}(\blds{\theta}))$. Then the following holds:
$
\sigma_S^2(\mathbf{U}^{(a)}_0) \geq \max(1-\frac{S}{\sigma_S^2(\mathbf{A}_{\mathbb{U}_{\mathbb{S}}}(\blds{\theta}))},4^{-S})
$
\end{lem}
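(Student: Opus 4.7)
The plan is to establish the two lower bounds on $\sigma_S^2(\mathbf{U}^{(a)}_0)$ separately. The first captures the well-conditioned regime, while the second provides a universal floor that is useful when $\sigma_S(\mf{A}_{\mathbb{U}_{\mathbb{S}}}(\blds{\theta}))$ happens to be small.

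For the first bound, I would denote by $\mathbf{u}_{\text{last}} \in \mathbb{C}^{1 \times S}$ the final row of $\mathbf{U}^{(a)}$, so that $\mathbf{U}^{(a)}$ decomposes vertically into $\mathbf{U}^{(a)}_0$ (the first $M_{\text{ca}}$ rows) and $\mathbf{u}_{\text{last}}$. Orthonormality of its columns gives $\mathbf{U}^{(a)H}_0 \mathbf{U}^{(a)}_0 = \mathbf{I}_S - \mathbf{u}_{\text{last}}^H \mathbf{u}_{\text{last}}$, a rank-one perturbation of $\mathbf{I}_S$ whose smallest eigenvalue equals $1 - \|\mathbf{u}_{\text{last}}\|_2^2$. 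Hence $\sigma_S^2(\mathbf{U}^{(a)}_0) = 1 - \|\mathbf{u}_{\text{last}}\|_2^2$, and the task reduces to upper bounding $\|\mathbf{u}_{\text{last}}\|_2^2$. Since $\mathbf{U}^{(a)}$ spans $\mathcal{R}(\mf{A}_{\mathbb{U}_{\mathbb{S}}}(\blds{\theta}))$, one can write $\mathbf{U}^{(a)} = \mf{A}_{\mathbb{U}_{\mathbb{S}}}(\blds{\theta})\mathbf{Q}$ for an invertible $\mathbf{Q} \in \mathbb{C}^{S\times S}$. Orthonormality translates to $\mathbf{Q}^H \mf{A}^H_{\mathbb{U}_{\mathbb{S}}}(\blds{\theta}) \mf{A}_{\mathbb{U}_{\mathbb{S}}}(\blds{\theta}) \mathbf{Q} = \mathbf{I}_S$, which forces $\|\mathbf{Q}\|_2 = 1/\sigma_S(\mf{A}_{\mathbb{U}_{\mathbb{S}}}(\blds{\theta}))$ (since $\mf{A}_{\mathbb{U}_{\mathbb{S}}}(\blds{\theta})$ and $\mathbf{Q}^{-1}$ must then share the same singular values). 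The final row of $\mf{A}_{\mathbb{U}_{\mathbb{S}}}(\blds{\theta})$ has unit-modulus entries and hence $\ell_2$ norm $\sqrt{S}$, so sub-multiplicativity gives $\|\mathbf{u}_{\text{last}}\|_2^2 \leq S/\sigma_S^2(\mf{A}_{\mathbb{U}_{\mathbb{S}}}(\blds{\theta}))$, which yields the first bound.

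For the second bound, the key observation is that $\|\mathbf{u}_{\text{last}}\|_2^2 = \mathbf{e}_{M_{\text{ca}}+1}^H \mathbf{P}_{\mathcal{R}(\mf{A})} \mathbf{e}_{M_{\text{ca}}+1}$ depends only on the range of $\mf{A}_{\mathbb{U}_{\mathbb{S}}}(\blds{\theta})$ and not on the choice of basis. By duality this equals $\sup_{\mathbf{v}\neq 0} |(\mf{A}_{\mathbb{U}_{\mathbb{S}}}(\blds{\theta})\mathbf{v})_{M_{\text{ca}}+1}|^2/\|\mf{A}_{\mathbb{U}_{\mathbb{S}}}(\blds{\theta})\mathbf{v}\|_2^2$. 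Writing vectors in $\mathcal{R}(\mf{A})$ as discrete exponential sums $f(n) = \sum_{i=1}^S v_i z_i^n$ with $z_i = e^{j\pi \sin(\theta_i)}$ on the unit circle, the quantity becomes an extremal ratio $|f(M_{\text{ca}})|^2/\sum_{n=0}^{M_{\text{ca}}} |f(n)|^2$. A Vandermonde-specific combinatorial/approximation-theoretic argument, as in \cite[Lemma 3]{WeilinEsprit}, then shows that this ratio is bounded above by $1 - 4^{-S}$ uniformly in the choice of nodes; intuitively, the reproducing kernel of an $S$-dimensional subspace spanned by distinct complex exponentials cannot concentrate its mass entirely on a single coordinate.

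The main obstacle is the second bound. The first is a mechanical consequence of orthonormality and sub-multiplicativity, whereas the universal constant $4^{-S}$ requires a dedicated extremal estimate that exploits only the Vandermonde structure on the unit circle, with no reliance on $\sigma_S(\mf{A})$. This basis-independent universality is precisely what makes the second bound useful in regimes where $\sigma_S(\mf{A}_{\mathbb{U}_{\mathbb{S}}}(\blds{\theta}))$ may be much smaller than $\sqrt{S}$ (e.g., for nearly colliding nodes), rendering the first bound vacuous.
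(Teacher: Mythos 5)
The paper itself offers no proof of this lemma: it is imported verbatim as Lemma~3 of \cite{WeilinEsprit}, so there is no internal argument to compare against. Judged on its own terms, your treatment of the first branch of the maximum is complete and correct. The identity $\mf{U}_0^{(a)H}\mf{U}_0^{(a)}=\mf{I}_S-\mf{u}_{\text{last}}^H\mf{u}_{\text{last}}$ does give $\sigma_S^2(\mf{U}_0^{(a)})=1-\Vert\mf{u}_{\text{last}}\Vert_2^2$; the factorization $\mf{U}^{(a)}=\mf{A}_{\mathbb{U}_{\mathbb{S}}}(\blds{\theta})\mf{Q}$ with $\mf{Q}^H\mf{A}_{\mathbb{U}_{\mathbb{S}}}^H(\blds{\theta})\mf{A}_{\mathbb{U}_{\mathbb{S}}}(\blds{\theta})\mf{Q}=\mf{I}_S$ indeed forces $\Vert\mf{Q}\Vert_2=1/\sigma_S(\mf{A}_{\mathbb{U}_{\mathbb{S}}}(\blds{\theta}))$; and the unit-modulus last row then yields $\Vert\mf{u}_{\text{last}}\Vert_2^2\leq S/\sigma_S^2(\mf{A}_{\mathbb{U}_{\mathbb{S}}}(\blds{\theta}))$. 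That half is airtight and is the standard route.

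The genuine gap is in the second branch. You correctly reduce it to the basis-independent claim $\Vert(\mf{I}-\mf{P})\mf{e}_{M_{ca}+1}\Vert_2^2\geq 4^{-S}$, where $\mf{P}$ is the orthogonal projector onto $\mathcal{R}(\mf{A}_{\mathbb{U}_{\mathbb{S}}}(\blds{\theta}))$, but you then assert the estimate by pointing back to ``Lemma~3 of \cite{WeilinEsprit}'' --- i.e., you defer the only nontrivial step to the very lemma being proved, and the stated intuition about reproducing kernels is not a quantitative argument. The missing step is short and you should supply it: a vector $\mf{w}\in\mathbb{C}^{M_{ca}+1}$ is orthogonal to $\mathcal{R}(\mf{A}_{\mathbb{U}_{\mathbb{S}}}(\blds{\theta}))$ iff the polynomial $W(z)=\sum_{n=0}^{M_{ca}}\overline{w}_n z^n$ vanishes at every node $z_i$. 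Take $W(z)=z^{M_{ca}-S}\prod_{i=1}^{S}(z-z_i)$ (legitimate since $M_{ca}\geq S$). Its nonzero coefficients are, up to sign, elementary symmetric polynomials of the unimodular nodes, so each is bounded by a binomial coefficient and $\Vert\mf{w}\Vert_2^2\leq\sum_{k=0}^{S}\binom{S}{k}^2=\binom{2S}{S}\leq 4^{S}$, while the coefficient sitting in the last coordinate equals $1$. Hence $\Vert(\mf{I}-\mf{P})\mf{e}_{M_{ca}+1}\Vert_2\geq\vert\langle\mf{w},\mf{e}_{M_{ca}+1}\rangle\vert/\Vert\mf{w}\Vert_2\geq 2^{-S}$, which is exactly the $4^{-S}$ floor. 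With that paragraph inserted your proof is self-contained; without it, the second branch rests entirely on the citation.
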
 
\vspace{-0.5cm}
\begin{proof}[Proof of \cref{Espiriterror}]
Define $C_S=\frac{2^{-S}}{4\sqrt{2}}$ and $C_S'= 14\pi\sqrt{2}S^{3/2}4^{S}$. Under Assumption \textbf{A3}, these quantities are constants since $S$ is held fixed. If $\beta>0$ and the assumption $\Vert \EL \Vert_2 \leq C_S\beta$ ensures that condition \eqref{eqn:EL_condition} holds since $\sigma_S(\mathbf{U}_0^{(\text{c})})\geq 2^{-S}$ from \cref{lem:U0}. Now, we can apply \cref{lem:combo} to bound $\Vert\mathbf{\Psi}^{(\text{c})} - \hat{\mathbf{\Psi}}^{(\text{c})} \Vert_2$. We plug this bound on $\Vert\mathbf{\Psi}^{(\text{c})} - \hat{\mathbf{\Psi}}^{(\text{c})} \Vert_2$ in \eqref{eqn:md_psi_bound}:
$\text{md}(\blds{\theta},\blds{\widehat{\theta}}) \leq 14\sqrt{2}\pi \frac{S^{3/2}q\Vert\mathbf{E}_L \Vert_2}{\sigma_S^2(\mathbf{U}_0^{(\text{c})})C_S'}\leq q \Vert\mathbf{E}_L \Vert_2$, 
where the last inequality follows from the bound $\sigma^2_S(\mathbf{U}_0^{(\text{c})})\geq 4^{-S}$ in \cref{lem:U0}.
\end{proof}
\vspace{-0.5cm}
\subsection{Proof of Theorem 3}\label{app:prob_coesp}
We will utilize \cref{thm:coverr} and \cref{Espiriterror} to prove \cref{thm:main_coesp}. One can see from Theorem \ref{Espiriterror} that under the assumptions $\beta>0$ and $\Vert \mathbf{E}_L \Vert_2 \leq C_S\beta$ we can bound $\text{md}(\blds{\theta},\widehat{\blds{\theta}}) \leq q\Vert \mathbf{E}_L \Vert_2$.
For a given $\epsilon>0$, two cases arise:

\textbf{Case I ($\epsilon \leq C_S\beta q$):}
In this case, $\text{min}(C_S\beta,\frac{\epsilon}{q})=\epsilon/q$. Therefore, $\Vert \mathbf{E}_L \Vert_2 \leq \frac{\epsilon}{q} \Rightarrow \Vert \mathbf{E}_L \Vert_2 \leq C_S\beta $, and from \cref{Espiriterror} the matching distance error is less than $\text{md}(\blds{\theta},\widehat{\blds{\theta}}) \leq \epsilon$. This means
$P(\text{md}(\blds{\theta},\widehat{\blds{\theta}}) \leq \epsilon) \geq P\big(\Vert \mathbf{E}_L \Vert_2 \leq \frac{\epsilon}{q}\big)$.
From \cref{thm:coverr}, we can obtain the following tail bound:
{\footnotesize
\begin{eqnarray}  
&\hspace{-2in}P(\Vert \mathbf{E}_L \Vert_2 \leq  \nonumber
\frac{\epsilon}{q}) \geq \\ & 1 -8 M_{ca} \exp \left[ -c_1 L \min \left( \frac{c_2\epsilon^2 }{q^2\| \mathbf{R_{y}} \|_2^2 \Delta(\mathbb{S})}, \frac{\epsilon }{q\| \mathbf{R_{y}} \|_2 \sqrt{\Delta(\mathbb{S})}}\right) \right]. \label{eqn:case_1}
\end{eqnarray}}
\textbf{Case II ($\epsilon > C_S\beta q$):}
For values of $\epsilon$ satisfying $\epsilon >C_s\beta q$, we have $\min(C_S\beta,\epsilon/q)=C_S\beta$. Therefore, if $\Vert \mathbf{E}_L \Vert_2\leq C_S \beta$, then from \cref{Espiriterror} we have $\text{md}(\blds{\theta},\widehat{\blds{\theta}}) \leq C_S\beta q$. We obtain the following bound on the tail probability due to \cref{thm:coverr}, 
{\footnotesize
\begin{eqnarray}  
&\hspace{-2in}P(\Vert \mathbf{E}_L \Vert_2 \leq  \nonumber
C_S\beta ) \geq \\ & 1 -8 M_{ca} \exp \left[ -c_1 L \min \left( \frac{c_2C_S^2\beta^2 }{\| \mathbf{R_{y}} \|_2^2 \Delta(\mathbb{S})}, \frac{C_S\beta }{\| \mathbf{R_{y}} \|_2 \sqrt{\Delta(\mathbb{S})}}\right) \right]. \label{eqn:case_2}
\end{eqnarray}}
If the number of snapshots $L$ satisfy the following bound:
$   L \geq c_3\ln\big(\frac{8M_{ca}}{\delta}\big) \max \big( \frac{q^2_1 \Delta(\mathbb{S})}{c_2\epsilon^2}, \frac{q_1 \sqrt{\Delta(\mathbb{S})}}{\epsilon},\frac{L_0^2}{c_2},L_0\big)$, 
where $q_1=q\| \mathbf{R_{y}} \|_2, c_3=1/c_1$ and $L_0= \frac{ \| \mathbf{R_{y}} \|_2 \sqrt{\Delta(\mathbb{S})}}{C_S\beta}$
then combining \eqref{eqn:case_1} and \eqref{eqn:case_2} we obtain the following bound $P\big(\text{md}(\blds{\theta},\widehat{\blds{\theta}}) \leq \min(\epsilon, C_S\beta q))\geq 1-\delta$.

\end{appendices}
\vspace{-0.5cm}
\bibliographystyle{IEEEtran}
\bibliography{ref} 

\begin{thebibliography}{10}
\providecommand{\url}[1]{#1}
\csname url@samestyle\endcsname
\providecommand{\newblock}{\relax}
\providecommand{\bibinfo}[2]{#2}
\providecommand{\BIBentrySTDinterwordspacing}{\spaceskip=0pt\relax}
\providecommand{\BIBentryALTinterwordstretchfactor}{4}
\providecommand{\BIBentryALTinterwordspacing}{\spaceskip=\fontdimen2\font plus
\BIBentryALTinterwordstretchfactor\fontdimen3\font minus
  \fontdimen4\font\relax}
\providecommand{\BIBforeignlanguage}[2]{{%
\expandafter\ifx\csname l@#1\endcsname\relax
\typeout{** WARNING: IEEEtran.bst: No hyphenation pattern has been}%
\typeout{** loaded for the language `#1'. Using the pattern for}%
\typeout{** the default language instead.}%
\else
\language=\csname l@#1\endcsname
\fi
#2}}
\providecommand{\BIBdecl}{\relax}
\BIBdecl

\bibitem{pal2010nested}
P.~Pal and P.~P. Vaidyanathan, ``Nested arrays: A novel approach to array
  processing with enhanced degrees of freedom,'' \emph{IEEE Trans. on Signal
  Processing}, vol.~58, no.~8, pp. 4167--4181, 2010.

\bibitem{vaidyanathan2011sparsesamplers}
P.~P. Vaidyanathan and P.~Pal, ``Sparse sensing with co-prime samplers and
  arrays,'' \emph{IEEE Trans. on Signal Processing}, vol.~59, no.~2, pp.
  573--586, Feb 2011.

\bibitem{haghighatshoar2018low}
S.~Haghighatshoar and G.~Caire, ``Low-complexity massive mimo subspace
  estimation and tracking from low-dimensional projections,'' \emph{IEEE Trans.
  on Signal Processing}, vol.~66, no.~7, pp. 1832--1844, 2018.

\bibitem{abramovich1998positive}
Y.~I. Abramovich, D.~A. Gray, A.~Y. Gorokhov, and N.~K. Spencer,
  ``Positive-definite {Toeplitz} completion in {DOA} estimation for nonuniform
  linear antenna arrays. {I.} fully augmentable arrays,'' \emph{IEEE Trans. on
  Signal Processing}, vol.~46, no.~9, pp. 2458--2471, Sep 1998.

\bibitem{wang2017}
M.~{Wang} and A.~{Nehorai}, ``Coarrays, {MUSIC}, and the {Cramér–Rao}
  bound,'' \emph{IEEE Trans. on Signal Processing}, vol.~65, no.~4, pp.
  933--946, Feb. 2017.

\bibitem{koochakzadeh2016cramerrao}
A.~Koochakzadeh and P.~Pal, ``Cram\'{e}r-{R}ao bounds for underdetermined
  source localization,'' \emph{IEEE Signal Processing Letters}, vol.~23, no.~7,
  pp. 919--923, July 2016.

\bibitem{liu2017cramerrao}
C.-L. Liu and P.~P. Vaidyanathan, ``{C}ram\'{e}r-{R}ao bounds for coprime and
  other sparse arrays, which find more sources than sensors,'' \emph{Digital
  Signal Processing}, vol.~61, pp. 43 -- 61, Feb 2017.

\bibitem{shahsavari2021fundamental}
S.~Shahsavari, J.~Millhiser, and P.~Pal, ``Fundamental trade-offs in noisy
  super-resolution with synthetic apertures,'' in \emph{ICASSP 2021-2021 IEEE
  International Conference on Acoustics, Speech and Signal Processing
  (ICASSP)}.\hskip 1em plus 0.5em minus 0.4em\relax IEEE, 2021, pp. 4620--4624.

\bibitem{schmidt1986multiple}
R.~Schmidt, ``Multiple emitter location and signal parameter estimation,''
  \emph{IEEE trans. on antennas and prop.}, vol.~34, no.~3, pp. 276--280, 1986.

\bibitem{roy1989esprit}
R.~Roy and T.~Kailath, ``Esprit-estimation of signal parameters via rotational
  invariance techniques,'' \emph{IEEE Trans. on acoustics, speech, and signal
  processing}, vol.~37, no.~7, pp. 984--995, 1989.

\bibitem{hua1990matrix}
Y.~Hua and T.~K. Sarkar, ``Matrix pencil method for estimating parameters of
  exponentially damped/undamped sinusoids in noise,'' \emph{IEEE Trans. on
  Acoustics, Speech, and Signal Processing}, vol.~38, no.~5, pp. 814--824,
  1990.

\bibitem{steinwandt2017performance}
J.~Steinwandt, F.~Roemer, and M.~Haardt, ``Performance analysis of esprit-type
  algorithms for co-array structures,'' in \emph{2017 IEEE 7th International
  Workshop on Computational Advances in Multi-Sensor Adaptive Processing
  (CAMSAP)}.\hskip 1em plus 0.5em minus 0.4em\relax IEEE, 2017, pp. 1--5.

\bibitem{Vaccaro93}
F.~Li, H.~Liu, and R.~Vaccaro, ``Performance analysis for doa estimation
  algorithms: unification, simplification, and observations,'' \emph{IEEE
  Trans. on Aero. and Electronic Systems}, vol.~29, no.~4, pp. 1170--1184,
  1993.

\bibitem{sun20214d}
S.~Sun and Y.~D. Zhang, ``{4D} automotive radar sensing for autonomous
  vehicles: A sparsity-oriented approach,'' \emph{IEEE Journal of Selected
  Topics in Signal Processing}, vol.~15, no.~4, pp. 879--891, 2021.

\bibitem{NehoraiEldar}
Z.~Tan, Y.~C. Eldar, and A.~Nehorai, ``Direction of arrival estimation using
  co-prime arrays: A super resolution viewpoint,'' \emph{IEEE Trans. on Signal
  Processing}, vol.~62, no.~21, pp. 5565--5576, 2014.

\bibitem{candes2013super}
E.~J. Cand{\`e}s and C.~Fernandez-Granda, ``Super-resolution from noisy data,''
  \emph{Journal of Fourier Analysis and Applications}, vol.~19, no.~6, pp.
  1229--1254, 2013.

\bibitem{candes2014towards}
------, ``Towards a mathematical theory of super-resolution,'' \emph{Comm. on
  pure and applied Mathematics}, vol.~67, no.~6, pp. 906--956, 2014.

\bibitem{qiao2019guaranteed}
H.~Qiao and P.~Pal, ``Guaranteed localization of more sources than sensors with
  finite snapshots in multiple measurement vector models using difference
  co-arrays,'' \emph{IEEE Trans. on Signal Processing}, vol.~67, no.~22, pp.
  5715--5729, 2019.

\bibitem{zhou2018direction}
C.~Zhou, Y.~Gu, X.~Fan, Z.~Shi, G.~Mao, and Y.~D. Zhang, ``Direction-of-arrival
  estimation for coprime array via virtual array interpolation,'' \emph{IEEE
  Trans. on Signal Processing}, vol.~66, no.~22, pp. 5956--5971, 2018.

\bibitem{WeilinEsprit}
W.~Li, W.~Liao, and A.~Fannjiang, ``Super-resolution limit of the esprit
  algorithm,'' 2019.

\bibitem{li2021stable}
W.~Li and W.~Liao, ``Stable super-resolution limit and smallest singular value
  of restricted fourier matrices,'' \emph{Applied and Computational Harmonic
  Analysis}, vol.~51, pp. 118--156, 2021.

\bibitem{liao2016music}
W.~Liao and A.~Fannjiang, ``Music for single-snapshot spectral estimation:
  Stability and super-resolution,'' \emph{Applied and Computational Harmonic
  Analysis}, vol.~40, no.~1, pp. 33--67, 2016.

\bibitem{liao2015music}
W.~Liao, ``Music for multidimensional spectral estimation: stability and
  super-resolution,'' \emph{IEEE trans. on signal processing}, vol.~63, no.~23,
  pp. 6395--6406, 2015.

\bibitem{moitra2015super}
A.~Moitra, ``Super-resolution, extremal functions and the condition number of
  vandermonde matrices,'' in \emph{Proceedings of the 47th annual ACM symposium
  on Theory of computing}.\hskip 1em plus 0.5em minus 0.4em\relax ACM, 2015,
  pp. 821--830.

\bibitem{WeilinMulti}
W.~Li, Z.~Zhu, W.~Gao, and W.~Liao, ``Stability and super-resolution of music
  and esprit for multi-snapshot spectral estimation,'' \emph{IEEE Trans. on
  Signal Processing}, vol.~70, pp. 4555--4570, 2022.

\bibitem{Chunlin}
C.-L. Liu and P.~P. Vaidyanathan, ``Remarks on the spatial smoothing step in
  coarray music,'' \emph{IEEE Signal Processing Letters}, vol.~22, no.~9, pp.
  1438--1442, 2015.

\bibitem{Musco}
Y.~C. Eldar, J.~Li, C.~Musco, and C.~Musco, ``Sample efficient toeplitz
  covariance estimation,'' in \emph{Proceedings of the 14th Annual ACM-SIAM
  Symposium on Discrete Algorithms}.\hskip 1em plus 0.5em minus 0.4em\relax
  SIAM, 2020, pp. 378--397.

\bibitem{zygmund2002trigonometric}
A.~Zygmund, \emph{Trigonometric series}.\hskip 1em plus 0.5em minus 0.4em\relax
  ~Cambridge university press, 2002.

\bibitem{gray2006toeplitz}
R.~M. Gray \emph{et~al.}, ``Toeplitz and circulant matrices: A review,''
  \emph{Foundations and Trends{\textregistered} in Communications and
  Information Theory}, vol.~2, no.~3, pp. 155--239, 2006.

\bibitem{Harmonic}
R.~O'Donnell, ``{Lecture 1: Asymptotics},'' \emph{A Theorist’s Toolkit},
  2016.

\bibitem{batenkov2020conditioning}
D.~Batenkov, L.~Demanet, G.~Goldman, and Y.~Yomdin, ``Conditioning of partial
  nonuniform fourier matrices with clustered nodes,'' \emph{SIAM Journal on
  Matrix Analysis and Applications}, vol.~41, no.~1, pp. 199--220, 2020.

\bibitem{aubel2019vandermonde}
C.~Aubel and H.~B{\"o}lcskei, ``Vandermonde matrices with nodes in the unit
  disk and the large sieve,'' \emph{Applied and Computational Harmonic
  Analysis}, vol.~47, no.~1, pp. 53--86, 2019.

\bibitem{hanson1971bound}
D.~L. Hanson and F.~T. Wright, ``A bound on tail probabilities for quadratic
  forms in independent random variables,'' \emph{The Annals of Mathematical
  Statistics}, vol.~42, no.~3, pp. 1079--1083, 1971.

\bibitem{rudelson2013hanson}
M.~Rudelson, R.~Vershynin \emph{et~al.}, ``Hanson-wright inequality and
  sub-gaussian concentration,'' \emph{Electronic Communications in
  Probability}, vol.~18, 2013.

\bibitem{stewart1990matrix}
G.~W. Stewart, ``Matrix perturbation theory,'' 1990.

\bibitem{wedin1972perturbation}
P.-{\AA}. Wedin, ``Perturbation bounds in connection with singular value
  decomposition,'' \emph{BIT Numerical Mathematics}, vol.~12, no.~1, pp.
  99--111, 1972.

\bibitem{hansen1987truncatedsvd}
P.~C. Hansen, ``The truncatedsvd as a method for regularization,'' \emph{BIT
  Numerical Mathematics}, vol.~27, no.~4, pp. 534--553, 1987.

\end{thebibliography}
\end{document}